\DeclareMathOperator*{\argmin}{argmin}
\newcommand\ineqa{\stackrel{\mathclap{\normalfont\mbox{(a)}}}{\leq}}
\newcommand\ineqb{\stackrel{\mathclap{\normalfont\mbox{(b)}}}{\leq}}
\newcommand\ineqaa{\stackrel{\mathclap{\normalfont\mbox{(a)}}}{\geq}}
\newcommand\ineqbb{\stackrel{\mathclap{\normalfont\mbox{(b)}}}{\geq}}
\theoremstyle{plain}
\theoremstyle{definition}
\newtheorem{assum}{Assumption}
\newcommand{\remove}[1]{}
\begin{document}

\title{Caching Contents with Varying Popularity using Restless Bandits\thanks{This work was supported by Centre for Network Intelligence, Indian Institute of Science (IISc), a CISCO CSR initiative.}}


\author{Pavamana K J \and Chandramani Singh}
\institute{Department of Electronic Systems Engineering, Indian Institute of Science, Bengaluru, India \\
\email{pavamanakj@gmail.com,chandra@iisc.ac.in}}


\maketitle
\begin{abstract}
We study content caching in a wireless network in which the users are connected through a base station that is equipped with a finite capacity cache. We assume a fixed set of contents whose popularity vary with time. Users' requests for the contents depend on their instantaneous popularity levels. Proactively caching contents at the base station incurs a cost but not having requested contents at the base station also incurs a cost. We propose to proactively cache contents at the base station so as to minimize content missing and caching costs. We formulate the problem as a discounted cost Markov decision problem that is a restless multi-armed bandit problem. We provide conditions under which the problem is indexable and also propose a novel approach to manoeuvre a few parameters to render the problem indexable. We demonstrate efficacy of the Whittle index policy via numerical evaluation. 

\remove{
Mobile networks are experiencing prodigious in-
crease in data volume and user density , which exerts a great burden on mobile core networks and backhaul links. An efficient technique to lessen this problem is to use caching i.e. to bring the data closer to the users by making use of the caches of edge network nodes, such as fixed or mobile access points and even user devices. The performance of a caching depends on contents that are cached. In this paper,
we examine the problem of content caching at the wireless edge(i.e. base stations) to minimize the discounted cost incurred over infinite horizon.
We formulate this problem as a restless bandit problem, which is  hard to solve. We begin by showing an optimal policy is of threshold type. Using these structural results, we prove the indexability of the problem, and use Whittle index policy to minimize the discounted cost. Numerical results show that Whittle index policy is close to optimal policy.
}

\keywords{caching \and restless bandits \and threshold policy \and Whittle index}
\end{abstract}

\section{Introduction}

The exponential growth of intelligent devices and mobile applications poses a significant challenge to Internet backhaul, as it struggles to cope with the surge in traffic. According to Cisco's annual Internet report 2023~\cite{Cisco}, approximately two-thirds of the world's population will have access to Internet by 2023, and the number of devices connected to IP networks will exceed three times the global population. This extensive user base will generate a high demand for multimedia content, such as videos and music. However, this increased traffic is often due to repeated transmissions of popular content, leading to an unnecessary burden on the network. The resulting influx of content requests has adverse effects on latency, power consumption, and service quality.


To address these challenges, proactive content caching at the periphery of mobile networks has emerged as a promising solution. By implementing caches at base stations, it becomes possible to pre-store requested content in advance. As a result, content requests can be efficiently served from these local caches instead of remote servers, benefiting both users and network operators. Users experience reduced latency and improved quality of experience when accessing content from intermediary base stations. For network operators, caching content at the network edge significantly reduces network overhead, particularly in cases where multiple users request the same content, such as popular videos and live sports streams.


Notwithstanding the widespread benefits of including
content caching abilities in the networks, there are also several challenges in deploying the caching nodes. First of all, the size of a cache is constrained and caching contents
incurs a cost as well. So, it is not viable to store each and every content that
can possibly be requested by the user in the cache. This calls for efficient strategies to determine the contents that should be stored in the cache. 

In this work, we aim at minimizing the discounted total cost incurred in delivering contents to end users, which consists of content missing and caching costs. We consider a fixed set of contents with varying popularity and a single cache, and design
policies that decide which contents should be cached
so as to minimize the discounted total cost while simultaneously satisfying caching capacity constraint of the base station.

The problem at hand is framed as a Markov decision process (MDP) \cite{puterman2014markov}, resembling a restless multiarmed bandit (RMAB) scenario \cite{Whittle1988restless}. Although value iteration \cite{puterman2014markov,bertsekas1995dynamic} theoretically solves RMAB, it is plagued by the curse of dimensionality and provides limited solution insights. Therefore, it is advantageous to explore less intricate approaches and assess their efficacy. An esteemed strategy for RMAB problems is the Whittle index policy \cite{Whittle1988restless}. This Whittle index policy has been widely employed in the literature and has proven highly effective in practical applications~\cite{glazebrook2002index, glazebrook2006some, ansell2003whittle}.
Whittle~\cite{Whittle1988restless} demonstrated the optimality of index-based policies for the Lagrangian relaxation of the restless bandit problem, introducing the concept of the Whittle index as a useful heuristic for restless bandit problems. Hence, we suggest employing this policy to address the task of optimizing caching efficiently.

\subsection{Related Work}

\subsubsection{Content Caching}
There are two types of caching policies, proactive or reactive. Under a reactive policy, a content can be cached upon the user's request. When a user requests a specific content, the system first checks if the content is available in the local cache. If the content is found in the cache, it is delivered to the user directly from the cache. However, if the content is not present in the cache, the system initiates a process to fetch the content progressively from the server. Li et al.~\cite{li2016popularity} proposed a reactive caching algorithm PopCaching that uses popularity evolution of the contents in determining  which contents to cache.

In proactive caching,  popularity prediction algorithms are used to predict user demand and
to decide which contents are cached and which are evicted.  A. Sadeghi et al. ~\cite{sadeghi2017optimal} proposed an
intelligent proactive caching scheme to cache fixed collection
of contents in an online social network to reduce the energy
expenditure in downloading the contents. Gao et al.~\cite{gao2020design} proposed a dynamic probabilistic caching
for a scenario where contents popularity 
vary with time. N. Abani et al. ~\cite{abani2017proactive}  designed a proactive caching policy that relies on  predictability of the mobility patterns of mobiles to  predict a mobile
device’s next location and to decide which caching nodes should cache which contents. S. Traverso et al. ~\cite{traverso2013temporal}introduced a novel traffic model known as the Shot Noise Model (SNM). This parsimonious model effectively captures the dynamics of content popularity while also accounting for the observed temporal locality present in actual traffic patterns. ElAzzouni et al.~\cite{elazzouni2020predictive} studied the impact of predictive caching on content delivery latency in wireless networks. They establish a predictive multicast and caching concept in which base stations (BSs) in wireless cells proactively multicast popular content for caching and local access by end users.
\subsubsection{Restless Multi-armed Bandit Problems} 
In a restless multi-armed bandit (RMAB) problem, a decision maker must select a subset of ~$M$ arms from $K$ total arms to activate at any given time. The controller has knowledge of the states and costs associated with each arm and aims to minimize the discounted or time-average cost. The state of an arm evolves stochastically based on transition probabilities that depend on whether the bandit is active. Solving an RMAB problem through dynamic programming is computationally challenging, even for moderately sized problems. Whittle~\cite{Whittle1988restless} proposed a heuristic solution known as the Whittle index policy, which addresses a relaxed version of the RMAB problem where ~$M$ arms are only activated on average. This policy calculates the Whittle indices for each arm and activates the ~$M$ arms with the highest indices at each decision epoch. However, determining the Whittle indices for an arm requires satisfying a certain indexability condition, which can be generally difficult to verify.


G. Xiong et al. ~\cite{xiong2022model} have formulated a content caching problem as a RMAB problem with the objective being minimizing the average content serving latency. They established the indexability of the problem and used the Whittle index policy to minimize the average latency.

There are very few works on RMABs with switching costs, e.g., costs associated with switching active arms. J. L. Ny et al. \cite{1656445} considered a RMAB problem with switching costs, but they allow only one bandit to be  active at any time. Incorporating switching costs in RMAB problems makes the states of the bandits multidimensional. This renders
calculation of the Whittle indices much more complex. The literature on multidimensional RMAB is scarce. The main difficulty lies in establishing indexability, i.e., in ordering the states in a multidimensional space. Notable instances are \cite{aalto2015Whittle,anand2018Whittle} and \cite{duran2022Whittle}] in which the authors have derived Whittle indices. But none of them have considered switching cost. We pose the content caching problem  as a RMAB problem with switching costs ( it is called as  caching cost in the context of caching problem) and develop the simple Whittle index policy.

\paragraph*{Organisation}
The rest of the paper is organised as follows. In Section~\ref{sec: System Model }, we present the system model for the proactive content caching problem and formulate the problem as a RMAB. In Section \ref{sec:Whittle-index}, we show that each single arm MDP has a threshold policy as the optimal policy and is indexable. In Section~\ref{sec:nonindexable}, we manoeuvre a few cost parameters to render the modified MDP indexable in a few special cases in which the original MDP is nonindexable. In Section~\ref{sec: Numerical Results}, we show efficacy of the Whittle index policy via numerical evaluation. Finally, we outline future directions in Section~\ref{sec:conclusion}. 

\section{System Model and Caching Problem}
\label{sec: System Model }
In this section, we first present the system model and then pose the optimal caching problem as a discounted cost Markov decision problem.

\subsection{System Model}
\label{sec:system-model}

We consider a wireless network where the users are connected to a single base station~(BS) which in turn is connected to content servers via the core network.
The content providers have a set of $K$ contents, ${\cal C} = \{1,2,\cdots,K\}$ which are of equal size, at the servers. 
The BS has a {\it cache} where it can store up to $M$ contents. We assume a slotted system. Caching decisions are taken at the slot boundaries. We use $a(t) = (a_i(t), i \in {\cal C})$ to denote the caching status of various contents at the beginning of slot $t$; $a_i(t) = 1$ if Content $i$ is cached and $a_i(t) = 0$ otherwise. We let ${\cal A}$ denote the set of feasible status vectors;
\[
{\cal A} = \left\{a \in \{0,1\}^K: \sum_{i \in {\cal C}} a_i \leq M\right\}.
\]

\subsubsection{Content Popularity} We assume that the contents' popularity is reflected in the numbers of requests in a slot and varies over time.
For any content, its popularity evolution may depend on whether it is stored in the BS' cache or not. 
We assume that for any content, say for Content $i$, given its caching status $a_i$, the numbers of requests in successive slots evolve as a discrete time Markov chain as shown in Figure \ref{fig: Transition}.\footnote{There are several instances of content popularity being modelled as Markov chains, e.g., see~\cite{sadeghi2017optimal, sadeghi2019deep, wu2019dynamic}.}

\begin{figure}[!t]
    \centering
    \includegraphics[width=2.5in,height=1.5in]{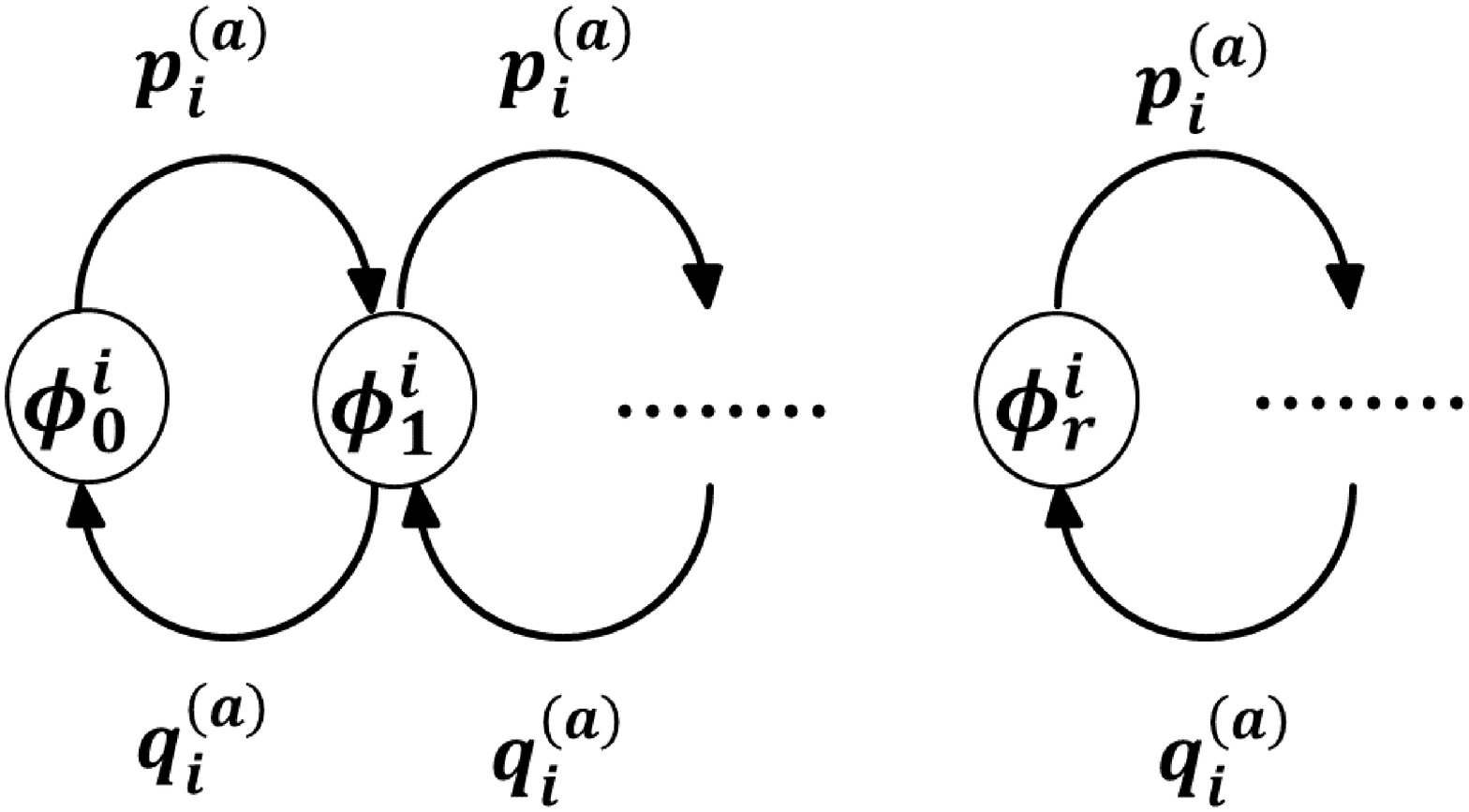}\hfill
    \caption{Popularity evolution of a content. Given its caching status $a \in \{0,1\}$, the average number of requests per slot vary in accordance with these transition probabilities. For clarity, self-loops are not shown.}
    \label{fig: Transition}
\end{figure}

In Figure \ref{fig: Transition}, numbers of requests $\phi^i _{r} \in \mathbb{Z}_+$ for $r = 1,2,\cdots$ and it is an increasing sequence. We do not show self loops for clarity. We also make the following assumption.
\begin{assum}
\label{assum:popularity-stoch-order}
For all $i \in {\cal C}$, $p^{(1)}_i \geq p^{(0)}_i$ and $q^{(1)}_i \leq q^{(0)}_i$. 
\end{assum}
Assumption~\ref{assum:popularity-stoch-order} suggests that, statistically, a content's popularity grows more if it is cached. We need this assumption to establish that optimal caching policy is a {\it threshold policy}.

\subsubsection{Costs} 
We consider the following costs. 
\paragraph*{Content missing cost} If a content, say Content $i$, is not cached at the beginning of a slot and is requested $\phi_r$ times in that slot, a cost $C_i(\phi_r)$ is incurred. For brevity  we write this cost as $C_i(r)$ with a slight abuse of notation. Naturally, the functions $C_i: \mathbb{Z}_+ \to \mathbb{R}_+$ are non-decreasing.
We also make the following assumption.
\begin{assum}
\label{assum:miss-hit-cost-concave}
For all $i \in {\cal C}$,
\begin{enumerate}
    \item $ C_i:\mathbb{Z}_+ \to \mathbb{R}_+$ are non-decreasing,
    \item $ C_i:\mathbb{Z}_+ \to \mathbb{R}_+$ are concave, i.e.,
    \end{enumerate}
\begin{equation}
\label{eqn:miss-hit-cost-concave}
C_i(r+1) - C_i(r)  \leq C_i(r) - C_i(r-1) \ \forall r \geq 1.
\end{equation}
\end{assum}


\paragraph*{Caching cost} At the start of each slot, we have the ability to adjust the caching status of the contents based on their requests in the previous slot. Specifically, we can {\it proactively} cache contents that are currently not cached, while removing other contents to ensure compliance with the cache capacity restrictions. Let's use $d$ to represent the cost associated with fetching content from its server and caching it.

We can express the total expected cost in a slot as a function of the cache status in this and the previous slots, say $a$ and $\bar{a}$, respectively, and the request vector in the previous slot, say $\bar{r}$.
Let $c(\bar{a},\bar{r},a)$ denote this cost. Clearly,
$c(\bar{a},\bar{r},a) = \sum_{i \in {\cal C}}c_i(\bar{a}_i,\bar{r}_i,a_i)$ where

\begin{align}
\label{eqn:single-stage-cost1}
\hspace{-0.15in} c_i(\bar{a}_i,\bar{r}_i,a_i) = & da_i(1-\bar{a}_i) + (1-a_i)\bigg(p^{a_i}C_i(\bar{r}_i+1) +  \nonumber \\
     &  q^{a_i}C_i((\bar{r}_i-1)^+) + (1 -p^{a_i} - q^{a_i})C_i(\bar{r}_i)\bigg). 
\end{align}

We define operators $T^u, u \in \{0,1\}$  for parsimonious presentation. For any function $g: \mathbb{Z}_+ \to \mathbb{R}_+$, for all $r \in \mathbb{Z}_+$,
\[
T^ug(r) = p^u g(r+1) + q^u g((r-1)^+) + (1 -p^u - q^u)g(r).  
\]
We can rewrite~\eqref{eqn:single-stage-cost1} in terms of $T^u$s; 
\begin{equation}
\label{eqn:single-stage-cost2}
c_i(\bar{a}_i,\bar{r}_i,a_i) = da_i(1-\bar{a}_i) + (1-a_i)\bigg(T^{a_i}C_i(\bar{r}_i)\bigg). 
\end{equation}

\subsection{Optimal Caching Problem}
\label{sec:opt-caching}


Our goal is to determine caching decisions that minimize the long-term expected discount cost. To be more precise, when provided with the initial request vector $r$ and cache state $a$, we aim to solve the following problem:
\begin{align} 
\text{Minimize } & \ \mathbb{E}\left[\sum_{t=1}^{\infty}\beta^t c(a(t-1),r(t-1),a(t))\Bigg\vert \substack{a(0) =a, \\ r(0)=r}\right] \label{eqn:objective}\\
\text{subject to } & \ a(t) \in {\cal A} \ \forall \ t \geq 1. \nonumber  
\end{align}
The performance measure~\eqref{eqn:objective} is meaningful when the future costs are less important.\footnote{One can as well consider minimizing expected value of $\sum_{t=1}^{\infty}\sum_{i \in {\cal C}}\beta_i^t c_i(a_i(t-1),r_i(t-1),a_i(t))$. This would model the scenario where the contents have geometrically distributed lifetimes with parameters $\beta_i$s. Our RMAB-based solution continues to apply in this case.} 
\subsubsection{Markov Decision Problem} We formulate the optimal caching problem as a discounted cost Markov decision problem. The slot boundaries are the decision epochs. The state of the system at decision epoch $t$ is given by the tuple $x(t) \coloneqq (a(t-1),r(t-1))$.  We consider $a(t)$ to be the action at decision epoch $t$. Clearly, the state space is ${\cal A} \times \mathbb{Z}_+^K $, and the action space is ${\cal A}$.
From the description of the system in  Section~\ref{sec:system-model}, given state $x(t) = (\bar{a},\bar{r})$ and action $a(t) = a$, the state at decision epoch $t+1$ is $x(t+1) = (a,r)$  where
\begin{equation}
\label{eqn:state-transition}
r_i = \begin{cases}
\bar{r}_i + 1 & \text{w.p. } p^{\bar{a}_i}, \\
\bar{r}_i - 1 & \text{w.p. } q^{\bar{a}_i},\\
\bar{r}_i & \text{w.p. } 1-p^{\bar{a}_i}-q^{\bar{a}_i}.
\end{cases}
\end{equation}
For a state action pair $((\bar{a},\bar{r}),a)$, the expected single state cost is given by $c((\bar{a},\bar{r}),a)$ defined in the previous section~(see~\eqref{eqn:single-stage-cost2}). A policy $\pi$ is a sequence of mappings $\{u^\pi_t, t = 1,2,\cdots\}$ where $u^\pi_t: {\cal A} \times \mathbb{Z}_+^K \to {\cal A}$. The cost of a policy $\pi$ for an initial state $(r,a)$ is 
\begin{equation*}
V^{\pi}(a,r) \coloneqq \mathbb{E}\left[\sum_{t=1}^{\infty}\beta^t c(x(t),u^\pi_t(x(t)))\Big\vert x(0) =(a,r)\right]    
\end{equation*}
Let $\Pi$ be the set of all policies. Then the optimal caching problem is $\min_{\pi \in \Pi}V^{\pi}(a,r)$. 

\begin{definition}({\it Stationary Policies})
A policy $\pi = \{u^\pi_t, t = 1,2,\cdots\}$ is called stationary if $u^\pi_t$ are identical, say $u$, for all $t$. For brevity, we refer to such a policy as the stationary policy $u$. Following~\cite[Vol 2, Chapter 1]{bertsekas1995dynamic}, the content caching problem assumes an optimal stationary policy.   
\end{definition}

\paragraph*{Restless multi-armed bandit formulation}
The Markov decision problem described above presents a challenge due to its high dimensionality. However, we can make the following observations:
\begin{enumerate}
    \item The evolution of the popularities of the contents is independent when considering their caching statuses. Their popularity changes are connected solely through the caching actions, specifically the capacity constraint of the cache.
    \item The total cost can be divided into individual costs associated with each content. 
\end{enumerate}
We thus see that the optimal caching problem is an instance of the {\it restless multi-armed bandit problem}~(RMAB) with each arm representing content. We show in Section \ref{sec:Whittle-index} that this problem is {\it indexable}. This allows us to develop a Whittle index policy for the joint caching problem.

By recognizing the similarities between the optimal caching problem and the {\it restless multi-armed bandit problem}~(RMAB), where each arm corresponds to a content, we can conclude that the optimal caching problem can be framed as an RMAB instance. In Section~\ref{sec:Whittle-index}, we establish that this problem is {\it indexable}, enabling us to devise a Whittle index policy to tackle the joint caching problem efficiently.

\begin{remark}
    In practice the popularity evolution could be unknown. The authors in \cite{xiong2022model, avrachenkov2022whittle,fu2019towards, robledo2022qwi} have proposed reinforcement learning~(RL) based approaches to learn Whittle indices in case of Markov chains with unknown dynamics. But these works consider only one-dimensional Markov chains. Thanks to the switching costs, we have a two-dimensional Markov chain at our disposal which renders  convergence of the RL algorithms and calculation of the Whittle indices much
    harder. This constitutes our future work.
\end{remark}
\section{Whittle Index Policy}
\label{sec:Whittle-index}
We outline our approach here. We first solve certain caching problems associated with each of the  contents. We argue that these problems are indexable. Under indexability, the solution to the caching problem corresponding to a content yields Whittle indices for all the states of this content. The Whittle index  measures how rewarding it is to cache that content at that particular state. The Whittle index policy  chooses those $M$ arms whose current states have the largest Whittle indices and, among these caches, those with positive Whittle indices.

\subsection{Single Content Caching Problem}
\label{sec:single-content-problem}
We consider a Markov decision problem associated with Content $i$. Its state space is $\{0,1\} \times \mathbb{Z}_+$ and its action space is $\{0,1\}$. Given $x_i(t) = (\bar{a}_i,\bar{r}_i)$ and action $a_i(t) = a_i$, the state evolves as described in Section \ref{sec: System Model } (see~\eqref{eqn:state-transition}).
The expected single-stage cost is 
\begin{equation}
\label{eqn:single-stage-cost3}
c_{i,\lambda}(\bar{a}_i,\bar{r}_i,a_i) = \lambda a_i + da_i(1-\bar{a}_i) + (1-a_i)T^{a_i}C_i(r_i).
\end{equation}
Observe that a constant penalty $\lambda$
is incurred in each slot in which Content $i$ is stored in the cache.
Here a policy $\pi$ is a sequence $\{u^\pi_t, t = 1,2,\cdots\}$ where $u^\pi_t: \{0,1\} \times \mathbb{Z}_+ \to \{0,1\}$. Given initial state $(a_i,r_i)$ the problem minimizes 
\begin{equation*}
V^{\pi}_{i,\lambda}(a_i,r_i) \coloneqq  \mathbb{E}\left[\sum_{t=1}^{\infty}\beta^t c_{i,\lambda}(x_i(t),u^\pi_t(x_i(t)))\Big\vert x_i(0) =(a_i,r_i)\right] 
\end{equation*}
over all the policies to yield the optimal cost function $V_{\lambda}(a_i,r_i) \coloneqq \min_{\pi}V^{\pi}_{i,\lambda}(a_i,r_i)$. We analyze this problem below. However, we omit the index $i$ for brevity. 
The single content caching problem also  assumes an optimal stationary policy.   
Moreover, following~\cite[Vol 2, Chapter 1]{bertsekas1995dynamic}, the optimal cost function $V_{\lambda}(\cdot,\cdot)$ satisfies the following Bellman's equation.
\begin{equation}
\label{eqn:val-iteration-v-NonVI}
V_{\lambda}(a,r) = \min_{a'\in\{0,1\}} Q_{\lambda}(a,r,a'),
\end{equation}
where
\begin{equation}
\label{eqn:Q-iteration-v-NonVI}
Q_{\lambda}(a,r,a') \coloneqq c_{\lambda}(a,r,a')+\beta T^{a'}V_{\lambda}(a',r).
\end{equation}

Here $T^{a'}V_{\lambda}(a',r)$ is defined 
by applying operator $T^{a'}$ to the function $V_{\lambda}(a',\cdot):\mathbb{Z}_+ \to \mathbb{R}_+$. The set ${\cal P}(\lambda)$ of the states in which action $0$ is optimal, referred to as the {\it passive set}, is given by
\begin{equation*}
\label{eqn:passve-set}
{\cal P}(\lambda) = \{(a,r): Q_{\lambda}(a,r,0) \leq Q_{\lambda}(a,r,1) \}.
\end{equation*}
The complement of ${\cal P}(\lambda)$ is referred to as the {\it active set}.
Obviously, the penalty $\lambda$
influences the partition of the passive and
active sets.
\begin{definition}({\it Indexability})
An arm is called indexable if the passive set ${\cal P}(\lambda)$ of
the corresponding content caching problem monotonically increases from $\emptyset$ to the whole state space $\{0,1\} \times \mathbb{Z}_+$ as the penalty $\lambda$ increases from $-\infty$ to $\infty$. An RMAB is called indexable if it's every arm is indexable~\cite{Whittle1988restless}.
\end{definition}
The minimum penalty needed to move a state from the active set to the passive
set measures how attractive this state is. This motivates the following definition of the Whittle index. 
\begin{definition}({\it Whittle index})
If an arm is indexable, its Whittle index $w(a,r)$ associated with state $(a,r)$ is the minimum penalty that moves this state from the active set to the passive set. 
Equivalently, 
\begin{equation}
\label{eqn:Whittle-index}
w(a,r) = \min \{\lambda: (a,r) \in {\cal P}(\lambda)\}.
\end{equation}
\end{definition}
Before we establish the indexability of the arm~(content) under consideration, we define threshold policies and show that 
the optimal policy for the single content caching problem is a threshold policy.
\begin{definition}({\it Threshold Policies})
A stationary policy $u$ is called a threshold policy if it is of the form
\begin{equation*}
u(a,r) = \begin{cases}
            0 \text{ if } r \leq r^a, \\
            1 \text{ otherwise,}
            \end{cases}
            \end{equation*}
for some $r^a \in \mathbb{Z}_+, a = 0,1$. In the following, we refer to such a policy as the threshold policy $(r^0,r^1)$.
\end{definition}

\subsection{Optimality of a Threshold Policy}
Observe that the optimal cost function $V_\lambda(\cdot,\cdot)$ is obtained as the limit of the following {\it value iteration}~\cite[Vol 2, Chapter 1]{bertsekas1995dynamic}. For all $(a,r) \in \{0,1\} \times \mathbb{Z}_+$, $V^0_{\lambda}(a,r) = 0$, and for $n \geq 1$,
\begin{equation}
\label{eqn:val-iteration-v}
V^n_{\lambda}(a,r) = \min_{a'\in\{0,1\}} Q^n_{\lambda}(a,r,a'),
\end{equation}
where
\begin{equation}
\label{eqn:Q-iteration-v}
Q^n_{\lambda}(a,r,a') \coloneqq c_{\lambda}(a,r,a')+\beta T^{a'}V^{n-1}_{\lambda}(a',r).
\end{equation}
We start by arguing that $V_{\lambda}^n(a,r)$ is concave and increasing in $r$ for all $a$ and $n \geq 0$. But it requires the following assumption.

\begin{assum}
\label{assum:transition-prob}
\begin{equation*}
    p^0\bigg( C(3) - C(2) \bigg) - ( 2p^0 + q^0 -1 )\bigg( C(2) - C(1 )\bigg) + ( p^0 + 2 q^0 -1)\bigg(  C(1) - C(0) \bigg) \leq 0.   
\end{equation*}
\end{assum}

\begin{lemma}
\label{lem:value-fun-concave}
$V_{\lambda}^n(a,r)$ is concave and non-decreasing in $r$ for all $a$ and $n \geq 0$.
\end{lemma}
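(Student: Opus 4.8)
The plan is to establish both properties simultaneously by induction on the value-iteration index $n$, with the base case $V^0_\lambda\equiv 0$ immediate. For the inductive step I first rewrite~\eqref{eqn:val-iteration-v}--\eqref{eqn:Q-iteration-v} by noting that the immediate cost equals $T^0C(r)$ when $a'=0$ and the $r$-independent constant $\lambda+d(1-a)$ when $a'=1$, so that
\[
V^n_\lambda(a,r)=\min\big\{\,T^0C(r)+\beta\,T^0V^{n-1}_\lambda(0,r),\ \ \lambda+d(1-a)+\beta\,T^1V^{n-1}_\lambda(1,r)\,\big\}.
\]
A sum of concave (resp.\ non-decreasing) functions is concave (resp.\ non-decreasing), and the pointwise minimum of concave (resp.\ non-decreasing) functions is again concave (resp.\ non-decreasing): for the concavity claim, if $h=\min(f_1,f_2)$ with $h(r)=f_1(r)$ and $r=\theta r_1+(1-\theta)r_2$, then $h(r)=f_1(r)\ge\theta f_1(r_1)+(1-\theta)f_1(r_2)\ge\theta h(r_1)+(1-\theta)h(r_2)$. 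Hence it suffices to show that each operator $T^u$, $u\in\{0,1\}$, sends a concave non-decreasing function on $\mathbb{Z}_+$ to a concave non-decreasing function, when applied to $C$ and to $V^{n-1}_\lambda(a,\cdot)$.

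Monotonicity and interior concavity are the easy part, since $T^u$ is a stochastic average of the shifts $r\mapsto r+1,\,r-1,\,r$. For $r\ge 1$, $T^ug(r+1)-T^ug(r)=p^u\big(g(r+2)-g(r+1)\big)+q^u\big(g(r)-g(r-1)\big)+(1-p^u-q^u)\big(g(r+1)-g(r)\big)\ge 0$, and at the boundary $T^ug(1)-T^ug(0)=(1-p^u-q^u)\big(g(1)-g(0)\big)+p^u\big(g(2)-g(1)\big)\ge 0$, so $T^ug$ is non-decreasing. Writing $\Delta^2 g(k):=g(k+1)-2g(k)+g(k-1)$, for every $r\ge 2$ the clipping $(r-1)^+$ is inactive and $\Delta^2(T^ug)(r)=p^u\Delta^2 g(r+1)+q^u\Delta^2 g(r-1)+(1-p^u-q^u)\Delta^2 g(r)\le 0$ by concavity of $g$.

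The delicate point, and the one Assumption~\ref{assum:transition-prob} is designed for, is concavity at $r=1$, where the reflection $(0-1)^+=0$ destroys the averaging structure. A direct computation gives
\[
\Delta^2(T^ug)(1)=q^u\big(g(1)-g(0)\big)+(1-p^u-q^u)\,\Delta^2 g(1)+p^u\,\Delta^2 g(2).
\]
The last two terms are $\le 0$ for concave $g$, but the first is $\ge 0$ for non-decreasing $g$ (indeed for affine $g$ the expression equals $q^u\big(g(1)-g(0)\big)>0$), so concavity at $r=1$ genuinely requires an extra condition. Taking $g=C$ and $u=0$ recovers exactly the left-hand side of Assumption~\ref{assum:transition-prob}, so that assumption is precisely what makes the immediate-cost term $T^0C$ concave at $r=1$.

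The main obstacle is that the same boundary inequality is needed for the continuation terms $T^{a'}V^{n-1}_\lambda(a',\cdot)$, and it is not implied by concavity of $V^{n-1}_\lambda$ alone. I would therefore strengthen the induction hypothesis to carry, in addition to concavity and monotonicity, the boundary condition $\Delta^2\!\big(T^{a}V^{n}_\lambda(a,\cdot)\big)(1)\le 0$ for $a\in\{0,1\}$, seeded by Assumption~\ref{assum:transition-prob} through the cost. Closing this strengthened invariant across the $\min$ is the crux: for small $r$ the passive action is optimal (the threshold structure proved below), so near the boundary $V^n_\lambda(a,\cdot)$ coincides with the passive branch $T^0C+\beta\,T^0V^{n-1}_\lambda(0,\cdot)$, and one must verify that applying $T^{a}$ to this composition preserves the boundary inequality. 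Here I would exploit that $T^u$ contracts the bottom slope, $\big(T^ug\big)(1)-\big(T^ug\big)(0)\le(1-q^u)\big(g(1)-g(0)\big)$, which damps the offending positive term $q^u\big(g(1)-g(0)\big)$ and is what I expect to let the boundary invariant reproduce itself from step $n-1$ to step $n$.
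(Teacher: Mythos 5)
Your overall strategy coincides with the paper's: induction on $n$, writing $V^n_\lambda(a,\cdot)$ as the minimum of the passive branch $T^0C+\beta T^0V^{n-1}_\lambda(0,\cdot)$ and an $r$-constant plus $\beta T^1V^{n-1}_\lambda(1,\cdot)$, and closing the induction via preservation of concavity and monotonicity under $T^u$, sums, and pointwise minima. Your identification of Assumption~\ref{assum:transition-prob} as exactly the second-difference condition $\Delta^2(T^0C)(1)\le 0$ is correct, and your observation that the reflecting boundary $(r-1)^+$ obstructs concavity preservation at $r=1$ is a genuinely sharp point: the paper's induction step simply asserts that $T^{a'}V^n_\lambda(a',\cdot)$ is concave because $V^n_\lambda(a',\cdot)$ is, which runs into precisely the term $q^{a'}\bigl(V^n_\lambda(a',1)-V^n_\lambda(a',0)\bigr)\ge 0$ that you compute, and the accompanying remark that Assumption~\ref{assum:transition-prob} is needed ``merely'' for $n=1$ does not address it. On this point your scrutiny exceeds the paper's own write-up.

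However, your proposal does not constitute a proof. The crux --- showing that the strengthened invariant $\Delta^2\bigl(T^{a}V^{n}_\lambda(a,\cdot)\bigr)(1)\le 0$ propagates from step $n-1$ to step $n$ --- is left as an expectation rather than an argument: the slope-contraction inequality $(T^ug)(1)-(T^ug)(0)\le(1-q^u)\bigl(g(1)-g(0)\bigr)$ damps the offending term, but you never verify that this damping dominates the remaining contributions after the minimum with the active branch is taken, and whether it does is not obvious for general $\lambda$, $\beta$, $d$ and transition probabilities. Moreover, your appeal to ``the threshold structure proved below'' to argue that $V^n_\lambda(a,\cdot)$ coincides with the passive branch near $r=0$ is circular: Theorem~\ref{thm:threshold-policy} is derived from Lemmas~\ref{lem:value-fun-concave} and~\ref{lem:value-fun-properties}, so it cannot be invoked here unless the threshold property is folded into the same induction and established at every stage $n$, which you do not do. In short, you have correctly diagnosed where the difficulty lies, but the proposed repair is a sketch whose key step remains unproved.
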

\begin{proof}
Please refer to our extended version~\cite{j2023caching}
\end{proof}

\begin{remark}
We require Assumption~\ref{assum:transition-prob} merely to prove that $V_{\lambda}^1(a,r)$ is concave. It follows via induction that $V_{\lambda}^n(a,r), n \geq 2$ are also concave.
\end{remark}
\begin{lemma} For all $n \geq 1$,
\label{lem:value-fun-properties}
 \begin{enumerate}
\item $Q^n_{\lambda}(a,r,0)-Q^n_{\lambda}(a,r,1)$ are non-decreasing in $r$ for $a = 0,1$.
\item $V_{\lambda}^n(0,r) -  V_{\lambda}^n(1,r)$ are non-decreasing in $r$. 
\end{enumerate}
\end{lemma}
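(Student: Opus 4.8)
The plan is to prove the two assertions simultaneously by induction on $n$, taking Lemma~\ref{lem:value-fun-concave} (concavity and monotonicity of $V_\lambda^n(a,\cdot)$) as given throughout. The first move is to record a convenient form of the two $Q$-values. Since action $a'=0$ incurs no switching cost, $Q_\lambda^n(a,r,0)=T^0C(r)+\beta T^0V_\lambda^{n-1}(0,r)=T^0 f(r)$ with $f:=C+\beta V_\lambda^{n-1}(0,\cdot)$; note this is independent of $a$. On the other hand $Q_\lambda^n(a,r,1)=\lambda+d(1-a)+\beta T^1 g(r)$ with $g:=V_\lambda^{n-1}(1,\cdot)$. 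Hence
\[
Q_\lambda^n(a,r,0)-Q_\lambda^n(a,r,1)=G(r)-\lambda-d(1-a),\qquad G(r):=T^0 f(r)-\beta T^1 g(r),
\]
and because $-\lambda-d(1-a)$ is constant in $r$, Part~1 for both $a=0,1$ reduces to the single statement that $G$ is non-decreasing.

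I would then establish monotonicity of $G$ in the interior. By the induction hypothesis for Part~2 at level $n-1$, $V_\lambda^{n-1}(0,\cdot)-V_\lambda^{n-1}(1,\cdot)$ is non-decreasing; together with $C$ non-decreasing this gives, writing $\Delta h(r):=h(r+1)-h(r)$, that $e(r):=\Delta(f-\beta g)(r)=\Delta C(r)+\beta\,\Delta\!\big(V_\lambda^{n-1}(0,\cdot)-V_\lambda^{n-1}(1,\cdot)\big)(r)\ge \Delta C(r)\ge 0$. Decomposing $G=T^0(f-\beta g)+\beta(T^0g-T^1g)$ and differencing, for $r\ge 1$ one obtains $G(r+1)-G(r)$ as the sum of the non-negative drift term $p^0e(r+1)+q^0e(r-1)+(1-p^0-q^0)e(r)$ and the curvature term $\beta\big[\alpha(\Delta g(r+1)-\Delta g(r))+\gamma(\Delta g(r-1)-\Delta g(r))\big]$, where $\alpha:=p^0-p^1\le 0$ and $\gamma:=q^0-q^1\ge 0$ by Assumption~\ref{assum:popularity-stoch-order}. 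Concavity of $g$ (Lemma~\ref{lem:value-fun-concave}) makes $\Delta g$ non-increasing, so both products in the curvature term are non-negative, and $G(r+1)-G(r)\ge0$ for every $r\ge1$.

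The hard part is the boundary transition $r=0\to 1$. Because the reflecting floor $(r-1)^+$ suppresses the downward move at $r=0$, the favourable curvature contribution $\gamma(\Delta g(r-1)-\Delta g(r))$ degenerates into $-\beta\gamma\,\Delta g(0)\le 0$, so concavity alone no longer yields $G(1)-G(0)\ge0$. This is exactly the boundary pathology that forces Assumption~\ref{assum:transition-prob} in Lemma~\ref{lem:value-fun-concave}: indeed Assumption~\ref{assum:transition-prob} is precisely the statement that the second difference of $T^0C$ at $r=1$ is non-positive, i.e.\ that $T^0C$ is concave across the boundary. I would control the offending term by showing that the drift contribution $p^0e(1)+(1-p^0-q^0)e(0)$, bounded below via $e\ge\Delta C\ge0$ and the Part~2 hypothesis, dominates $\beta\gamma\,\Delta g(0)$; to close this I would invoke the concavity of $f$ across $r=0,1,2$ supplied by Lemma~\ref{lem:value-fun-concave} and, if needed, strengthen the induction to carry an explicit upper bound on the increments $\Delta V_\lambda^n(1,\cdot)$. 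I expect this single boundary inequality to be the crux of the whole lemma.

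Finally, Part~2 at level $n$ follows cleanly from Part~1 at level $n$. With $A(r):=Q_\lambda^n(a,r,0)=T^0f(r)$ (independent of $a$) and $B(r):=\lambda+\beta T^1g(r)$, we have $Q_\lambda^n(0,r,1)=B(r)+d$ and $Q_\lambda^n(1,r,1)=B(r)$, so $V_\lambda^n(0,r)=\min\{A(r),B(r)+d\}$ and $V_\lambda^n(1,r)=\min\{A(r),B(r)\}$. A three-case check on the position of $A(r)$ relative to $B(r)$ and $B(r)+d$ yields the clamping identity $V_\lambda^n(0,r)-V_\lambda^n(1,r)=\min\{d,(A(r)-B(r))^+\}=\min\{d,(G(r)-\lambda)^+\}$. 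Since $G$ is non-decreasing by Part~1, so is $(G-\lambda)^+$, and capping at $d$ preserves monotonicity; this proves Part~2 and closes the induction, the base case $n=1$ being immediate because $g=V_\lambda^0(1,\cdot)\equiv0$ annihilates both the curvature and the boundary terms.
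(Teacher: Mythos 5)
Your induction scheme is the same as the paper's (Part~1 at level $n$ gives Part~2 at level $n$, which gives Part~1 at level $n+1$), and two of your three steps are complete and correct. The reduction of Part~1 to monotonicity of the single function $G(r)=T^0\bigl(C+\beta V_\lambda^{n-1}(0,\cdot)\bigr)(r)-\beta T^1V_\lambda^{n-1}(1,\cdot)(r)$ is right, and your clamping identity $V_\lambda^n(0,r)-V_\lambda^n(1,r)=\min\{d,(G(r)-\lambda)^+\}$ is a cleaner packaging of the paper's three-way case analysis of $\min_{a'}Q^n_\lambda(0,r,a')-\min_{a'}Q^n_\lambda(1,r,a')$; both yield Part~2 from Part~1 immediately. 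Your interior computation for $r\ge 1$ (drift term controlled by the level-$(n-1)$ induction hypothesis, curvature term controlled by Assumption~\ref{assum:popularity-stoch-order} together with the concavity from Lemma~\ref{lem:value-fun-concave}) is an explicit, coordinate-wise version of what the paper does in one line by invoking stochastic dominance \cite[Lemma~4.7.2]{puterman2014markov} for the non-increasing increment function $r\mapsto V^n_\lambda(1,r+1)-V^n_\lambda(1,r)$.

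The gap is the boundary step $G(1)\ge G(0)$, which you explicitly do not prove: you only state that you ``would'' show the drift contribution dominates $\beta(q^0-q^1)\bigl(V_\lambda^{n-1}(1,1)-V_\lambda^{n-1}(1,0)\bigr)$, possibly after strengthening the induction. As it stands this is a plan, not a proof, and the inequality you need does not follow from the hypotheses you have on the table: taking $p^1=p^0$ and $q^1<q^0$ kills the favourable curvature term $(p^0-p^1)\bigl(\Delta g(1)-\Delta g(0)\bigr)$ entirely while leaving the unfavourable term $-(q^0-q^1)\Delta g(0)$ strictly negative, so you would have to bound $\Delta g(0)=V_\lambda^{n-1}(1,1)-V_\lambda^{n-1}(1,0)$ against the drift $p^0e(1)+(1-p^0-q^0)e(0)$, and no such bound is carried by your induction. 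For what it is worth, the paper's own proof dispatches this case by applying the stochastic-dominance lemma uniformly ``for any $r\in\mathbb{Z}_+$'', which at $r=0$ silently compares weights of unequal total mass ($1-q^1$ versus $1-q^0$, because the downward move is reflected at the origin) against a non-negative, non-increasing function; so the reflecting boundary is exactly where the published argument is thinnest as well. But identifying the difficulty is not the same as resolving it, and your submission leaves the lemma unproven at $r=0$.
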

\begin{proof}
Please refer to our extended version~\cite{j2023caching}
\end{proof}

The following theorem uses Lemma~\ref{lem:value-fun-concave} and~\ref{lem:value-fun-properties} to establish that there exist optimal threshold policies for the single content caching problems.
\begin{theorem}
\label{thm:threshold-policy}
For each $\lambda \in \mathbb{R}$ there exist $r^0(\lambda),r^1(\lambda) \in \mathbb{Z}_+$ such that the threshold policy $(r^0(\lambda),r^1(\lambda))$ is an optimal policy for the single content caching problem with penalty $\lambda$. Also, $r^0(\lambda) \geq r^1(\lambda)$. 
\end{theorem}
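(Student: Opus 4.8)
The plan is to read off the optimal action at each state from the sign of the one–step advantage of passivity, namely $\Delta_\lambda(a,r) := Q_\lambda(a,r,0) - Q_\lambda(a,r,1)$. By definition of the passive set, action $0$ is optimal at $(a,r)$ precisely when $\Delta_\lambda(a,r) \le 0$, i.e. when $(a,r) \in {\cal P}(\lambda)$. Hence, to exhibit a threshold policy it suffices to prove that for each fixed $a$ the map $r \mapsto \Delta_\lambda(a,r)$ is non-decreasing, so that the set $\{r : \Delta_\lambda(a,r) \le 0\}$ is of the form $\{r \le r^a(\lambda)\}$; and to obtain $r^0(\lambda) \ge r^1(\lambda)$ it then suffices to compare $\Delta_\lambda(0,\cdot)$ against $\Delta_\lambda(1,\cdot)$.

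First I would establish the monotonicity of $\Delta_\lambda(a,\cdot)$. Lemma~\ref{lem:value-fun-properties}(1) already gives that $Q^n_\lambda(a,r,0) - Q^n_\lambda(a,r,1)$ is non-decreasing in $r$ for every $n$ and each $a$. Since the single-content problem is a discounted MDP with bounded per-stage cost, value iteration converges, so $V^n_\lambda \to V_\lambda$ and hence $Q^n_\lambda \to Q_\lambda$ pointwise; as a pointwise limit of non-decreasing functions is non-decreasing, $\Delta_\lambda(a,\cdot)$ is non-decreasing. Setting $r^a(\lambda) := \max\{r \in \mathbb{Z}_+ : \Delta_\lambda(a,r) \le 0\}$ then makes action $0$ optimal exactly on $\{r \le r^a(\lambda)\}$ and action $1$ optimal otherwise, so the threshold policy $(r^0(\lambda),r^1(\lambda))$ is optimal. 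This proves the first assertion.

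For the ordering $r^0(\lambda) \ge r^1(\lambda)$ I would exploit the precise way the fetch cost $d$ enters $c_\lambda$. Substituting $a'=0$ into~\eqref{eqn:single-stage-cost3} and~\eqref{eqn:Q-iteration-v-NonVI} yields $Q_\lambda(a,r,0) = T^0 C(r) + \beta\, T^0 V_\lambda(0,r)$, which is independent of the incoming status $a$ because the passive action triggers no fetch. Substituting $a'=1$ yields $Q_\lambda(a,r,1) = \lambda + d(1-a) + \beta\, T^1 V_\lambda(1,r)$, so that $Q_\lambda(0,r,1) = Q_\lambda(1,r,1) + d$. Subtracting, the $\beta T^1 V_\lambda(1,r)$ and $T^0$ terms cancel and one obtains the identity
\[
\Delta_\lambda(0,r) = \Delta_\lambda(1,r) - d \qquad \text{for all } r .
\]
Since $d \ge 0$, this gives $\Delta_\lambda(0,r) \le \Delta_\lambda(1,r)$, hence $\{r : \Delta_\lambda(1,r) \le 0\} \subseteq \{r : \Delta_\lambda(0,r) \le 0\}$, which is exactly $r^1(\lambda) \le r^0(\lambda)$.

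Regarding difficulty: granting Lemmas~\ref{lem:value-fun-concave} and~\ref{lem:value-fun-properties}, the theorem is essentially immediate, since the genuine analytic work (concavity and the monotone-difference property of the value iterates) is carried by those lemmas. The only points needing care are the justification that monotonicity survives the value-iteration limit, and the degenerate regimes in which one action dominates at every request level (for very negative $\lambda$ caching is always optimal, so ${\cal P}(\lambda)=\emptyset$); these are handled by placing the threshold at the appropriate boundary of $\mathbb{Z}_+$. I expect no substantive obstacle beyond this bookkeeping.
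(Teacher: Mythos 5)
Your proposal is correct and follows essentially the same route as the paper's proof: the threshold structure is read off from the monotonicity in $r$ of $Q_\lambda(a,r,0)-Q_\lambda(a,r,1)$ supplied by Lemma~\ref{lem:value-fun-properties}, and the ordering $r^0(\lambda)\ge r^1(\lambda)$ comes from the same identity $Q_\lambda(0,r,0)-Q_\lambda(0,r,1)=Q_\lambda(1,r,0)-Q_\lambda(1,r,1)-d$ that the paper uses. Your explicit justification that monotonicity survives the value-iteration limit is a small point the paper glosses over, but it does not change the argument.
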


\begin{proof}
Please refer to our extended version~\cite{j2023caching}
\end{proof}

\subsection{Indexability of the RMAB}
We now exploit the existence of an optimal threshold policy to argue that the RMAB formulation of the content caching problem is indexable.

\begin{lemma} For all $n \geq 1$,
\label{lem:value-fun-lambda}
 \begin{enumerate}
\item $Q^n_{\lambda}(a,r,1)-Q^n_{\lambda}(a,r,0)$ are non-decreasing in $\lambda$ for $a = 0,1$.
\item $V_{\lambda}^n(1,r) -  V_{\lambda}^n(0,r)$ are non-decreasing in $\lambda$. 
\item $V_{\lambda}^n(a,r+1) -  V_{\lambda}^n(a,r)$ are non-decreasing in $\lambda$ for $a = 0,1$.
\end{enumerate}
\end{lemma}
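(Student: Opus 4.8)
The plan is to prove the three parts \emph{simultaneously} by induction on the value-iteration index $n$, using \eqref{eqn:val-iteration-v}--\eqref{eqn:Q-iteration-v}. The base case $n=0$ is trivial since $V^0_\lambda\equiv 0$ makes every difference vanish. Two structural features of the single-stage cost \eqref{eqn:single-stage-cost3} drive the whole argument and I would record them first: the penalty enters only through the term $\lambda a'$, so $\lambda$ affects a $Q$-value only when the action is $a'=1$; and $Q^n_\lambda(a,r,0)=T^0C(r)+\beta T^0 V^{n-1}_\lambda(0,r)$ is independent of the current caching status $a$, while $Q^n_\lambda(1,r,1)-Q^n_\lambda(0,r,1)=-d$. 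These identities are used repeatedly.

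For part~(1) I would write out $Q^n_\lambda(a,r,1)-Q^n_\lambda(a,r,0)=\lambda+d(1-a)-T^0C(r)+\beta\bigl(T^1V^{n-1}_\lambda(1,r)-T^0V^{n-1}_\lambda(0,r)\bigr)$. The first three terms contribute a slope $+1$ in $\lambda$ plus $\lambda$-independent constants, so everything reduces to showing the bracketed operator term is non-decreasing in $\lambda$. I would split it as $T^1[V^{n-1}_\lambda(1,\cdot)-V^{n-1}_\lambda(0,\cdot)](r)+(T^1-T^0)V^{n-1}_\lambda(0,\cdot)(r)$. The first summand is a non-negative combination of quantities that are non-decreasing in $\lambda$ by the induction hypothesis for part~(2), hence non-decreasing. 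The second equals $(p^1-p^0)[V^{n-1}_\lambda(0,r+1)-V^{n-1}_\lambda(0,r)]+(q^1-q^0)[V^{n-1}_\lambda(0,(r-1)^+)-V^{n-1}_\lambda(0,r)]$; here Assumption~\ref{assum:popularity-stoch-order} gives $p^1-p^0\ge0$ and $q^1-q^0\le0$, and combining these signs with the induction hypothesis for part~(3) (consecutive differences non-decreasing in $\lambda$) makes both summands non-decreasing. This cross-operator term is the \emph{crux} and the place where parts~(2) and~(3) at index $n-1$ are genuinely needed together.

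For parts~(2) and~(3) I would exploit the threshold structure already available: by Lemma~\ref{lem:value-fun-properties} the optimal action at index $n$ is non-decreasing in $r$, and the identity $Q^n_\lambda(1,r,1)-Q^n_\lambda(0,r,1)=-d$ together with Theorem~\ref{thm:threshold-policy} gives $r^0\ge r^1$, so the state $(1,r)$ is at least as prone to action~$1$ as $(0,r)$. This confines the pair of optimal actions (of $(1,r)$ and $(0,r)$ for part~(2), or of $(a,r)$ and $(a,r+1)$ for part~(3)) to three possibilities, and I would treat each. When both states take the same action, the difference either is a constant (part~(2): $0$ in the passive case since the two action-$0$ $Q$-values coincide, and $-d$ in the active case) or reduces to a non-negative $T^u$-average of consecutive differences of $V^{n-1}_\lambda$ (part~(3)), which is non-decreasing in $\lambda$ by the induction hypothesis for part~(3). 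When the two states take different actions, the difference has the form $\lambda+\text{const}+\beta\,(T^1V^{n-1}_\lambda(1,\cdot)-T^0V^{n-1}_\lambda(0,\cdot))$, whose operator term is exactly the one analysed for part~(1).

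The main obstacle is twofold. First, the cross-operator term $(T^1-T^0)V^{n-1}_\lambda(0,\cdot)$ is where Assumption~\ref{assum:popularity-stoch-order} and the consecutive-difference monotonicity must be invoked in tandem, and it is what forces all three statements to be carried through the induction together rather than separately. Second, in parts~(2) and~(3) the regime in which action~$1$ is optimal depends on $\lambda$ itself, so the difference is only shown to be non-decreasing on each of finitely many $\lambda$-sub-intervals; to conclude global monotonicity I would invoke continuity of $V^n_\lambda(a,r)$ in $\lambda$ (a pointwise minimum of finitely many functions that are continuous in $\lambda$ by the induction hypothesis), since a continuous function that is non-decreasing on each piece of a partition into intervals is non-decreasing overall.
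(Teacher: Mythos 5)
Your proof is correct and follows essentially the same route as the paper's: a simultaneous induction on the value-iteration index in which part~(1) at step $n$ is obtained from parts~(2) and~(3) at step $n-1$ via the decomposition $T^1V(1,\cdot)-T^0V(0,\cdot)=T^1[V(1,\cdot)-V(0,\cdot)]+(T^1-T^0)V(0,\cdot)$ together with Assumption~\ref{assum:popularity-stoch-order}, and parts~(2) and~(3) at step $n$ are obtained by a case analysis on the optimal actions using the threshold structure (the paper's Case~4 being infeasible by Lemma~\ref{lem:value-fun-properties}). Your explicit continuity argument for gluing the $\lambda$-sub-intervals is a point the paper leaves implicit, and the only slight imprecision is in the mixed-action case of part~(3), where the operator term is $T^1V^{n-1}_\lambda(1,r+1)-T^0V^{n-1}_\lambda(0,r)$ rather than exactly the part-(1) term, so one extra split off of $T^1V^{n-1}_\lambda(1,r+1)-T^1V^{n-1}_\lambda(1,r)$ (handled by the part-(3) induction hypothesis) is needed.
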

\begin{proof}
Please refer to our extended version~\cite{j2023caching}
\end{proof}
\begin{theorem}
\label{thm:indexability}
Under Assumptions~\ref{assum:popularity-stoch-order},~\ref{assum:miss-hit-cost-concave} and~\ref{assum:transition-prob} the content caching problem is indexable. \end{theorem}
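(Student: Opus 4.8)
The plan is to verify the definition of indexability directly. Since an RMAB is indexable exactly when each of its arms is, and Assumptions~\ref{assum:popularity-stoch-order}--\ref{assum:transition-prob} are imposed content by content, it suffices to fix a single content (dropping the index $i$) and show that its passive set $\mathcal{P}(\lambda)$ grows monotonically from $\emptyset$ to $\{0,1\}\times\mathbb{Z}_+$ as $\lambda$ runs from $-\infty$ to $+\infty$. Recalling that $(a,r)\in\mathcal{P}(\lambda)$ iff $Q_\lambda(a,r,1)-Q_\lambda(a,r,0)\ge 0$, I would split the claim into three parts: (i) monotonicity of $\mathcal{P}(\lambda)$ in $\lambda$; (ii) $\mathcal{P}(\lambda)=\emptyset$ for all sufficiently negative $\lambda$; and (iii) every state lies in $\mathcal{P}(\lambda)$ once $\lambda$ is large enough.

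For (i) I would invoke Lemma~\ref{lem:value-fun-lambda}(1), which gives that $Q^n_\lambda(a,r,1)-Q^n_\lambda(a,r,0)$ is non-decreasing in $\lambda$ for every finite $n$. Since the value iteration~\eqref{eqn:val-iteration-v}--\eqref{eqn:Q-iteration-v} is a $\beta$-contraction, $Q^n_\lambda\to Q_\lambda$ pointwise, and monotonicity in $\lambda$ is preserved in the limit; hence $Q_\lambda(a,r,1)-Q_\lambda(a,r,0)$ is non-decreasing in $\lambda$. Consequently, once this difference is non-negative at some $\lambda$ it remains so for all larger $\lambda$, i.e.\ $\lambda\le\lambda'$ implies $\mathcal{P}(\lambda)\subseteq\mathcal{P}(\lambda')$. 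Equivalently, by Theorem~\ref{thm:threshold-policy} the passive set is $\{(a,r):r\le r^a(\lambda)\}$, and combining the $\lambda$-monotonicity of Lemma~\ref{lem:value-fun-lambda}(1) with the $r$-monotonicity of Lemma~\ref{lem:value-fun-properties}(1) shows the thresholds $r^0(\lambda),r^1(\lambda)$ are non-decreasing in $\lambda$.

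Parts (ii) and (iii) I would settle with benchmark-policy bounds on $V_\lambda$. For (iii), fix $(a,r)$ and take $\lambda\ge 0$: the never-cache policy certifies $V_\lambda(0,r)\le W(0,r)$ for a $\lambda$-independent constant $W$, while non-negativity of all costs gives $V_\lambda\ge 0$; substituting into the definition~\eqref{eqn:Q-iteration-v-NonVI} yields $Q_\lambda(a,r,1)-Q_\lambda(a,r,0)\ge \lambda-B(r)$ for a constant $B(r)$ independent of $\lambda$, which is eventually positive, so $(a,r)$ enters $\mathcal{P}(\lambda)$. For (ii), take $\lambda<0$: every single-stage cost is at least $\lambda$, so $V_\lambda(a,r)\ge \beta\lambda/(1-\beta)$ uniformly, whereas the always-cache policy gives $V_\lambda(1,r)\le \beta\lambda/(1-\beta)$; feeding these into the two $Q$-values makes the $\lambda$-proportional terms cancel and leaves $Q_\lambda(a,r,1)-Q_\lambda(a,r,0)\le \lambda+d$, which is negative uniformly in $(a,r)$ once $\lambda<-d$, so $\mathcal{P}(\lambda)=\emptyset$.

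The monotonicity step (i) is essentially handed to us by Lemma~\ref{lem:value-fun-lambda}, so I expect the main obstacle to lie in the boundary analysis (ii)--(iii): one must choose the right comparison policies and control the $\lambda$-dependence of $V_\lambda(0,\cdot)$ and $V_\lambda(1,\cdot)$ carefully, so that the penalty-proportional parts of the two $Q$-values cancel and the remaining difference provably changes sign. Once the always-cache and never-cache bounds are in place these are short computations, and together with (i) they establish indexability of the arm, hence of the RMAB.
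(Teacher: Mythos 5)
Your proposal is correct and follows essentially the same route as the paper: the heart of both arguments is Lemma~\ref{lem:value-fun-lambda}(1), passed to the limit $n\to\infty$, to conclude that $Q_{\lambda}(a,r,1)-Q_{\lambda}(a,r,0)$ is non-decreasing in $\lambda$ and hence that the passive sets are nested. Your explicit benchmark-policy verification that $\mathcal{P}(\lambda)$ is empty once $\lambda<-d$ and eventually absorbs every state is a welcome addition, since the paper leaves these endpoint checks implicit behind the phrase ``following similar arguments as in the proof of Theorem~\ref{thm:threshold-policy}.''
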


\begin{proof}
Please refer to our extended version~\cite{j2023caching}
\end{proof}

\begin{remark}
A more common approach to show indexibility of a RMAB have been arguing that the value function is convex, e.g., see~\cite{ansell2003whittle,larranaga2014index}. But it can be easily verified that the value functions in our problem will not be convex even if the content missing costs are assumed to be convex.    
\end{remark}
\begin{remark}
Theorems~\ref{thm:threshold-policy} and~\ref{thm:indexability} require Assumption~\ref{assum:transition-prob}. Many works in literature have relied on such conditions on transition probabilities for indexibility of RMABs~(see~\cite{liu-zhao10indexibility,meshram-etal18whittle-index}).
\end{remark}

\subsection{Whittle Index Policy for the RMAB}
We now describe the Whittle index policy for the joint content caching problem.
As stated earlier, it  chooses those $M$ arms whose current states have the largest Whittle indices and among these caches the ones with positive Whittle indices. It is a stationary policy. Let  $u^W:{\cal A} \times \mathbb{Z}_+^K   \to {\cal A}$ denote this policy. Then 
\[
u^W_i(a,r) = \begin{cases}
1 \ \text{if $w(a_i,r_i)$ is among the highest $M$ values} \\
\ \ \text{in $\{w(a_i,r_i), i \in {\cal C}\}$ and $w(a_i,r_i) > 0$},  \\
0 \text{ otherwise.}
\end{cases}
\]

\paragraph*{Holding cost}
There can also be a holding cost for keeping a content in the cache. Let $h$ denote this fixed holding cost per content per slot. The  content caching problem remains unchanged except that single stage cost associated with Content $i$ becomes
\begin{equation*}
c_i(\bar{a}_i,\bar{r}_i,a_i) = ha_i + da_i(1-\bar{a}_i) + (1-a_i)T^{a_i}C_i(r_i).
\end{equation*}
Comparing it with~\eqref{eqn:single-stage-cost3}, we see that the penalty $\lambda$ can be interpreted as the fixed holding cost.
Obviously, in the presence of the holding cost, the content caching problem can be solved following the same approach as above.

\begin{table}[t]
\small
\centering
\caption{Choices of $\hat{C}(0)$ and $\hat{C}(1)$ that render the problem indexable; empty cells indicate absence of suitable choices.}
\label{table:modified-costs}
\begin{tabular}{p{0.3\linewidth}|p{0.6\linewidth}}
\hline
Case & Costs \\
\hline
\\[-0.7em]
$p^0 + 2q^0 \leq 1 $ & $\hat{C}(1)=C(1)$, $\hat{C}(0)=\min\left\{C(0), C(1) - \frac{F}{p^0 + 2 q^0 -1}\right\}$ \\
\hline
\\[-0.7em]
$p^0 + 2q^0 > 1,  2p^0 + q^0 < 1 $ & \\
\hline
\\[-0.7em]
$p^0 + 2q^0 > 1, 2p^0 + q^0 \geq 1, q^0 \leq p^0$ & $\hat{C}(1) = C(2) - \frac{p^0(C(3) - C(2))}{p^0 - q^0}$, $\hat{C}(0) = 2 \hat{C}(1) - C(2)$ \\
\hline
\\[-0.7em]
$p^0 + 2q^0 > 1, 2p^0 + q^0 \geq 1, q^0 > p^0$ & \\
\hline
\multicolumn{2}{l}{* where $F = p^0\bigg((C(2) -C(1))-(C(3) - C(2))\bigg)- (1-p^0-q^0)\bigg(C(2) - C(1)\bigg)$.}
\end{tabular}
\end{table}

\section{Noncompliance with Assumption~\ref{assum:transition-prob}}
\label{sec:nonindexable}

We have so far assumed that the costs and the transition probabilities satisfy Assumption~\ref{assum:transition-prob} to ensure indexability of the content caching problem. We now explore a novel approach of manoeuvring the content missing costs so that the modified  content caching problem is ``close'' to original problem and is indexable. We obtain the Whittle index policy for the modified problem and use it for the original problem.

More specifically, let us consider a particular content for which Assumption~\ref{assum:transition-prob} is not met. We investigate the possibility of tinkering only $C(0)$ and $C(1)$ to achieve indexability.\footnote{As in Section~\ref{sec:single-content-problem}, we omit the content index.} In other words, we consider content missing costs  $\hat{C}:\mathbb{Z}_+ \to \mathbb{R}_+$ with $\hat{C}(r) = C(r), r \geq 2$ and other costs and popularity evolution also unchanged. We demonstrate that in certain special cases adequate choices of $\hat{C}(0)$ and $\hat{C}(1)$ render the modified problem indexable. Our findings are summarized in  Lemma~\ref{lem:modified-indexability} and Table~\ref{table:modified-costs}.  

\begin{lemma}
\label{lem:modified-indexability}
If (a)~$p^0 + 2q^0 \leq 1$ or (b)~$p^0 + 2q^0 > 1, 2p^0 + q^0 > 1, q^0 \leq p^0$ then, with $\hat{C}(0)$ and $\hat{C}(1)$ as in Table~\ref{table:modified-costs},  
\begin{enumerate}
    \item $\hat{C}:\mathbb{Z}_+ \to \mathbb{R}_+$ is non-decreasing,
    \item $\hat{C}:\mathbb{Z}_+ \to \mathbb{R}_+$ is concave,
    \item the modified costs satisfy Assumption~\ref{assum:transition-prob}.
\end{enumerate}
In other cases, there do not exist 
$\hat{C}(0)$ and $\hat{C}(1)$ such that $\hat{C}:\mathbb{Z}_+ \to \mathbb{R}_+$ satisfies all these three properties.
\end{lemma}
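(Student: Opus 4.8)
The plan is to treat the three required properties as a system of \emph{linear inequalities} in the only two free quantities, $\hat C(0)$ and $\hat C(1)$, since $\hat C(r)=C(r)$ is fixed for $r\ge 2$. Writing $\alpha=2p^0+q^0-1$, $\beta=p^0+2q^0-1$ and $\Delta=C(3)-C(2)$, the properties reduce to: monotonicity $\hat C(0)\le\hat C(1)\le C(2)$; concavity at $r=1,2$, i.e. $C(2)-\hat C(1)\le \hat C(1)-\hat C(0)$ and $\Delta\le C(2)-\hat C(1)$; and the modified Assumption~\ref{assum:transition-prob} inequality $p^0\Delta-\alpha\,(C(2)-\hat C(1))+\beta\,(\hat C(1)-\hat C(0))\le 0$. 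All concavity and monotonicity relations at other points are inherited from the unchanged tail of $C$. First I would substitute $u=\hat C(1)-\hat C(0)$ and $v=C(2)-\hat C(1)$, both $\ge 0$, which collapses the system to the region $\{u\ge v\ge \Delta\}$ together with the single constraint $\beta u-\alpha v+p^0\Delta\le 0$. The whole lemma then becomes a question of when this half-plane meets that polyhedral region.

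For the two feasible cases I would simply exhibit the tabulated point and check each inequality. In case~(a), $\beta\le 0$, I keep $\hat C(1)=C(1)$ and note the algebraic identity $F=\alpha\,(C(2)-C(1))-p^0\Delta$, which recasts the Assumption~\ref{assum:transition-prob} constraint as $\hat C(0)\le C(1)-F/\beta$; taking $\hat C(0)=\min\{C(0),\,C(1)-F/\beta\}$ then satisfies it, while the clause $\hat C(0)\le C(0)$ makes monotonicity and concavity at $r=1$ \emph{inherit} from the original concave $C$ (because $C(0)\le 2C(1)-C(2)$ and $C(0)\le C(1)$), and the remaining inequalities hold since $\hat C(1)=C(1)$. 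In case~(b), $\beta>0,\ \alpha\ge 0,\ q^0<p^0$, the tabulated choice is exactly the point $u=v=p^0\Delta/(p^0-q^0)$ (equivalently $\hat C(0)=2\hat C(1)-C(2)$), where both the $r=1$ concavity and the Assumption~\ref{assum:transition-prob} inequality hold \emph{with equality}; one checks $v\ge\Delta$ reduces to $q^0\ge 0$, so the point lies in the region. Both verifications are routine substitutions.

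The substantive part is the impossibility claim for the remaining cases, namely $\beta>0$ with $q^0>p^0$ (rows 2 and 4 of Table~\ref{table:modified-costs}; note that $p^0+2q^0>1,\ 2p^0+q^0<1$ already forces $q^0>p^0$). Here I would minimize the linear functional $\beta u-\alpha v$ over $\{u\ge v\ge\Delta\}$ by a recession-cone argument: the recession cone is generated by the rays $(1,0)$ and $(1,1)$, and since $\beta>0$ and $\beta-\alpha=q^0-p^0>0$ the functional is nondecreasing along both rays, hence bounded below and minimized at the vertex $u=v=\Delta$. Its minimum value is $(q^0-p^0)\Delta$, so the best achievable value of the Assumption~\ref{assum:transition-prob} left-hand side is $(q^0-p^0)\Delta+p^0\Delta=q^0\Delta$, which is strictly positive whenever $\Delta>0$. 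Thus no feasible $(\hat C(0),\hat C(1))$ can drive it to $\le 0$, proving impossibility. This vertex/recession step is the main obstacle, since it is what certifies that the linear system is \emph{inconsistent} rather than merely exhibiting one solution.

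Finally I would dispose of the degenerate boundary $\Delta=C(3)-C(2)=0$: by concavity of $C$ this forces the increments to vanish from $r=2$ on, the Assumption~\ref{assum:transition-prob} inequality is then met with equality at $u=v=0$, and the problem is indexable without modification, so it falls outside the cases of interest. The same computation shows why case~(b) needs $q^0<p^0$ strictly (at $q^0=p^0$ the minimum is $0$, attainable only when $\Delta=0$), consistent with the table's formula requiring $p^0\ne q^0$. I would also remark that nonnegativity of $\hat C$ is a routine side-check in the regime where the construction is applied; since the threshold and indexability results (Theorems~\ref{thm:threshold-policy} and~\ref{thm:indexability}) use only monotonicity and concavity of the cost, the codomain $\mathbb{R}_+$ may be read as ``bounded below'' without affecting any downstream argument.
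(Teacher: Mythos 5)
Your proposal is correct and reaches the same conclusions, but the impossibility half is argued by a genuinely different and more systematic route than the paper's. The paper verifies the two feasible rows of Table~\ref{table:modified-costs} by direct substitution (as you do) and then disposes of the two infeasible rows by separate ad hoc sign manipulations: for $p^0+2q^0>1,\,2p^0+q^0<1$ it observes that all three terms of the Assumption~\ref{assum:transition-prob} expression are individually nonnegative, and for $p^0+2q^0>1,\,2p^0+q^0\geq 1,\,q^0>p^0$ it compares the two coefficients directly using concavity. Your reduction to the variables $u=\hat C(1)-\hat C(0)$, $v=C(2)-\hat C(1)$ and the recession-cone minimization of $\beta u-\alpha v$ over $\{u\ge v\ge\Delta\}$ unifies both infeasible rows into a single computation with the clean certificate $\min = q^0\Delta>0$, and it simultaneously explains \emph{why} the feasible rows are feasible (the table's case~(b) point is exactly the minimizing vertex, where the constraint binds). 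What your approach buys is precisely the boundary behaviour the paper glosses over: the paper's strict inequalities silently fail when $C(3)=C(2)$ (its Case~2 and Case~3a conclusions degrade to $\geq 0$, and indeed $\hat C(0)=\hat C(1)=C(2)$ then satisfies all three properties, so the lemma's impossibility clause needs $C(3)>C(2)$), and the table's division by $p^0-q^0$ breaks at $q^0=p^0$ even though the lemma's case~(b) admits $q^0\le p^0$; your vertex computation exposes both. One caveat: your remark that for $\Delta=0$ ``the problem is indexable without modification'' is not justified --- the original costs need not satisfy Assumption~\ref{assum:transition-prob} when $C(3)=C(2)$ --- but the relevant point, that the modification $\hat C(0)=\hat C(1)=C(2)$ then works and the impossibility claim fails on that boundary, is correct. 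The same division-by-zero issue arises in your case~(a) at $p^0+2q^0=1$ exactly (the formula $C(1)-F/\beta$ is undefined there); this is inherited from the paper, whose own proof quietly assumes the strict inequality.
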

\begin{proof}
Please refer to our extended version~\cite{j2023caching}
\end{proof}

Typically the number of requests remains much higher than $0$ or $1$, and so the optimal caching policy and the cost for the amended problem are close to those for the original problem. Consequently, the  Whittle index policy for the modified problem also performs well for the original problem. We demonstrate it in Section~\ref{sec: Numerical Results} though theoretical performance bounds have eluded us so far.  
 
\remove{
\section{Assumption on $p_0$ and $q_0$}

In this section, we can relax the assumption \ref{assum:transition-prob} upto some extent by varying the values of $C(1)$ and $C(0)$ while still maintaining Lemma \ref{lem:value-fun-concave} true. 

\textbf{Case 1:}  $ p^0 + 2q^0 - 1 $ and $ 2p^0 + q^0 - 1 $ are negative

If we rearrange the expression in assumption 3, we get the following inequality.

\begin{align} \label{New C0}
     C(0)    &\leq C(1)  - \frac{p^0( ( C(2) -C(1))-(C(3) - C(2)) )}{  p^0 + 2 q^0 -1}  \nonumber  \\
      &- \frac{( 1-p^0-q^0 )( C(2) - C(1))}{ p^0 + 2 q^0 -1}   \nonumber \\
 \end{align}


If $p^0( ( C(2) -C(1))-(C(3) - C(2)) ) - ( 1-p^0-q^0 )( C(2) - C(1)) $ is negative, then
\begin{equation} \label{New C1: 1}
    C(1) \leq C(2) + \frac{p^0}{1 - 2p^0 - q^0} ( C(3) - C(2))
\end{equation}
  
So we can change the value of $C(1)$ in this range $ C(0) \leq \hat{C(1)} \leq C(2)$ still maintaining  Lemma \ref{lem:value-fun-concave} true.

\textbf{Case 2:} $ p^0 + 2q^0 - 1 $ is negative and $ 2p^0 + q^0 - 1 $ is positive

In this case, \ref{New C0} is still true. 

If $p^0( ( C(2) -C(1))-(C(3) - C(2)) ) - ( 1-p^0-q^0 )( C(2) - C(1)) $ is positive, then

\begin{equation}\label{New C1:2}
     C(0) \leq C(1) \leq C(2) + \frac{p^0}{ 1 - q^0 - 2p^0} ( C(3) - C(2))
\end{equation}

If $p^0( ( C(2) -C(1))-(C(3) - C(2)) ) - ( 1-p^0-q^0 )( C(2) - C(1)) $ is negative, then 

\begin{equation}\label{New C1:3}
   C(2) \geq  C(1) \geq C(2) + \frac{p^0}{ 1 - q^0 - 2p^0} ( C(3) - C(2)) 
\end{equation}

Under the above two cases, we can change $C(0)$ and $C(1)$ using [\ref{New C0} - \ref{New C1:3}] while still maintaining Lemma \ref{lem:value-fun-concave} true.

}
\remove{

\section{Computation of Whittle Index}
\label{sec:Compute}

Whittle indices can be found using the algorithm  given in \cite{duran2022Whittle}. We need stationary distribution to find Whittle indices. We need to find the stationary distribution for all the possible threshold policies. Let $ \overrightarrow{\text{r}}  = (r^0,r^1 )$ be the threshold policy such that $ r^0 \neq r^1 , r^0 > r^1 $  and  $r^0 , r^1 > 0  $. We need to solve the following equations to find stationary distribution. Let $ \pi^{\overrightarrow{\text{r}}}(a,r)$ denotes the probability of being in state $(a,r)$ under the threshold policy  $ \overrightarrow{\text{r}}  = (r^0,r^1 )$.
\begin{enumerate}
    \item $ (p^0/q^0)^{r^1 - 1} \pi^{\overrightarrow{\text{r}}}(0,0) = p^0 \big[ (p^0/q^0)^{r^1 - 2} \pi^{\overrightarrow{\text{r}}}(0,0) \big]  + q^0( \pi^{\overrightarrow{\text{r}}}(0,r^1) + \pi^{\overrightarrow{\text{r}}}(1,r^1)) + (1 - p^0 - q^0)  (p^0/q^0)^{r^1 - 1} \pi^{\overrightarrow{\text{r}}}(0,0)             $
    \item  $ \pi^{\overrightarrow{\text{r}}}(0,r^1) = p^0 (p^0/q^0)^{r^1 - 1} \pi^{\overrightarrow{\text{r}}}(0,0) + q^0 ( \pi^{\overrightarrow{\text{r}}}(0,r^1) / p^0)  \Bigg[ 1 + \frac{q^0}{p^0} + (  \frac{q^0}{p^0})^2 + .... + (  \frac{q^0}{p^0})^{ r^0 - r^1 -1} \Bigg]   +  (1 - p^0 - q^0)  ( \pi^{\overrightarrow{\text{r}}}(0,r^1) + \pi^{\overrightarrow{\text{r}}}(1,r^1))  $
    \item $ ( \pi^{\overrightarrow{\text{r}}}(0,r^0 +1) / p^0)  \Bigg[ 1 + \frac{q^0}{p^0} + (  \frac{q^0}{p^0})^2 + .... + (  \frac{q^0}{p^0})^{ r^0 - r^1 -1} \Bigg] = p^0 ( \pi^{\overrightarrow{\text{r}}}(0,r^1) + \pi^{\overrightarrow{\text{r}}}(1,r^1)) + q^0  ( \pi^{\overrightarrow{\text{r}}}(0,r^0 +1) / p^0)  \Bigg[ 1 + \frac{q^0}{p^0} + (  \frac{q^0}{p^0})^2 + .... + (  \frac{q^0}{p^0})^{ r^0 - r^1 -2} \Bigg]  + (1 - p^0 -q^0) ( \pi^{\overrightarrow{\text{r}}}(0,r^0 +1) / p^0)  \Bigg[ 1 + \frac{q^0}{p^0} + (  \frac{q^0}{p^0})^2 + .... + (  \frac{q^0}{p^0})^{ r^0 - r^1 -1} \Bigg]                $
    \item $  \pi^{\overrightarrow{\text{r}}}(1,r^0 +1)   = p^1  ( \pi^{\overrightarrow{\text{r}}}(1,r^1) / q^1)  \Bigg[ 1 + \frac{p^1}{q^1} + (  \frac{p^1}{q^1})^2 + .... + (  \frac{p^1}{q^1})^{ r^0 - r^1 -1} \Bigg]  + q^1 (q^1/ p^1) ^{ R - r^0 - 2} \pi^{\overrightarrow{\text{r}}}(1,R))  + ( 1- p^1 - q^1)  ( \pi^{\overrightarrow{\text{r}}}(1,r^0+1) + \pi^{\overrightarrow{\text{r}}}(0,r^0 +1))   $   
    \item $ (q^1/p^1)^{ R - r^0 - 2} \pi^{\overrightarrow{\text{r}}}(1,R) = p^1  ( \pi^{\overrightarrow{\text{r}}}(1,r^0+1) + \pi^{\overrightarrow{\text{r}}}(0,r^0 +1)) + q^1 ( q^1/p^1)^{ R - r^0 - 3}  \pi^{\overrightarrow{\text{r}}}(1,R)  + (1 - p^1 - q^1)  (q^1/p^1)^{ R - r^0 - 2} \pi^{\overrightarrow{\text{r}}}(1,R)            $  
    \item $ ( \pi^{\overrightarrow{\text{r}}}(1,r^1) / q^1)  \Bigg[ 1 + \frac{p^1}{q^1} + (  \frac{p^1}{q^1})^2 + .... + (  \frac{p^1}{q^1})^{ r^0 - r^1 -1} \Bigg]  = p^1  ( \pi^{\overrightarrow{\text{r}}}(1,r^1) / q^1)  \Bigg[ 1 + \frac{p^1}{q^1} + (  \frac{p^1}{q^1})^2 + .... + (  \frac{p^1}{q^1})^{ r^0 - r^1 -2} \Bigg]      + q^1  ( \pi^{\overrightarrow{\text{r}}}(1,r^0+1) + \pi^{\overrightarrow{\text{r}}}(0,r^0 +1))  + ( 1 - p^1 - q^1) ( \pi^{\overrightarrow{\text{r}}}(1,r^1) / q^1)  \Bigg[ 1 + \frac{p^1}{q^1} + (  \frac{p^1}{q^1})^2 + .... + (  \frac{p^1}{q^1})^{ r^0 - r^1 -1} \Bigg]         $
\end{enumerate}
 
We must solve similar equations to get stationary distribution for other threshold policies.

}

\section{Popularity evolution oblivious to the caching action} 

If the popularity evolution is independent of the caching action,  $p^0 = p^1$ and $ q^0 = q^1$. In this case one can easily show via induction that  $V_{\lambda}(1,r) - V_{\lambda}(0,r)$ is non-decreasing in  $\lambda$, and 
 consequently, the associated restless bandit problem is indexable. The proofs of these assertions follow from similar steps as in the proofs of the second part of 
 Lemma~\ref{lem:value-fun-lambda} and of 
 Theorem~\ref{thm:indexability}, respectively.

\begin{remark}
    When popularity evolution does not depend on the caching action the above indexability assertion does not rely on the the special transition structure of Figure~\ref{fig: Transition}. In other words, the restless bandit problems corresponding to the caching problems with arbitrary Markovian popularity evolutions are indexable as long as the evolutions do not depend on the caching action.  
\end{remark}

Below we generalize the above assertion by deriving a condition on the difference of popularity evolution with and without caching for indexability. Towards this, we define $\delta \coloneqq 2 \max\{ |p^0 - p^1| ,|q^0 - q^1| \}$.
We prove the following indexability result.

\begin{theorem}
\label{thm:indexability 2}
If $\beta \leq \max\{\frac{1}{1+\delta},\frac{1}{2}\}$, the content caching problem is indexable. 
\end{theorem}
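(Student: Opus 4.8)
The plan is to reduce indexability to a single monotonicity inequality in $\lambda$, and then establish that inequality by a coupled induction over value iteration in which the hypothesis on $\beta$ controls a perturbation term whose size is governed by $\delta$.

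First I would recall that the passive set ${\cal P}(\lambda)$ increases monotonically in $\lambda$ as soon as $\Delta_\lambda(a,r) := Q_\lambda(a,r,1)-Q_\lambda(a,r,0)$ is non-decreasing in $\lambda$ for every state, and that the boundary behaviour $\Delta_\lambda = \lambda + O(1) \to \pm\infty$ as $\lambda\to\pm\infty$ then gives the full sweep from $\emptyset$ to $\{0,1\}\times\mathbb{Z}_+$ required by the definition of indexability. Writing $\psi_\lambda(a,r):=\partial V_\lambda(a,r)/\partial\lambda$ (this exists off a countable set, since $V_\lambda$ is concave in $\lambda$ as a minimum over policies of maps affine in $\lambda$; equivalently $\psi_\lambda(a,r)$ is the expected discounted caching time from $(a,r)$), differentiating \eqref{eqn:Q-iteration-v-NonVI} gives $\tfrac{d}{d\lambda}\Delta_\lambda(a,r) = 1 - \beta\,\Phi_\lambda(r)$, where $\Phi_\lambda(r) := T^0\psi_\lambda(0,r) - T^1\psi_\lambda(1,r)$. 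Thus it suffices to prove the single bound $\beta\,\Phi_\lambda(r)\le 1$ for all $r,\lambda$.

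I would prove this by induction on the value-iteration index, carrying for the slopes $\psi^n_\lambda$ of $V^n_\lambda$ the two hypotheses: \textbf{(H1)} $0\le\psi^n_\lambda(a,r)\le S_n:=(1-\beta^n)/(1-\beta)$, and \textbf{(H2)} $\psi^n_\lambda(1,r)\ge\psi^n_\lambda(0,r)$ (equivalently, $V^n_\lambda(1,r)-V^n_\lambda(0,r)$ is non-decreasing in $\lambda$, the analogue of Lemma~\ref{lem:value-fun-lambda}(2)). The base case $n=0$ is trivial. Propagating (H1) is routine from \eqref{eqn:Q-iteration-v}: the active branch has slope $\beta T^0\psi^{n}_\lambda(0,r)$ (action $0$) or $1+\beta T^1\psi^{n}_\lambda(1,r)$ (action $1$), giving $S_{n+1}=1+\beta S_n$. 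For (H2), I would note that $V^{n+1}_\lambda(0,r)$ and $V^{n+1}_\lambda(1,r)$ are minima of a common ``passive'' term $A_\lambda$ and of two ``active'' terms differing only by the fetch cost $d$; a short case analysis of this min-of-two-affine-functions structure shows that $\tfrac{d}{d\lambda}\big(V^{n+1}_\lambda(1,r)-V^{n+1}_\lambda(0,r)\big)$ is $0$ except on the set of $\lambda$ where caching is optimal at $(1,r)$ but not at $(0,r)$, on which it equals exactly $1-\beta\Phi^n_\lambda(r)$. Hence (H2) at level $n+1$ follows once $\beta\Phi^n_\lambda(r)\le1$. Crucially, this uses only the two-action Bellman structure and not the optimality of a threshold policy, so it does \emph{not} invoke Theorem~\ref{thm:threshold-policy} or Assumption~\ref{assum:transition-prob}.

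It then remains to bound $\Phi^n_\lambda$. I would write $\Phi^n_\lambda(r) = T^0[\psi^n_\lambda(0,\cdot)-\psi^n_\lambda(1,\cdot)](r) + (T^0-T^1)\psi^n_\lambda(1,r)$. By (H2) the first term is $\le0$ (as $T^0$ is an averaging operator), and since $|\psi^n_\lambda(1,r{+}1)-\psi^n_\lambda(1,r)|\le S_n$ by (H1), the definition $\delta=2\max\{|p^0-p^1|,|q^0-q^1|\}$ gives $|(T^0-T^1)\psi^n_\lambda(1,r)|\le \delta S_n$; trivially also $\Phi^n_\lambda(r)\le S_n$. Hence $\Phi^n_\lambda(r)\le\min\{\delta,1\}S_n\le\min\{\delta,1\}/(1-\beta)$, and the hypothesis $\beta\le\max\{\tfrac{1}{1+\delta},\tfrac12\}=\tfrac{1}{1+\min\{\delta,1\}}$ yields $\beta\Phi^n_\lambda(r)\le1$, closing the induction. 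Letting $n\to\infty$ transfers (H1), (H2) and $\beta\Phi_\lambda(r)\le1$ to $V_\lambda$, so $\Delta_\lambda$ is non-decreasing in $\lambda$ and the problem is indexable. The main obstacle is the apparent circularity: the inequality $\beta\Phi^n_\lambda\le1$ that certifies indexability is exactly what is needed to propagate (H2), while the useful ($\delta$-sized, rather than $O(1/(1-\beta))$) bound on $\Phi^n_\lambda$ can be extracted only by invoking (H2) at the previous iterate to kill the $T^0[\psi^n_\lambda(0,\cdot)-\psi^n_\lambda(1,\cdot)]$ term; the simultaneous induction is what resolves it. A secondary technical point to handle with care is that $V^n_\lambda$ is only piecewise linear in $\lambda$, so $\psi^n_\lambda$ must be read as a one-sided derivative (or the whole argument recast with finite differences $V_{\lambda_2}-V_{\lambda_1}$ and integrated), which leaves all the inequalities above intact.
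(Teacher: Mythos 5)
Your proof is correct and follows essentially the same route as the paper's: a two-step induction over value iteration alternating between monotonicity in $\lambda$ of the $Q$-difference and of $V_\lambda(1,r)-V_\lambda(0,r)$, with the key decomposition $T^0\psi(0,\cdot)-T^1\psi(1,\cdot)=T^0[\psi(0,\cdot)-\psi(1,\cdot)]+(T^0-T^1)\psi(1,\cdot)$ and the sensitivity bound $\psi\le 1/(1-\beta)$ matching the paper's finite-difference bound $V^n_{\lambda_2}-V^n_{\lambda_1}\le(\lambda_2-\lambda_1)/(1-\beta)$ and its splitting-off of the $(T^1-T^0)$ perturbation of size $\delta$. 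The only (cosmetic) differences are that you phrase things with one-sided derivatives of the piecewise-linear $V^n_\lambda$ rather than finite differences, and that your unified $\min\{\delta,1\}$ bound explicitly covers the $\beta\le 1/2$ branch of the statement, which the paper's appendix leaves implicit.
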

\begin{proof}
We argue that
\begin{enumerate}
\item $Q^1_{\lambda}(a,r, 1) - Q^1_{\lambda}(a,r, 0) $ is nondecreasing in $\lambda$,
\item If $Q^n_{\lambda}(a,r,1)-Q^n_{\lambda}(a,r,0)$ for $a = 0,1$ are nondecreasing in $\lambda$, so is $V_{\lambda}^n(1,r) -  V_{\lambda}^n(0,r)$,
\item If $V_{\lambda}^n(1,r) -  V_{\lambda}^n(0,r)$ is nondecreasing in $\lambda$ and $\beta \leq \max\{\frac{1}{1+\delta},\frac{1}{2}\}$, then $Q^{n+1}_{\lambda}(a,r, 1) - Q^{n+1}_{\lambda}(a,r, 0) $ is also nondecreasing in $\lambda$. 
\end{enumerate}
Hence, via induction, $Q_{\lambda}(a,r, 1) - Q_{\lambda}(a,r, 0) $ is nondecreasing in $\lambda$
which implies that the problem is indexable. 
Please see~\cite{j2023caching} for the details.
\end{proof}

\remove{
In this section, we show that the single content caching problem  with $ p^0 \approx p^1$ and $ q^0 \approx q^1$ is still indexable.
\begin{assum} \label{assum:increasing}
   
   \begin{enumerate}
       \item $ \frac{C(r+1) - C(r)}{C(r) - C(r-1)}$ is increasing in $r$.
       \item  $\delta = 2 \max\{ |p^0 - p^1| ,|q^0 - q^1| \}$
       \item $\beta \leq \epsilon =\frac{C(2) - C(1)}{C(1)- C(0)} $
       \item $\delta \leq  \min\bigg\{ \frac{(\epsilon - \beta) \epsilon^2}{3 \beta} , \frac{1-\beta}{3 \beta} \bigg\} $
   \end{enumerate}
\end{assum}

\begin{theorem}
\label{thm:threshold-policy 2}
For each $\lambda \in \mathbb{R}$ there exist $r^0(\lambda),r^1(\lambda) \in \mathbb{Z}_+$ such that the threshold policy $(r^0(\lambda),r^1(\lambda))$ is an optimal policy for the single content caching problem with penalty $\lambda$ under assumptions ~\ref{assum:miss-hit-cost-concave} and ~\ref{assum:increasing}.
\begin{proof}
Please refer to our extended version~\cite{j2023caching}
\end{proof}

\begin{theorem}
\label{thm:indexability 2}
Under Assumptions~\ref{assum:miss-hit-cost-concave} and~\ref{assum:increasing} the content caching problem is indexable. \end{theorem}
\end{theorem}
\begin{proof}
Please refer to our extended version~\cite{j2023caching}
\end{proof}
}

Observe that, for $\beta \leq 1/2$, the problem is indexable for arbitrary $p^0,p^1,q^0$ and $q^1$.
For $\beta \in (1/2,1)$, the problem is indexable if $\delta \leq \frac{1-\beta}{\beta}$. 
 Assumptions~\ref{assum:popularity-stoch-order}, \ref{assum:miss-hit-cost-concave} or \ref{assum:transition-prob}
are not needed in these cases.

\begin{remark}
 Meshram et al.~\cite{meshram-etal18whittle-index} have considered  a restless single-armed hidden Markov bandit with
two states and have shown that under certain conditions on transition probablities, the arm is indexable for $\beta \in (0, 1/3)$. More recently, Akbarzadeh and Mahajan~\cite{akbarzadeh2020conditions} have provided sufficient conditions for indexability of general restless multiarmed bandit problems. One can directly infer from~\cite[Theorem 1 and Proposition 2]{akbarzadeh2020conditions} that the caching problem is indexable for $\beta \leq 1/2$. On the other hand, Theorem~\ref{thm:indexability 2} implies that the caching problem can be indexable even for $\beta \in (1/2,1)$. In Section~\ref{sec: Numerical Results}, we  numerically show that the caching problem can be indexable even when $\beta > 1/(1+\delta)$.    
\end{remark}

\section{Numerical Results}
\label{sec: Numerical Results}

In this section, we numerically evaluate the Whittle index policy for a range of system parameters. We demonstrate a variation of Whittle indices with various parameters. We also compare the performance of the Whittle index policy with those of the optimal and greedy policies.  We compute Whittle indices using an algorithm proposed in~\cite{Gast_2023}. We assume $C_i(r) = 3\sqrt{r}$ for all $i \in {\cal C}$. We use $ p^{a_i}, q^{a_i}, a \in \{0,1\}$, for all $i \in {\cal C}$ which satisfy  Assumptions~\ref{assum:popularity-stoch-order} and~\ref{assum:transition-prob}.

\paragraph*{Whittle indices for different caching costs}
We consider $K = 40$ contents and a cache size $M = 16$. We assume transition probabilities $p^0 = 0.06082 , q^0= 0.38181, p^1 = 0.63253 , q^1 = 0.26173$ for all the contents and discount factor $\beta = 0.95$. We plot the Whittle indices for two different values of  the caching costs $d = 10$ and $d = 400$ in Figures~\ref{fig:my_label}(a) and~\ref{fig:my_label}(b), respectively. As expected, $w(0,r)$ and  $w(1,r)$ are increasing in $r$. We also observe that, for a fixed $r$, $w(1,r)$ does not vary much as the caching   cost is changed but $w(0,r)$ decreases with   the caching cost. This is expected as increasing caching costs makes the passive action~(not caching) more attractive. 

\begin{figure}
\vspace{-0.2in}
\centering
\begin{subfigure}[t]{.45\textwidth}
    \centering
    \includegraphics[width= 2.2in,height=1.5in]{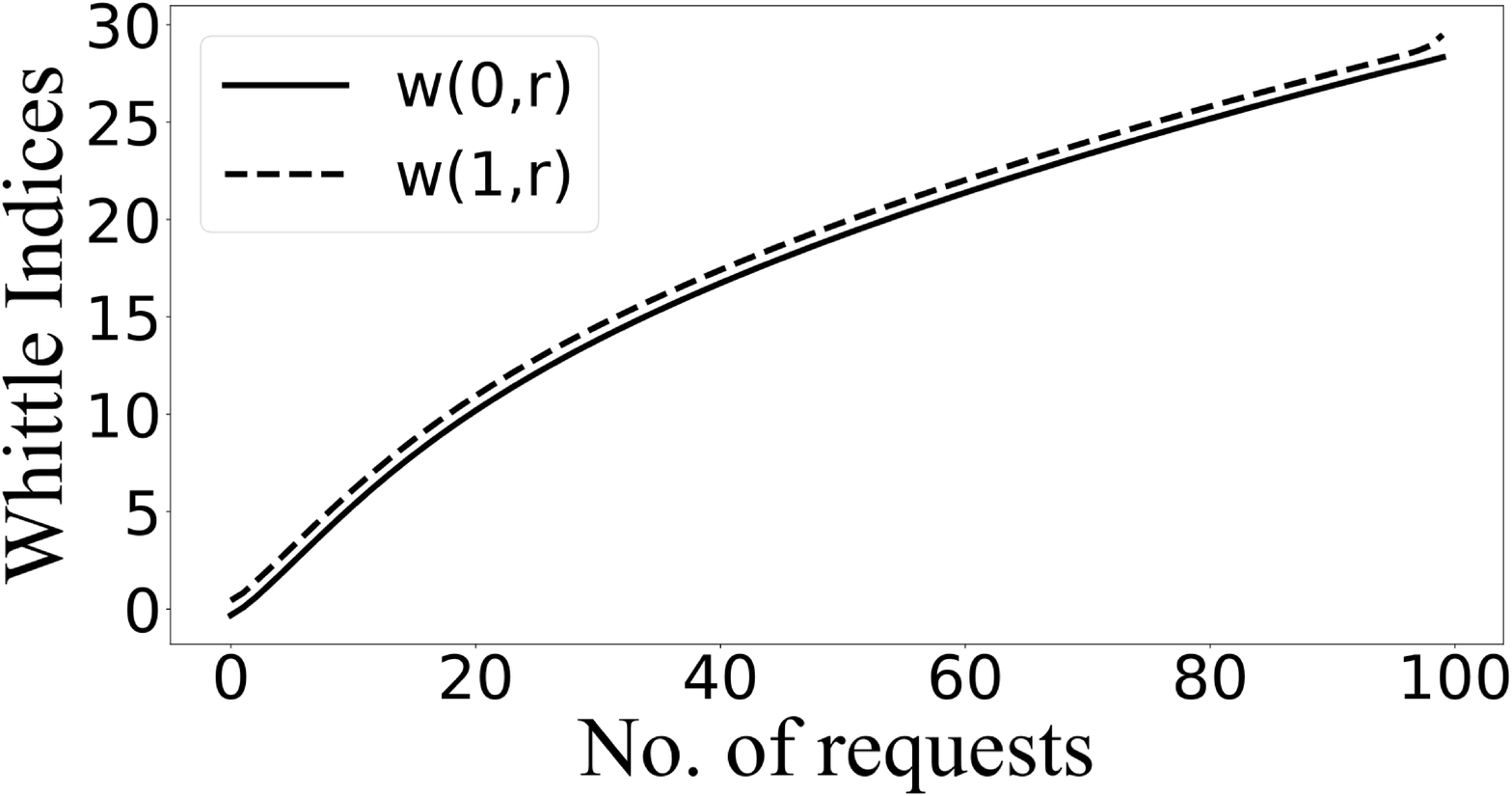}
     \caption{Caching cost $d = 10$}
\end{subfigure}
\hfill
\begin{subfigure}[t]{.45\textwidth}
    \centering
    \includegraphics[width= 2.2in,height=1.5in]{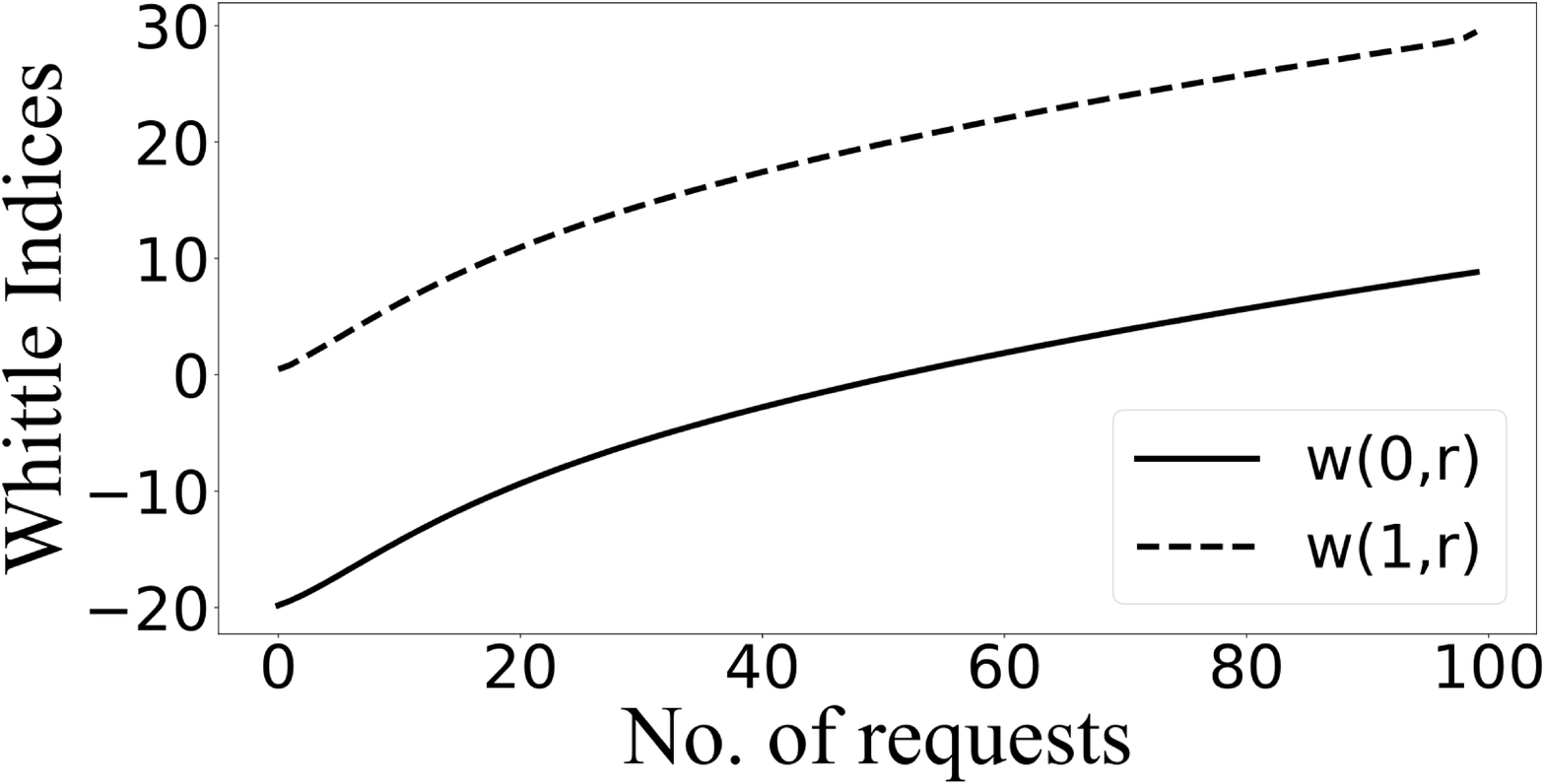}
     \caption{Caching cost $d = 400$}
\end{subfigure}
\caption{Whittle indices for different caching costs}
    \label{fig:my_label}
\vspace{-0.3in}
\end{figure}

\paragraph*{Whittle indices for  different discount factors} We plot the Whittle indices for two different values of  discount factors $\beta = 0.3$ and $\beta = 0.9$ in Figures~\ref{fig:beta}(a) and~\ref{fig:beta}(b), respectively. Other parameters are the same as those for Figure~\ref{fig:my_label}  except caching cost $ d = 10$. Here also, we observe that, for a fixed $r$, $w(1,r)$ does not vary much as the $\beta$ is changed, but $w(0,r)$ increases with   $\beta$.

\begin{figure}
    \centering
    \begin{subfigure}[t]{.45\textwidth}
    \centering
    \includegraphics[width=2.2in,height=1.5in]{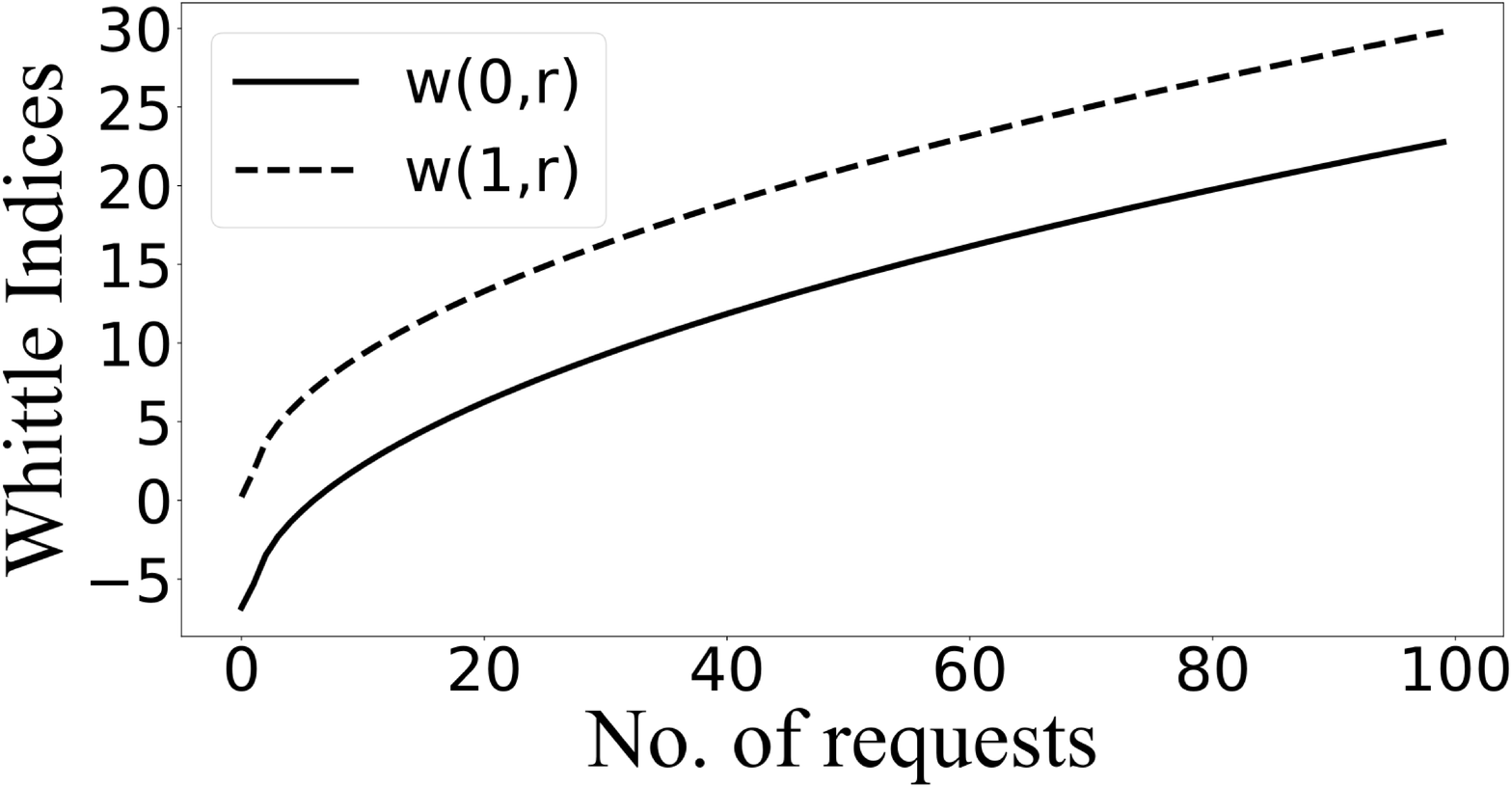}
     \caption{$\beta$ = 0.3}
\end{subfigure}
\hfill
\begin{subfigure}[t]{.45\textwidth}
    \centering
        \includegraphics[width=2.2in,height=1.5in]{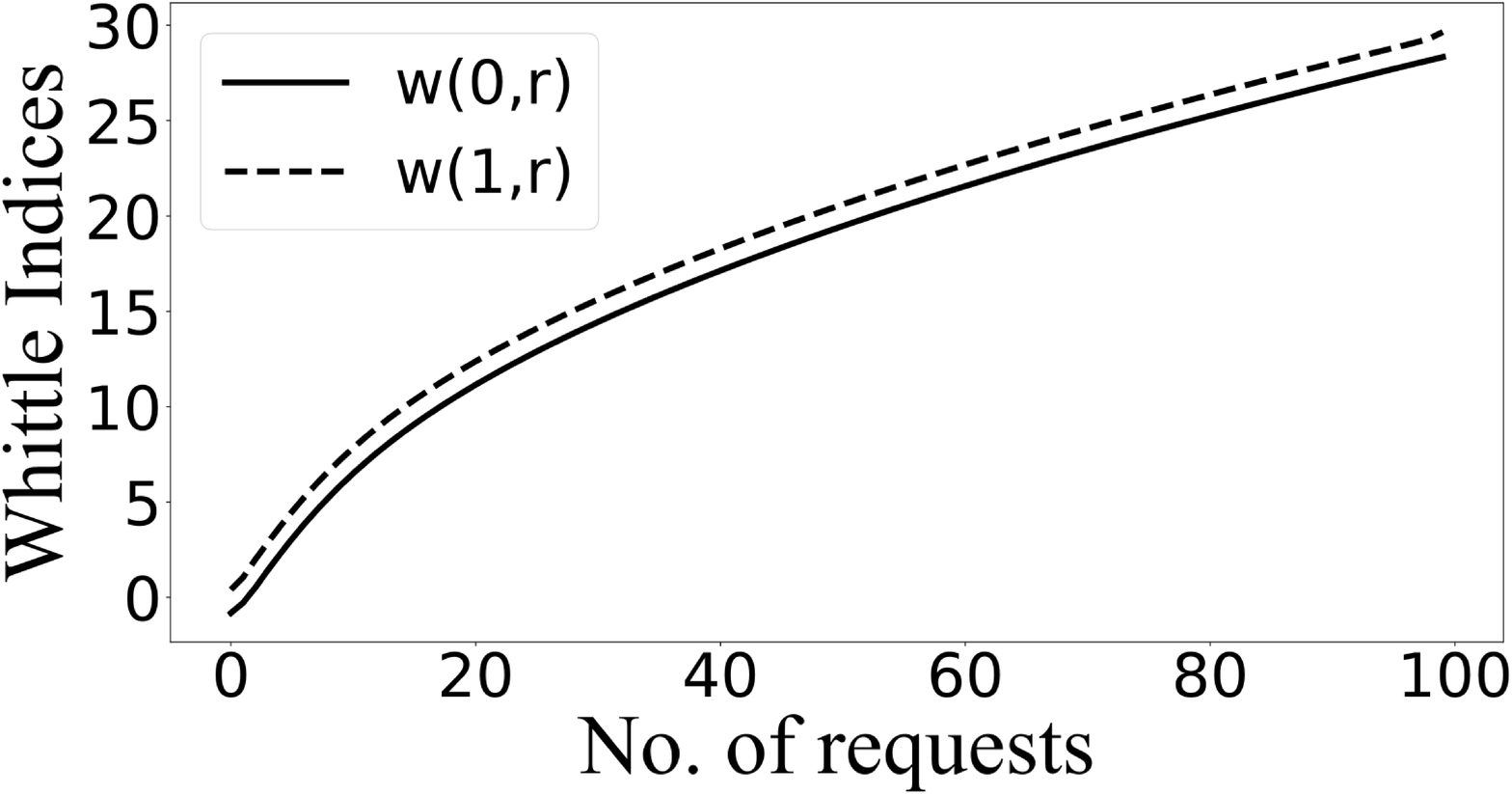}
     \caption{$\beta$ = 0.9}
\end{subfigure}
    \caption{ Whittle indices for different discount factors}
    \label{fig:beta}
\vspace{-0.2in}
\end{figure}

\paragraph*{Whittle indices for  different transition probabilities} We plot the Whittle indices for two different sets of transition probabilities 
\begin{enumerate}
    \item $ p^1 > q^1$ and $p^0 > q^0$ e.g., $p^0 = 0.0093, q^0= 0.0061, p^1 = 0.3274 , q^1 = 0.0030$
    \item $p^1 < q^1$ and $p^0 < q^0$ e.g., $p^0 = 0.0007 , q^0= 0.9362, p^1 = 0.0021 , q^1 = 0.8902  $
\end{enumerate}
Assuming caching cost $ d = 10$ and $\beta = 0.95$, we note that in the first scenario, whether the content is cached or not (active or passive action), the number of requests is likely to increase. Similarly, in the second scenario, the number of requests is likely to decrease regardless of caching status. Fig.~\ref{fig: Different TPM} illustrates the Whittle indices for both cases. It is evident that the Whittle indices are higher in the first case compared to the second case. This outcome is expected since when the number of requests is more likely to increase, caching is anticipated. The Whittle index policy selects the M contents with the highest current state Whittle indices and caches those with positive indices. Thus, as the likelihood of request increases for content to be cached, the corresponding Whittle indices are expected to rise.


\paragraph*{Performance of the Whittle index policy for a problem conforming to Assumptions~\ref{assum:popularity-stoch-order} and~\ref{assum:transition-prob}} 

In our comparison, we evaluate the performance of the Whittle Index Policy against the optimal policy obtained through value iteration and the greedy policy. The greedy policy selects the action that minimizes the total cost outlined in Section~\ref{sec: System Model }, considering all possible actions while satisfying the constraints at each moment. However, the optimal policy is only feasible for small values of $ K, M$. Therefore, we set $K = 3$ and $M = 1$ for our analysis. The parameters $\beta = 0.95$, caching cost $d = 10$, and the probabilities $ p^{a_i}, q^{a_i}, a \in \{0,1\}$, for all $i \in {\cal C}$, are chosen to satisfy Assumptions~\ref{assum:popularity-stoch-order} and~\ref{assum:transition-prob}. Our findings indicate that the Whittle index policy outperforms the greedy policy and approaches the performance of the optimal policy, as depicted in Fig.~\ref{fig: Optimal}


\begin{figure}
    \centering
    \begin{minipage}[t]{0.45\textwidth}
        \centering
        \includegraphics[width=2.2in,height=1.5in]{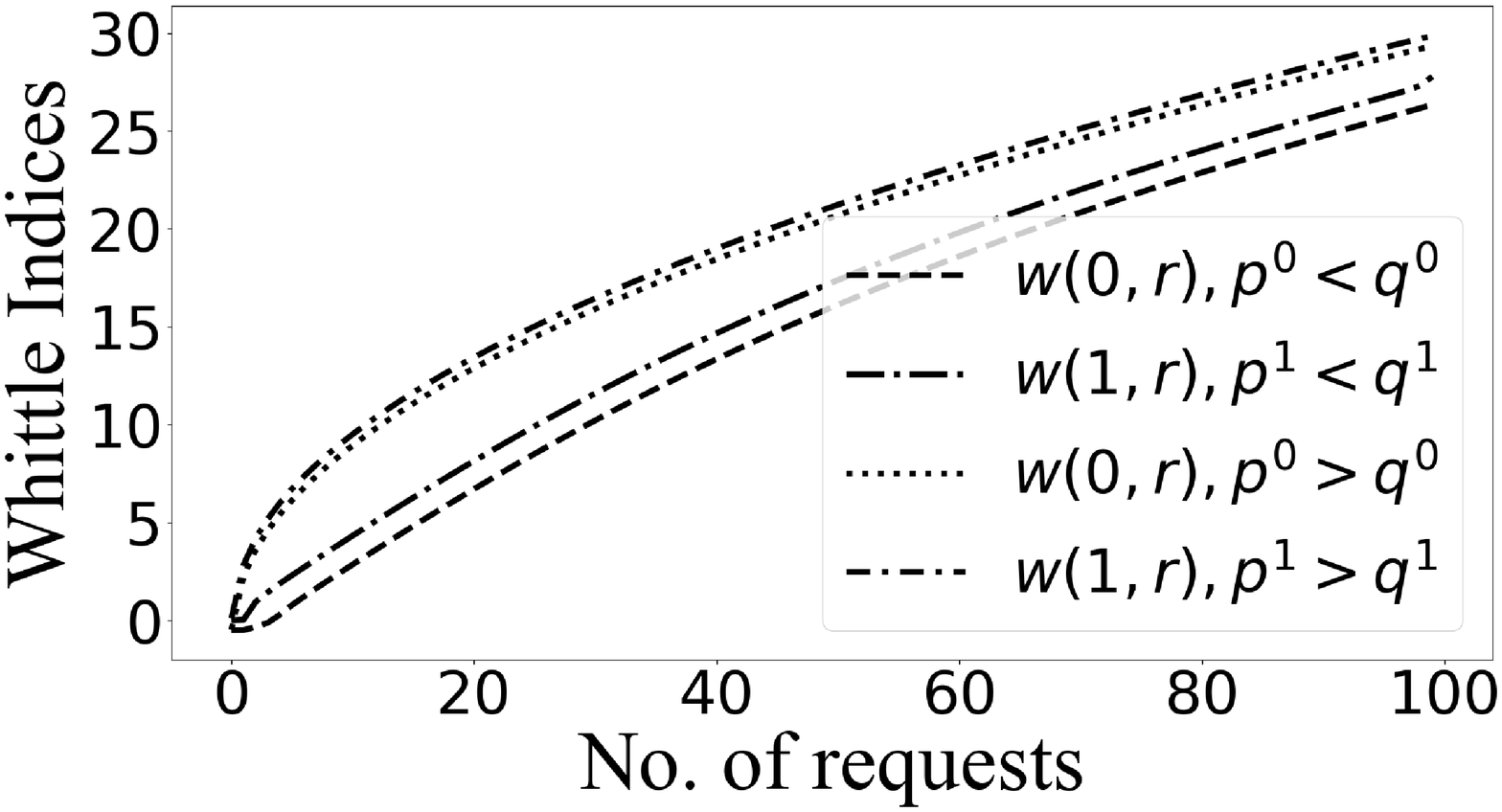}
         \caption{Whittle indices for different transition probabilities}
        \label{fig: Different TPM}
    \end{minipage}\hfill
    \begin{minipage}[t]{0.45\textwidth}
        \centering
        \includegraphics[width=2.2in,height=1.5in]{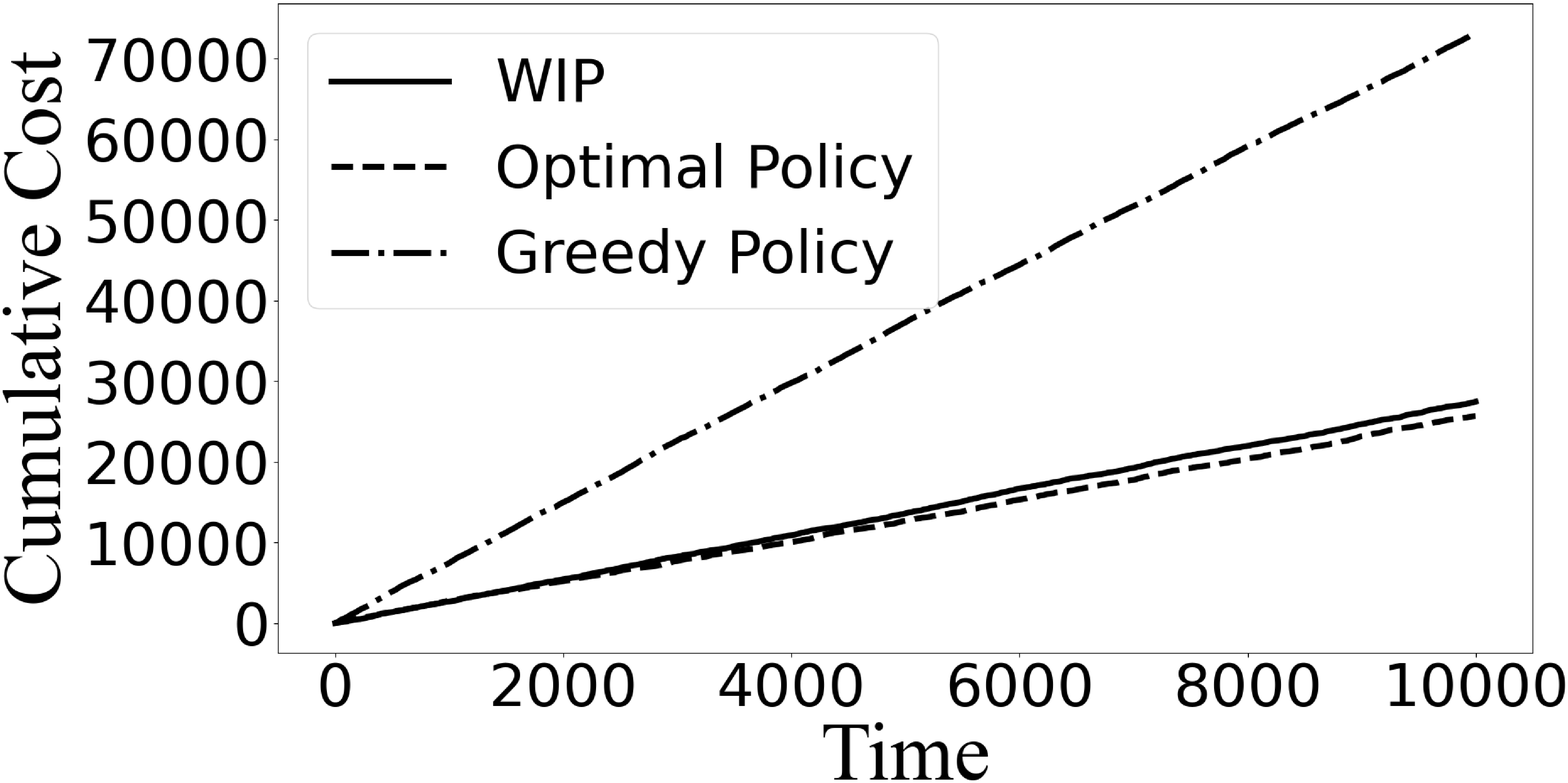}
        \caption{Performance of Whittle Index Policy for a problem conforming to Assumptions~\ref{assum:popularity-stoch-order} and~\ref{assum:transition-prob}}
        \label{fig: Optimal}
    \end{minipage}
\vspace{-0.2in}
\end{figure}

\paragraph*{Performance of the Whittle index policy  for a problem not conforming to Assumption~\ref{assum:transition-prob}} We manipulate the content missing costs of such a problem as suggested in Section~\ref{sec:nonindexable} to make it indexable. In other words, we choose $\hat{C}(1)$ and $\hat{C}(0)$ as in the Table \ref{table:modified-costs}. We obtain the Whittle indices of the modified problem and use them for the original problem. Then we compare the performances of the  Whittle index policy, the optimal policy and the greedy policy for the original problem. We see that the Whittle index policy still performs better compared to the greedy policy and is close to the optimal policy and is shown in Fig.~\ref{fig: Non-Index}

\paragraph{Indexability of the caching problem when $ \delta > \frac{1-\beta}{ \beta}$} The caching problem~\ref{eqn:objective} is seen to be indexable when $ \delta \leq \frac{1-\beta}{ \beta}$. Here we numerically check indexability when $ \delta > \frac{1-\beta}{ \beta}$. We use $p^0 = 0.1855 , p^1 = 0.2137, q^0 = 0.7719, q^1 =0.6280$ and $\beta = 0.95$, resulting in $\delta = 0.287                 
 8$ and $\frac{1-\beta}{ \beta} = 0.0526$.   We run value iteration for different values of $\lambda$ to find $Q_{\lambda}(a,r, a')  \quad \forall a',a,r$  and plot $Q_{\lambda}(a,r, 1) - Q_{\lambda}(a,r, 0) $ vs $\lambda$. For indexability $Q_{\lambda}(a,r, 1) - Q_{\lambda}(a,r, 0) $ should be non-decreasing in $\lambda$ as argued in the proof of Theorem~\ref{thm:indexability} . From Figure \ref{fig: Index-Action}, we can see that $Q_{\lambda}(a,r, 1) - Q_{\lambda}(a,r, 0) $ is indeed non-decreasing in $\lambda$. Hence the caching problem is indexable even when $ \delta > \frac{1-\beta}{ \beta}$.

\begin{figure}
\vspace{-0.2in}
    \centering
    \begin{minipage}[t]{0.45\textwidth}
        \centering
        \includegraphics[width=2.2in,height=1.5in]{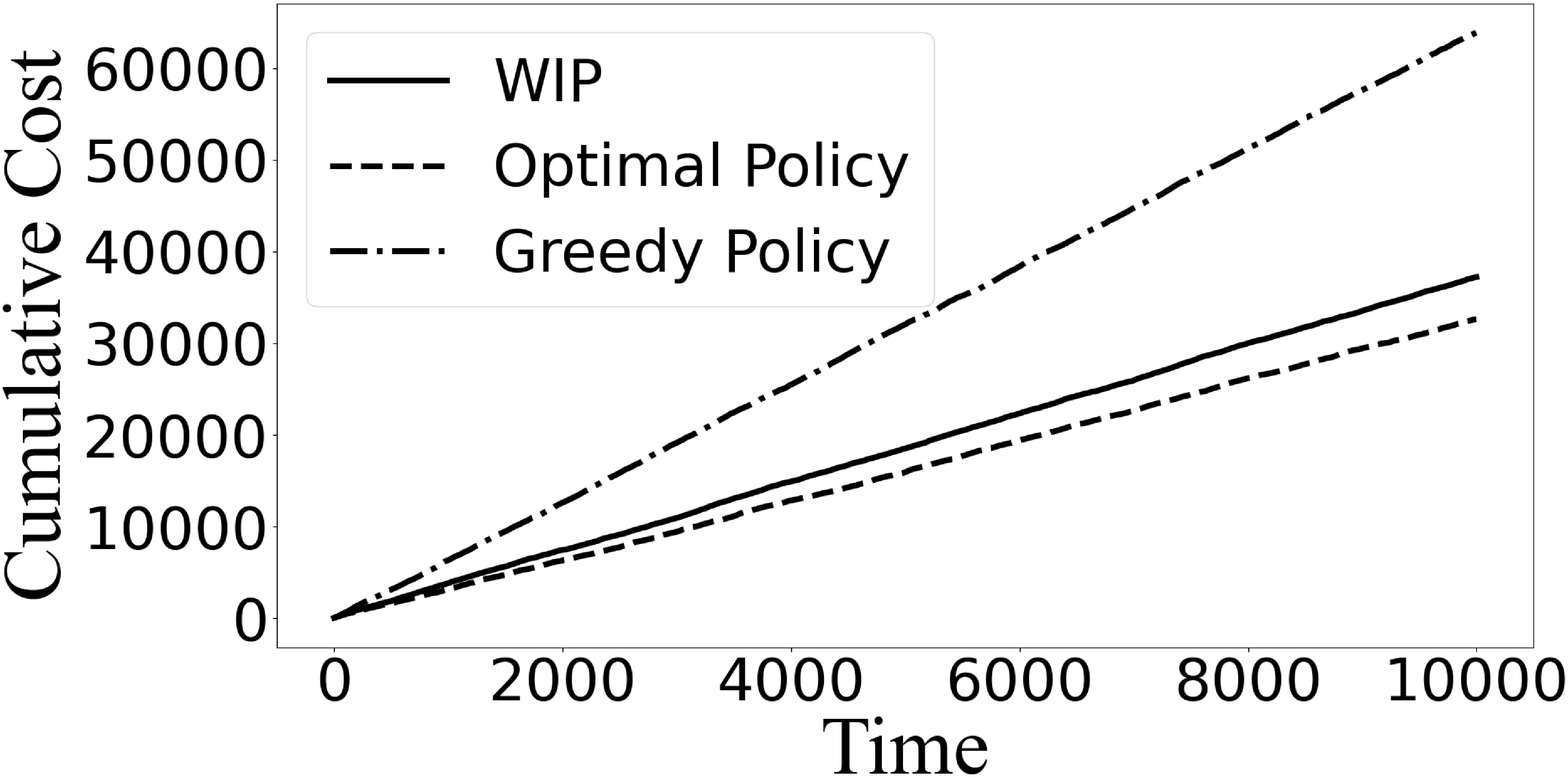}
        \caption{Performance of Whittle Index Policy for a problem not conforming to Assumption~\ref{assum:transition-prob}}
    \label{fig: Non-Index}
    \end{minipage}\hfill
    \begin{minipage}[t]{0.45\textwidth}
        \centering
        \includegraphics[width=2.2in,height=1.4in]{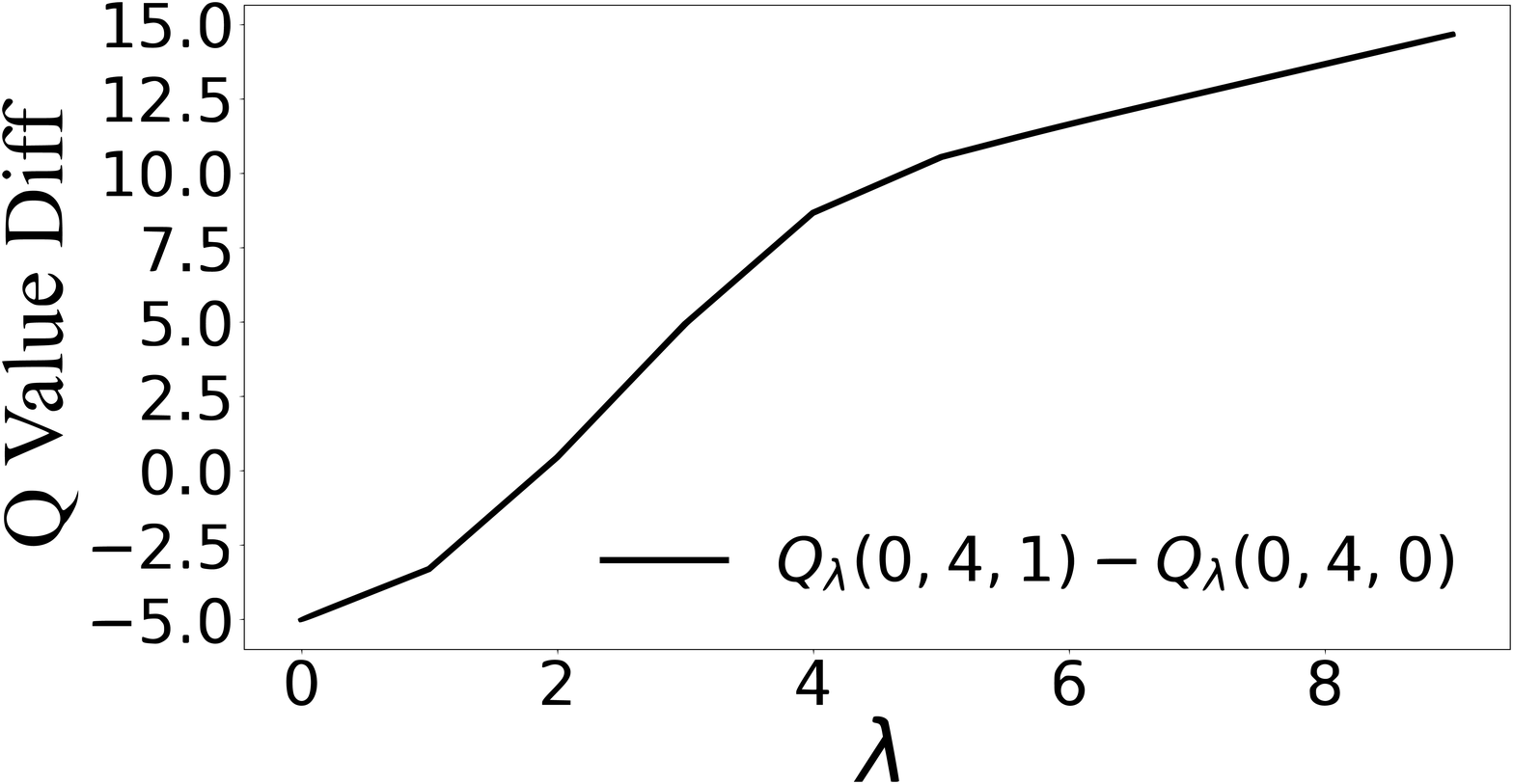}
         \caption{Indexability of caching problem when  $ \delta > \frac{1-\beta}{ \beta}$.}
    \label{fig: Index-Action}
    \end{minipage}
\vspace{-0.4in}
\end{figure}



   

   

   

\remove{
\section{ Proof of Q Learning}
\begin{lemma}\label{Norm of Q}
     Consider the iterates $\left\{Q_n\right\}$ and $\left\{W_n\right\}$ generased by (30)-(31). Under Assumprions 2-5, we have for all $n \geq 0$,
     $$
\begin{aligned}
& \mathbb{E}\left[\left\|\tilde{Q}_{n+1}\right\|^2 \mid \mathcal{F}_n\right] \leq \gamma_n^2 \Lambda+\left(1-2 \gamma_{\mathrm{i} n} \mu_1+L_h^2 \gamma_n^2\right)\left\|\tilde{Q}_n\right\|^2 \\
& +2 L_f^2 L_g^2 \eta_m^2\left\|\tilde{Q}_n\right\|^2+2 L_f^2 L_8^2\left(L_f+1\right)^2 \eta_m^2\left\|\tilde{W}_n\right\|^2 \\
& +\left(L_f^2 \eta_n^2+\frac{2\left(1+L_h \gamma_n\right)^2 \eta_n^2 L_g^2}{\gamma_n^2}\right)\left\|\tilde{Q}_n\right\|^2
\end{aligned}
$$
\end{lemma}

$$
+\frac{2\left(1+L_h \tau_n\right)^2 \eta_n^2 L_q^2\left(L_f+1\right)^2}{\gamma_n^2}\left\|\bar{W}_n\right\|^2
$$

\begin{IEEEproof}
    According to the definition in (38), we have
$$
\begin{aligned}
\tilde{Q}_{n+1} & =Q_{n+1}-f\left(W_{n+1}\right) \\
& =\bar{Q}_n+\gamma_n h\left(Q_n, W_n\right)+\gamma_n \varepsilon_n+f\left(W_n\right)-f\left(W_{n+1}\right),
\end{aligned}
$$
Which leads to
$$
\begin{aligned}
& \left\|\bar{Q}_{n+1}\right\|^2 \\ 
  &= \left\|\tilde{Q}_n+\gamma_n h\left(Q_n, W_n\right)+\gamma_n \xi_n+f\left(W_n\right)-f\left(W_{n+1}\right)\right\|^2 \\
& =\underbrace{\left\|\hat{Q}_n+\gamma_n h\left(Q_n, W_n\right)\right\|^2}_{\operatorname{Tem}_1}+\underbrace{\left\|\gamma_n \xi_n+f\left(W_n\right)-f\left(W_{n+1}\right)\right\|^2}_{\operatorname{Term}_2} \\
& +\underbrace{2\left(\tilde{Q}_n+\gamma_n h\left(Q_n, W_n\right)\right)^T\left(f\left(W_n\right)-f\left(W_{n+1}\right)\right)}_{\operatorname{Term}_4}
\end{aligned}
$$
where the second equality is due to the fact that $\|\mathbf{x}+\mathbf{y}\|^2=$ $\|\mathbf{x}\|^2+\|\mathbf{y}\|^2+2 \mathbf{x}^T \mathbf{y}$

We next analyze the conditional expectation of each term in $\left\|\bar{Q}_{n+1}\right\|^2$ on $\mathcal{F}_n$. We first focus on Term.
$$
\begin{aligned}
& \mathbb{E}\left[\operatorname{Term}_1 \mid \mathcal{F}_n\right] \\
& =\left\|\hat{Q}_k\right\|^2+2 \gamma_{\mathrm{n} n} \tilde{Q}_{\mathrm{n}}^{\top} h\left(Q_n, W_n\right)+\left\|\gamma_{\mathrm{r}} h\left(Q_n, W_n\right)\right\|^2 \\
& \stackrel{(\mathrm{al})}{=}\left\|\bar{Q}_n\right\|^2+2 \gamma_n \bar{Q}_n^{\top} h\left(Q_n, W_n\right)+ \\
 & \quad \quad \eta_n^2\left\|h\left(Q_n, W_n\right)-h\left(f\left(W_n\right), W_n\right)\right\|^2 \\
& \stackrel{(\mathrm{a} 2)}{\leq}\left\|\tilde{Q}_{\mathrm{n}}\right\|^2-2 \gamma_{\mathrm{n} n} \mu_1\left\|\tilde{Q}_n\right\|^2+L_k^2 \gamma_n^2\left\|\tilde{Q}_n\right\|^2 \\
&
\end{aligned}
$$
where (al) follows from $h\left(f\left(W_n\right), W_n\right)=0$, and (a2) holds due to the Lipschitz continuity of $h$ in Assumption 2 and $\gamma_n \hat{Q}_n^T h\left(Q_n, W_n\right) \leq-\mu_1\left\|\hat{Q}_n\right\|^2$. For Term $_2$, we have
$$
\begin{aligned}
& \mathbb{E}\left[\operatorname{Term}_2 \mid \mathcal{F}_n\right] \\
& =\mathbb{E}\left[\left\|f\left(W_n\right)-f\left(W_{n+1}\right)+\gamma_n \xi_n\right\|^2 \mid \mathcal{F}_n\right] \\
& \stackrel{(b 1)}{=} \mathbb{E}\left[\left\|f\left(W_n\right)-f\left(W_{n+1}\right)\right\|^2 \mid \mathcal{F}_n\right]+\gamma_n^2 \mathbb{E}\left[\left\|\xi_n\right\|^2 \mid \mathcal{F}_n\right] \\
& \stackrel{(b 2)}{\leq} L_f^2 \mathbb{E}\left[\left\|W_n-W_{n+1}\right\|^2 \mid \mathcal{F}_n\right]+\gamma_n^2 \Lambda \\
& =L_f^2 \mathbb{E}\left[\left\|r_n g\left(Q_n, W_n\right)\right\|^2 \mid \mathcal{F}_n\right]+\gamma_n^2 \Lambda \\
& =L_f^2 v_n^2\left\|g\left(Q_n, W_n\right)\right\|^2+\gamma_n^2 \Lambda \\
& (b 3) \\
& \leq 2 L_f^2 v_n^2\left\|g\left(Q_n, W_n\right)-g\left(f\left(W_n\right), W_n\right)\right\|^2+\gamma_n^2 \Lambda \\
& \quad+2 L_f^2 \eta_n^2\left\|g\left(f\left(W_n\right), W_n\right)-g(f(W(R)), W(R))\right\|^2 \\
& (b 4) \\
& \leq 2 L_g^2 L_f^2 \eta_n^2\left\|\bar{Q}_n\right\|^2+2 L_B^2 L_f^2 \eta_n^2\left(\left\|f\left(W W_n\right)-f(W(R))\right\|\right. \\
& \left.\quad+\left\|W_n-W(R)\right\|\right)^2+\gamma_n^2 \Lambda
\end{aligned}
$$
$\stackrel{(\text { (b) }}{\leq} 2 L_f^2 L_y^2 \eta_m^2\left\|\bar{Q}_n\right\|^2+2 L_f^2 L_y^2\left(L_f+1\right)^2 \eta_m^2\left\|\bar{W}_n\right\|^2+\gamma_n^2 \Lambda$, (53)
where (bl) is due to $\mathbb{E}\left[\xi_n \mid \mathcal{F}_n\right]=0$, (b2) is due to the Lipschitz continuity of $f$, and (b3) holds since $\left\|g\left(Q_n, W_n\right)\right\|^2 \leq 2\left\|g\left(Q_n, W_n\right)-g\left(f\left(W_n\right), W_n\right)\right\|^2+$ 
 \quad
 $2\left\|g\left(f\left(W_n\right), W_n\right) \quad-g(f(W(R)), W(R))\right\|^2 \quad$ when $g\left(f\left(W_n\right), W_n\right)=0$, (b4) and (b5) hold because of the Lipschitz continuity of $g$ and $f$. Next, we have the conditional expectation of Terms as
$$
\begin{aligned}
& \mathbb{E}\left[\operatorname{Term}_3 \mid \mathcal{F}_{\mathrm{n}}\right] \\
& \leq 2 Z\left\|\tilde{Q}_n+\gamma_n h\left(Q_n, W_n\right)\right\|-\left\|f\left(W_n\right)-f\left(W_{n+1}\right)\right\| \\
& \left.\stackrel{(c 1)}{\leq} 2 L_f \eta_m\left\|\tilde{Q}_n+\gamma_n h\left(Q_n, W_n\right)\right\| \cdot \| g\left(Q_n, W_n\right)\right) \| \\
& \leq 2 L_f \eta_m\left(1+L_k \gamma_n\right)\left\|\tilde{Q}_n\right\|\left(L_g\left\|\tilde{Q}_n\right\|+L_g\left(L_f+1\right)\left\|\hat{W}_n\right\|\right) \\
& \stackrel{(-2)}{\leq} L_f^2 \gamma_m^2\left\|\tilde{Q}_n\right\|^2+\frac{\left(1+L_h \gamma_n\right)^2 \eta_m^2}{\gamma_n^2} \\
& \cdot\left(L_g\left\|\bar{Q}_n\right\|^2+L_g\left(L_f+1\right)\left\|\bar{W}_n\right\|\right)^2 \\
& \leq\left(L_f^2 \gamma_m^2+\frac{2\left(1+L_h \gamma_n\right)^2 \gamma_m^2 L_Q^2}{\gamma_m^2}\right)\left\|\tilde{Q}_m\right\|^2 \\
& +\frac{2\left(1+L_h \gamma_n\right)^2 \eta_m^2 L_g^2\left(L_f+1\right)^2}{\gamma_m^2}\left\|\hat{W}_n\right\|^2, \\
&
\end{aligned}
$$
where (c1) is due to the Lipschitz continuity of $f$ and (c2) holds because $2 \mathbf{x}^T \mathbf{y} \leq \beta\|\mathbf{x}\|^2+1 / \beta\|\mathbf{y}\|^2, \forall \beta>0$. Since $\mathbb{E}\left[\operatorname{Term}_4 \mid \mathcal{F}_n\right]=0$, combining all terms leads to the final expression in (51).
\end{IEEEproof}

\begin{lemma}
    Consider the iterates $\left\{Q_n\right\}$ and $\left\{W_n\right\}$ generated by (30)-(31). Under Assumptions 2-5, for any $n \geq 0$, we have
$$
\begin{gathered}
\mathbb{E}\left[\left\|\tilde{W}_{n+1}\right\|^2 \mid \mathcal{F}_n\right] \leq\left\|\hat{W}_n\right\|^2 ( 1 -2 \eta_n \mu_2)   +2 \eta_n^2 L_g^2\left\|\bar{Q}_n\right\|^2 \\
+2 \eta_m^2 L_g^2\left(L_k+1\right)^2\left\|\hat{W}_n\right\|^2 .
\end{gathered}
$$
\end{lemma}

Proof: According to (38), we have $\tilde{W}_{n+1}=W_{n+1}-$ $W(R)=\dot{W}_n+r_n g\left(Q_n, W_n\right)$, which leads to
$$
\begin{aligned}
& \mathbb{E}\left[\left\|\bar{W}_{n+1}\right\|^2 \mid \mathcal{F}_n\right] \\
& =\left\|\hat{W}_n\right\|^2+2 \eta_n \tilde{W}_n^T g\left(Q_n, W_n\right)+\eta_n^2\left\|g\left(Q_n, W_n\right)\right\|^2 \\
& \stackrel{(d 1)}{\leq}\left\|\tilde{W}_n\right\|^2-2 \eta_n \mu_2\left\|\tilde{W}_n\right\|^2+\eta_n^2 \|\left. g\left(Q_n, W_n\right)\right|^2 \\
& \stackrel{(d 2)}{\leq}\left\|\tilde{W}_n\right\|^2-2 \eta_n \mu_2\left\|\tilde{W}_n\right\|^2 \\
& \quad+2 \eta_n^2 L_g^2\left\|\tilde{Q}_n\right\|^2+2 \eta_n^2 L_g^2\left(L_f+1\right)^2\left\|\tilde{W}_n\right\|^2,
\end{aligned}
$$
where $(\mathrm{d} 1)$ is due to $2 \eta_n \tilde{W}_n^T\left(g\left(Q_n, W_n\right)\right) \leq-2 \mu_2\left\|\tilde{W}_n\right\|^2$ and (d2) is due to (b3)-(b5).

\begin{theorem}
    
\end{theorem}
}

\section{Conclusion}
\label{sec:conclusion}


We considered optimal caching of contents with varying popularity. We posed the problem as a discounted cost Markov decision problem and showed that it is a an instance of RMAB. We provided a condition under which its arms is indexable and also demonstarted the performance of the Whittle index policy. 

Our future work entails deriving performance bounds of the Whittle index policy. We plan to consider more general popularity dynamics. We also aspire to augment RMAB model with reinforcement learning to deal with the scenarios where the popularity dynamics might be unknown.
\remove{
\begin{enumerate}
    \item In our work, we have considered random walk-like state transitions, but more general transition structure can be considered in future.
    \item Since Whittle index policy is not an optimal policy in general, we can find a performance bound for this policy.
    \item In our case, we assume that underlying distribution is known, in reality this information is not available. Hence there is a need to develop an efficient online learning algorithm to make
optimal content caching decisions dynamic when these underlying distributions are unknown.
    \item In our work, we have considered a single cache problem, but in reality, there will be a network of caches. Hence we need to consider a more general caching network in future. 
\end{enumerate}
}

\bibliographystyle{IEEEtran}
\bibliography{Citation}

\remove{
\section{Proof of Threshold Policy when $p^1 \approx p^0$ and $q^1 \approx q^0$}

For the threshold policy to be optimal, we need to show that $Q^n_{\lambda}(a,r,0)-Q^n_{\lambda}(a,r,1)$ are non-decreasing in $r$ for $a = 0,1$ i.e. we need to show that 
\begin{align*}
    \bigg\{Q^n_{\lambda}(a,r+1,0)-Q^n_{\lambda}(a,r+1,1)\bigg\}  - 
    \bigg\{Q^n_{\lambda}(a,r,0)-Q^n_{\lambda}(a,r,1)\bigg\} \geq 0
\end{align*}

We assume 
\begin{equation} \label{small-difference}
    \delta = 2 (\max \{ | p^0 - p^1| , | q^0 - q^1|\} )
\end{equation}
Then from the definition of $Q_{\lambda}^n( a,r,0) $ and  $Q_{\lambda}^n( a,r,1) $, we have
\begin{align*}
     \bigg\{Q^n_{\lambda}(a,r+1,0)-Q^n_{\lambda}(a,r+1,1)\bigg\}
    - \bigg\{Q^n_{\lambda}(a,r,0)-Q^n_{\lambda}(a,r,1)\bigg\}  \\
\end{align*}
\begin{align*}
    &= T^0 C(r+1) - T^0 C(r) +
    \beta [ T^0 V_{\lambda}^n( 0, r+1) - 
    T^1 V_{\lambda}^n( 1, r+1) -   \\
    &\quad T^0 V_{\lambda}^n( 0, r) +T^1 V_{\lambda}^n( 1, r) ]  \\
    &= T^0 C(r+1) - T^0 C(r) + \beta \bigg\{q^0 (V_{\lambda}^n ( 0,r) -  V_{\lambda}^n ( 1,r)   )+ \\
    &\quad (1 - p^0 - q^0) ( V_{\lambda}^n ( 0,r+1) -  V_{\lambda}^n ( 1,r+1) ) + p^0(V_{\lambda}^n ( 0,r+2) -  V_{\lambda}^n ( 1,r+2) ) \\
    &\quad + (q^0 - q^1) V_{\lambda}^n (1,r) + ( p^0 - p^1) V_{\lambda}^n ( 1, r+2) + (p^1 + q^1 - p^0 - q^0) V_{\lambda}^n (1,r+1) - \\
    &\quad q^0 (V_{\lambda}^n ( 0,r-1) -  V_{\lambda}^n ( 1,r-1)   ) - (1 - p^0 - q^0) ( V_{\lambda}^n ( 0,r) -  V_{\lambda}^n ( 1,r) ) \\
    &\quad - p^0(V_{\lambda}^n ( 0,r+1) -  V_{\lambda}^n ( 1,r+1) ) -
    (q^0 - q^1) V_{\lambda}^n (1,r-1) - \\
    &\quad  (p^1 + q^1 - p^0 - q^0) V_{\lambda}^n (1,r) - ( p^0 - p^1) V_{\lambda}^n ( 1, r+1) \bigg \} \\
    &\ineqaa T^0 C(r+1) - T^0 C(r) + \beta \bigg\{ (q^0 - q^1) (V_{\lambda}^n (1,r) - V_{\lambda}^n (1,r-1) )+ \\
    &\quad ( p^0 - p^1) (V_{\lambda}^n ( 1, r+2) - V_{\lambda}^n ( 1, r+1) ) + \\
    &\quad (p^1 + q^1 - p^0 - q^0) (V_{\lambda}^n (1,r+1) - V_{\lambda}^n (1,r) ) \bigg\} \\
    &\ineqbb T^0 C(r+1) - T^0 C(r) - \beta \delta [ V_{\lambda}^n ( 1, r+2) - V_{\lambda}^n (1,r-1) ]
\end{align*}

where (a) due to ignoring the positive terms (i.e. we assumed that $ V_{\lambda}^n (0,r) - V_{\lambda}^n (1,r)  $ are increasing in $r$, but not yet proved) and (b) due to  Equation~\ref{small-difference} 

In the above equation,  we need to bound $V_{\lambda}^n ( 1, r+2) - V_{\lambda}^n (1,r-1) $. We will find an upper bound on this using induction.
Let us bound $V_{\lambda}^1 ( 1, r+2) - V_{\lambda}^1 (1,r-1) $. We assume that 
$V_{\lambda}^0 ( a, r) =  0 \quad \forall a ,r $
\begin{align}
    V_{\lambda}^1 ( 1, r+2) - V_{\lambda}^1 (1,r-1) &= \min \{ \lambda , T^0 C(r+2)\} - \min \{ \lambda , T^0 C((r-1)^+)\} \nonumber \\ 
    &\leq  T^0 C(r+2) - T^0 C((r-1)^+) \nonumber  \\
    &= p^0 [  C(r+3) - C(r)] + q^0 ( C(r+1) - C((r-2)^+)) +  \nonumber \\
     &\quad  (1 -p^0 - q^0) (C(r+2) - C((r-1)^+) )  \nonumber \\
    &\leq C(r+1) - C((r-2)^+)  \label{val-bound}
\end{align}
Now let us bound$V_{\lambda}^2 ( 1, r+2) - V_{\lambda}^2 (1,r-1) $.
\begin{align*}
    V_{\lambda}^2 ( 1, r+2) - V_{\lambda}^2 (1,r-1) = \min \{ \lambda + \beta T^1 V_{\lambda}^1 (1, r+2), T^0 C(r+2) +
     \beta T^0 V_{\lambda}^1 ( 0, r+2) \} - \\
    \min \{ \lambda + \beta T^1 V_{\lambda}^1 (1, r-1) , T^0 C(r-1) +
     \beta T^0 V_{\lambda}^1 ( 0, r-1)  \} 
\end{align*}

From Lemma~\ref{lem:value-fun-concave} and Assumption~\ref{assum:miss-hit-cost-concave}, we know 
\begin{align*}
    \lambda + \beta T^1 V_{\lambda}^1 (1, r+2) \geq \lambda + \beta T^1 V_{\lambda}^1 (1, r-1) \\
    T^0 C(r+2) + \beta T^0 V_{\lambda}^1 ( 0, r+2) \geq T^0 C(r-1) +
     \beta T^0 V_{\lambda}^1 ( 0, r-1) 
\end{align*}

\begin{enumerate}
        \item \textbf{Case 1:}  $ \lambda + \beta T^1 V_{\lambda}^1 (1, r+2) \leq T^0 C(r+2) + \beta T^0 V_{\lambda}^1 ( 0, r+2) $ and $T^0 C(r-1) +
     \beta T^0 V_{\lambda}^1 ( 0, r-1)  \geq \lambda + \beta T^1 V_{\lambda}^1 (1, r-1)$

     \begin{align*}
          V_{\lambda}^2 ( 1, r+2) - V_{\lambda}^2 (1,r-1) &= \lambda + \beta T^1 V_{\lambda}^1 (1, r+2) - \lambda - \beta T^1 V_{\lambda}^1 (1, r-1) \\
          &= \beta ( T^1 V_{\lambda}^1 (1, r+2) -T^1 V_{\lambda}^1 (1, r-1)) \\
          &\ineqa \beta T^1 ( C(r+1) - C((r-2)^+) ) \\
          &= \beta\bigg\{ p^1 C(r+2) + q^1 C(r) + (1- p^1 - q^1) C(r+1) \bigg\} - \\
          &\quad \beta\bigg\{ p^1 C((r-1)^+) + q^1 C((r-3)^+) + \\
          &\quad (1- p^1 - q^1) C((r-2)^+) \bigg\} \\
          &\ineqb \beta ( C(r) - C((r-3)^+) )
     \end{align*}
     where (a) due to Equation \ref{val-bound} and (b) due to Assumption~\ref{assum:miss-hit-cost-concave}
     \item  \textbf{Case 2:} $ \lambda + \beta T^1 V_{\lambda}^1 (1, r+2) \leq T^0 C(r+2) + \beta T^0 V_{\lambda}^1 ( 0, r+2) $ and $T^0 C(r-1) +
     \beta T^0 V_{\lambda}^1 ( 0, r-1)  \leq \lambda + \beta T^1 V_{\lambda}^1 (1, r-1)$
     
     \begin{align*}
         V_{\lambda}^2 ( 1, r+2) - V_{\lambda}^2 (1,r-1) &= \lambda + \beta T^1 V_{\lambda}^1 (1, r+2) - T^0 C(r-1) -
         \beta T^0 V_{\lambda}^1 ( 0, r-1) \\
         &\ineqa T^0 C(r+2) + \beta T^0 V_{\lambda}^1 ( 0, r+2)  - T^0 C(r-1) - \\
         &\quad \beta T^0 V_{\lambda}^1 ( 0, r-1) \\
         &= [ T^0 C(r+2) -T^0 C(r-1)  ] + \beta[ T^0 V_{\lambda}^1 ( 0, r+2)  - \\
         &\quad T^0 V_{\lambda}^1 ( 0, r-1)] \\
         &\leq C(r+1) - C((r-2)^+) + \beta T^0 [C(r+1) - C((r-2)^+)] \\
         &\leq C(r+1) - C((r-2)^+) + \beta[ C(r) - C((r-3)^+)] \\
         & \leq \bigg( C(r) - C((r-3)^+) \bigg) (1 + \beta)
     \end{align*}
     where (a) is due to the inequality in \textbf{Case 2}
\end{enumerate}
Similarly, if we go through all the possible cases, we can show that 
\begin{equation}
    V_{\lambda}^2 ( 1, r+2) - V_{\lambda}^2 (1,r-1)  \leq C(r) - C((r-3)^+) ( 1 + \beta)
\end{equation}
Similarly,
\begin{equation}
    V_{\lambda}^3 ( 1, r+2) - V_{\lambda}^3 (1,r-1)  \leq C((r-1)^+) - C((r-3)^+) ( 1 + \beta + \beta^2)
\end{equation}

Hence we can write a bound on $V_{\lambda}^n ( 1, r+2) - V_{\lambda}^n (1,r-1) $ as follows
\begin{equation}
    V_{\lambda}^n ( 1, r+2) - V_{\lambda}^n (1,r-1) \leq \bigg\{ C((r+2-n)^+) -  C((r-1-n)^+)   \bigg\}  \sum_{i=0}^{n-1} \beta^i  
\end{equation}

For the threshold policy to be optimal, we need to show the folllowing
\begin{align*}
    T^0 C(r+1) - T^0 C(r) - \beta \delta [ V_{\lambda}^n ( 1, r+2) - V_{\lambda}^n (1,r-1) ] \geq 0
\end{align*}

\begin{align*}
     T^0 C(r+1) - T^0 C(r) - \beta \delta [ V_{\lambda}^n ( 1, r+2) - V_{\lambda}^n (1,r-1) ] \\
     \geq  T^0 C(r+1) - T^0 C(r) - \beta \delta \bigg\{ C((r+2-n)^+) -  C((r-1-n)^+)   \bigg\}  \sum_{i=0}^{n-1} \beta^i
\end{align*}

For the above equation to be greater than zero, we need to have 
\begin{equation}
    \delta \leq \frac{C(r+2) - C(r+1)}{\bigg\{ C((r+2-n)^+) -  C((r-1-n)^+)   \bigg\}  \sum_{i=1}^{n} \beta^i}
\end{equation}

}  

\appendix

\section{Proof of  Lemma \ref{lem:value-fun-concave}} \label{Proof:Lemma 1}

We prove the claims via induction on $n$. Recall that
$V_{\lambda}^0(a,r) = 0$ for all $r \in \mathbb{Z}_+$ and $a = 0,1$. So,
\begin{equation*}
V^1_{\lambda}(a,r) = \min_{a'\in\{0,1\}} Q^1_{\lambda}(a,r,a') =  \min\{T^0C(r),\lambda+d(1-a)\}
\end{equation*}
where $T^0C(0) = p^0 C(1) + (1-p^0) C(0)$ and $T^0C(r) = p^0 C(r+1) + q^0C(r-1) + (1-p^0-q^0) C(r)$ for all $r \geq 1$. It can be easily verified that $T^0C(r)$ is concave under Assumptions~\ref{assum:miss-hit-cost-concave} and~\ref{assum:transition-prob}. In particular, $T^0C(1)-T^0C(0) \geq T^0C(2)-T^0C(1)$ under Assumption~\ref{assum:transition-prob}, and $T^0C(r)-T^0C(r-1) \geq T^0C(r+1)-T^0C(r)$ for all $r \geq 2$ if $C(r)$ is concave as stated in Assumption~\ref{assum:miss-hit-cost-concave}. 
Therefore, $V^1_{\lambda}(a,r)$ being the minimum of a constant and a concave function, is also concave in $r$ for $a = 0,1$. Since $C(r)$ is non-decreasing, so are $T^0C(r)$ and $V^1_{\lambda}(a,r)$ for $a = 0,1$.  
Now suppose $V^n_{\lambda}(a,r)$ are concave and non-decreasing for $a = 0,1$ for some $n \geq 1$. Then, $T^{a'}V^n_{\lambda}(a',r)$ is concave and non-decreasing in $r$ for $a' = 0,1$, and consequently, \begin{align*}
Q^{n+1}_{\lambda}(a,r,a') &= c_{\lambda}(a,r,a')+\beta T^{a'}V^n_{\lambda}(a',r) \\
& = \lambda a'+ d a'(1-a) + (1-a')T^{a'}C(r) \\
&\ \ \ + \beta T^{a'}V^n_{\lambda}(a',r),
\end{align*}
being the sum of concave and non-decreasing functions is also concave and non-decreasing.
Finally,
\[V^{n+1}_{\lambda}(a,r) = \min_{a'\in\{0,1\}} Q^n_{\lambda}(a,r,a'), 
\]
Being minimum of two concave and non-decreasing functions is also concave and non-decreasing. This completes the induction step.

\section{Proof of  Lemma \ref{lem:value-fun-properties}} \label{Proof:Lemma 2}

We prove the claims via induction on $n$. 
Recall that $V_{\lambda}^0(a,r) = 0$ for all $r \in \mathbb{Z}_+$ and $a = 0,1$. So,
\begin{align*}
\lefteqn{Q^1_{\lambda}(a,r,0)-Q^1_{\lambda}(a,r,1)} \\
&= c_{\lambda}(a,r,0) - c_{\lambda}(a,r,1) \\
&=T^{0}C(r) - \lambda a - d(1-a)
\end{align*}
which is non-decreasing in $r$ because $C(r)$ is non-decreasing in $r$. For the induction step, we prove the following two assertions.

\noindent
$(a)$ {\it If $Q^n_{\lambda}(a,r,0)-Q^n_{\lambda}(a,r,1)$ for $a = 0,1$ are increasing in $r$, so is $V_{\lambda}^n(0,r) -  V_{\lambda}^n(1,r)$.} Observe that
\begin{align*}
\lefteqn{V_{\lambda}^n(0,r) -  V_{\lambda}^n(1,r)} \\
& = \min_{a'}Q^n_{\lambda}(0,r,a') - \min_{a'}Q^n_{\lambda}(1,r,a') \\
& = \min \left\{\max_{a'} \left(Q^n_{\lambda}(0,r,0) - Q^n_{\lambda}(1,r,a')\right) \right., \\
& \ \ \ \ \ \ \ \ \ \ \left. \max_{a'} \left(Q^n_{\lambda}(0,r,1) - Q^n_{\lambda}(1,r,a')\right) \right\}\\
& = \min \left\{\max\{0,Q^n_{\lambda}(1,r,0) - Q^n_{\lambda}(1,r,1)\}, \right. \\
& \ \ \ \ \ \ \ \ \ \ \left. \max\{ Q^n_{\lambda}(0,r,1) - Q^n_{\lambda}(1,r,0),d\}\right\}.
\end{align*}
Let us consider the following two cases separately.
\begin{enumerate}
\item $Q^n_{\lambda}(1,r,0) - Q^n_{\lambda}(1,r,1) \geq 0:$ In this case $Q^n_{\lambda}(0,r,1) - Q^n_{\lambda}(1,r,0) \leq d$. Hence 
\[
V_{\lambda}^n(0,r) -  V_{\lambda}^n(1,r) = \min\{Q^n_{\lambda}(1,r,0) - Q^n_{\lambda}(1,r,1),d\}
\]
Which is non-decreasing in $r$ from the induction hypothesis.
\item $Q^n_{\lambda}(1,r,0) - Q^n_{\lambda}(1,r,1) < 0:$ In this case $Q^n_{\lambda}(0,r,1) - Q^n_{\lambda}(1,r,0) > d$, and so, 
$V_{\lambda}^n(0,r) -  V_{\lambda}^n(1,r) = 0$,
which is trivially non-decreasing in $r$.
\end{enumerate}

\noindent
$(b)$ {\it If $V_{\lambda}^n(0,r) -  V_{\lambda}^n(1,r)$ is increasing in $r$, so are $Q^{n+1}_{\lambda}(a,r,0)-Q^{n+1}_{\lambda}(a,r,1)$ for $a = 0,1$.} Observe that
\begin{align*}
\lefteqn{Q^{n+1}_{\lambda}(a,r,0)-Q^{n+1}_{\lambda}(a,r,1)} \\
& = c_{\lambda}(a,r,0)+\beta T^{0}V^n{\lambda}(0,r) - c_{\lambda}(a,r,1) - \beta T^{1}V^n_{\lambda}(1,r) \\
&= T^{0}C(r) - \lambda - d(1-a)
 + \beta\left(T^{0}V^n_{\lambda}(0,r) - T^{1}V^n_{\lambda}(1,r) \right).
\end{align*}
As observed earlier $T^{0}C(r)$ in non-decreasing in $r$. We argue that $T^{0}V^n_{\lambda}(0,r) - T^{1}V^n_{\lambda}(1,r)$ is also non-decreasing in $r$ establishing the desired claim. Towards this notice that for any $r \in \mathbb{Z}_+$, 
\begin{align}
 \lefteqn{T^{1}V^n_{\lambda}(1,r+1) - T^{1}V^n_{\lambda}(1,r)} \nonumber  \ \ \ \\
 & \leq T^{0}V^n_{\lambda}(1,r+1) - T^{0}V^n_{\lambda}(1,r) \nonumber \\
 & \leq T^{0}V^n_{\lambda}(0,r+1) - T^{0}V^n_{\lambda}(0,r) \label{eqn:value-func-ineq}
 \end{align}
where the first inequality follows from Assumption~\ref{assum:popularity-stoch-order} and monotonicity of $V^n_{\lambda}(a,r+1)-V^n_{\lambda}(a,r)$  as shown in Lemma~\ref{lem:value-fun-concave}; these allow us to apply~\cite[Lemma~4.7.2]{puterman2014markov}. The second inequality follows from the induction hypothesis. Finally,~\eqref{eqn:value-func-ineq} can be written as
\[
T^{0}V^n_{\lambda}(0,r) - T^{1}V^n_{\lambda}(1,r) \leq T^{0}V^n_{\lambda}(1,r+1) - T^{1}V^n_{\lambda}(1,r+1),
\]
i.e., $T^{0}V^n_{\lambda}(0,r) - T^{1}V^n_{\lambda}(1,r)$ is non-decreasing in $r$ as desired.

Parts~$(a)$ and $(b)$ together complete the induction step, hence the proof.

\section{Proof of Theorem ~\ref{thm:threshold-policy}} \label{Proof: Theorem 1}

Recall from Lemma~\ref{lem:value-fun-properties} that $Q^n_{\lambda}(a,r,0)-Q^n_{\lambda}(a,r,1)$ are non-decreasing in $r$ for $a = 0,1$. Define
\begin{align*}
r^a(\lambda) = \min\{r: Q^n_{\lambda}(a,r,0) - Q^n_{\lambda}(a,r,1) > 0\} \label{eqn:policy-threshold}
\end{align*}
and
\begin{align}
u^\ast(a,r) = 
\begin{cases}
0 \text{ if } r \leq r^a(\lambda) \\
1 \text{ otherwise.}
\end{cases}
\end{align}
for $a = 0,1$. Clearly, 
\[
u^\ast(a,r) = \argmin_{a'\in \{0,1\}} Q_{\lambda}(a,r,a'). 
\]
and so, $u^\ast:\{0,1\} \times \mathbb{Z}_+ \to \{0,1\}$ is an optimal policy. Furthermore,
\[
Q^n_{\lambda}(0,r,0) - Q^n_{\lambda}(0,r,1) = Q^n_{\lambda}(1,r,0) - Q^n_{\lambda}(1,r,1) - d,
\]
implying that $r^0(\lambda) \geq r^1(\lambda)$.
\remove{
Hence, the desired claim can be equivalently expressed as follows. There exist $r^0, r^1$ such that, for $a = 0,1$, $Q_{\lambda}(a,r,0) \leq Q_{\lambda}(a,r,1)$ if $r \leq r^a$ and $Q_{\lambda}(a,r,0) > Q_{\lambda}(a,r,1)$ otherwise. Moreover, a sufficient condition for existence of such $r_0,r_1$ is $Q_{\lambda}(a,r,0)-Q_{\lambda}(a,r,1)$ be non-decreasing in $r$ for $a = 0,1$.
We will prove that $Q^n_{\lambda}(a,r,0)-Q^n_{\lambda}(a,r,1)$ are non-decreasing in $r$ for $a = 0,1$ and for all $n \geq 1$. Taking $n \to \infty$ establishes the desired sufficient condition. 
Observe that, for any $a \in \{0,1\}$ and $r \in \mathbb{Z}_+$,
\begin{align*}
\lefteqn{Q^{n+1}_{\lambda}(a,r,0)-Q^{n+1}_{\lambda}(a,r,1)} \\
&=  c_{\lambda}(a,r,0)+\beta T^{0}V^n{\lambda}(0,r) - c_{\lambda}(a,r,1) - \beta T^{1}V^n_{\lambda}(1,r) \\
&= T^{0}C(r) - \lambda - d(1-a)
 + \beta\left(T^{0}V^n_{\lambda}(0,r) - T^{1}V^n_{\lambda}(1,r) \right).
\end{align*}
$T^{0}C(r)$ is non-decreasing in $r$ because $C(r)$ is non-decreasing in $r$ 
and $T^{0}V^n_{\lambda}(0,r) - T^{1}V^n_{\lambda}(1,r)$ is non-decreasing in $r$ from Lemma~\ref{lem:value-fun-properties}. Hence,   $Q^{n+1}_{\lambda}(a,r,0)-Q^{n+1}_{\lambda}(a,r,1)$ is also  non-decreasing in $r$ which completes the proof.
}

\section{Proof of  Lemma \ref{lem:value-fun-lambda} } \label{Proof: Lemma 3}

We prove the claims via induction on $n$. 
Recall that $V_{\lambda}^0(a,r) = 0$ for all $r \in \mathbb{Z}_+$ and $a = 0,1$. So,
\begin{align*}
\lefteqn{Q^1_{\lambda}(a,r,1)-Q^1_{\lambda}(a,r,0)} \\
&= c_{\lambda}(a,r,1) - c_{\lambda}(a,r,0) \\
&=\lambda a + d(1-a) - T^{0}C(r) 
\end{align*}
which is non-decreasing in $\lambda$. For the induction step we prove the following three assertions.

\noindent
$(a)$ {\it If $Q^n_{\lambda}(a,r,1)-Q^n_{\lambda}(a,r,0)$ for $a = 0,1$ are non-decreasing in $\lambda$, so is $V_{\lambda}^n(1,r) -  V_{\lambda}^n(0,r)$.} Observe that
\begin{align*}
\lefteqn{V_{\lambda}^n(1,r) -  V_{\lambda}^n(0,r)} \\
& = \min_{a'}Q^n_{\lambda}(1,r,a') - \min_{a'}Q^n_{\lambda}(0,r,a') \\
& = \min \left\{\max_{a'} \left(Q^n_{\lambda}(1,r,1) - Q^n_{\lambda}(0,r,a')\right) \right., \\
& \ \ \ \ \ \ \ \ \ \ \left. \max_{a'} \left(Q^n_{\lambda}(1,r,0) - Q^n_{\lambda}(0,r,a')\right) \right\}\\
& = \min \left\{\max\{-d,Q^n_{\lambda}(1,r,1) - Q^n_{\lambda}(1,r,0)\}, \right. \\
& \ \ \ \ \ \ \ \ \ \ \left. \max\{ Q^n_{\lambda}(1,r,0) - Q^n_{\lambda}(0,r,1),0\}\right\}.
\end{align*}
Let us consider the following two cases separately.
\begin{enumerate}
\item $Q^n_{\lambda}(1,r,0) - Q^n_{\lambda}(0,r,1) \leq 0:$ In this case $Q^n_{\lambda}(1,r,1) - Q^n_{\lambda}(1,r,0) \geq -d$. Hence
\[
V_{\lambda}^n(1,r) -  V_{\lambda}^n(0,r) = \min\{Q^n_{\lambda}(1,r,1) - Q^n_{\lambda}(1,r,0),0\}
\]
which is non-decreasing in $\lambda$ from the induction hypothesis.
\item $Q^n_{\lambda}(1,r,0) - Q^n_{\lambda}(0,r,1) < 0:$ In this case $Q^n_{\lambda}(1,r,1) - Q^n_{\lambda}(1,r,0) < -d$, and so, $V_{\lambda}^n(0,r) -  V_{\lambda}^n(1,r) = -d$,
which is trivially non-decreasing in $\lambda$.
\end{enumerate}

\noindent
$(b)$ {\it If $V_{\lambda}^{n-1}(1,r) -  V_{\lambda}^{n-1}(0,r)$, $V_{\lambda}^{n-1}(a,r+1) -  V_{\lambda}^{n-1}(a,r)$ and $Q^n_{\lambda}(a,r,1)-Q^n_{\lambda}(a,r,0)$ for $a = 0,1$ are non-decreasing in $\lambda$, so is $V_{\lambda}^n(a,r+1) -  V_{\lambda}^n(a,r)$.} 
We consider the following four cases separately.
\begin{enumerate}
\item $Q^n_{\lambda}(a,r+1,0) < Q^n_{\lambda}(a,r+1,1)$ and $Q^n_{\lambda}(a,r,0) < Q^n_{\lambda}(a,r,1)$: In this case
\begin{align*}
\lefteqn{V_{\lambda}^n(a,r+1) -  V_{\lambda}^n(a,r)} \\
& = Q^n_{\lambda}(a,r+1,0) - Q^n_{\lambda}(a,r,0) \\
& = T^0 C(r+1) - T^0 C(r) \\
& \ \ \ +  \beta\left(T^0 V_\lambda^{n-1}(0,r+1) - T^0 V_\lambda^{n-1}(0,r) \right)
\end{align*}
which is non-decreasing in $\lambda$ since $V_\lambda^{n-1}(0,r+1) - V_\lambda^{n-1}(0,r)$ 
is non-decreasing in $\lambda$  from the induction hypothesis.
\item $Q^n_{\lambda}(a,r+1,1) < Q^n_{\lambda}(a,r+1,0)$ and $Q^n_{\lambda}(a,r,1) < Q^n_{\lambda}(a,r,0)$: In this case
\begin{align*}
\lefteqn{V_{\lambda}^n(a,r+1) -  V_{\lambda}^n(a,r)} \\
& = Q^n_{\lambda}(a,r+1,1) - Q^n_{\lambda}(a,r,1) \\
& = \beta\left(T^1 V_\lambda^{n-1}(1,r+1) - T^1 V_\lambda^{n-1}(1,r) \right)
\end{align*}
which is non-decreasing in $\lambda$ since $V_\lambda^{n-1}(1,r+1) - V_\lambda^{n-1}(1,r)$ 
is non-decreasing in $\lambda$  from the induction hypothesis.
\item $Q^n_{\lambda}(a,r+1,1) < Q^n_{\lambda}(a,r+1,0)$ and $Q^n_{\lambda}(a,r,0) < Q^n_{\lambda}(a,r,1)$: In this case
\begin{align*}
\lefteqn{V_{\lambda}^n(a,r+1) -  V_{\lambda}^n(a,r)} \\ 
& = Q^n_{\lambda}(a,r+1,1) - Q^n_{\lambda}(a,r,0) \\
& = Q^n_{\lambda}(a,r+1,1) - Q^n_{\lambda}(a,r,1) \\
& \ \ \ + Q^n_{\lambda}(a,r,1) - Q^n_{\lambda}(a,r,0) \\
& = \beta\left(T^1 V_\lambda^{n-1}(1,r+1) - T^1 V_\lambda^{n-1}(1,r) \right) \\
& \ \ \ +  Q^n_{\lambda}(a,r,1) - Q^n_{\lambda}(a,r,0)
\end{align*}
which is non-decreasing in $\lambda$ since both $V_\lambda^{n-1}(1,r+1) - V_\lambda^{n-1}(1,r)$ 
and $Q^n_{\lambda}(a,r,1) - Q^n_{\lambda}(a,r,0)$ for $a = 0,1$ are non-decreasing in $\lambda$  from the induction hypothesis.

\item $Q^n_{\lambda}(a,r+1,0) < Q^n_{\lambda}(a,r+1,1)$ and $Q^n_{\lambda}(a,r,1) < Q^n_{\lambda}(a,r,0)$: This case is not feasible as it violates the fact that $Q^n_{\lambda}(a,r,0) - Q^n_{\lambda}(a,r,1)$ is non-decreasing in $r$; see Lemma~\ref{lem:value-fun-properties}.
\end{enumerate}

Hence we see that in all the feasible cases $V_{\lambda}^n(a,r+1) -  V_{\lambda}^n(a,r)$ is non-decreasing in $\lambda$ as desired.

\noindent
$(c)$ {\it If $V_{\lambda}^n(1,r) -  V_{\lambda}^n(0,r)$ and $V_{\lambda}^n(a,r+1) -  V_{\lambda}^n(a,r)$ for $a = 0,1$ are non-decreasing in $\lambda$, so are $Q^{n+1}_{\lambda}(a,r,1)-Q^{n+1}_{\lambda}(a,r,0)$ for $a = 0,1$.} Observe that
\begin{align*}
\lefteqn{Q^{n+1}_{\lambda}(a,r,1)-Q^{n+1}_{\lambda}(a,r,0)} \\
& = c_{\lambda}(a,r,1)+\beta T^{1}V^n{\lambda}(1,r) - c_{\lambda}(a,r,0) - \beta T^{0}V^n_{\lambda}(0,r) \\
&= \lambda + d(1-a) -T^{0}C(r) 
+ \beta\left(T^{1}V^n_{\lambda}(1,r) - T^{0}V^n_{\lambda}(0,r) \right).\\
&= \lambda + d(1-a) -T^{0}C(r) 
+ \beta\left(T^{1}V^n_{\lambda}(1,r) - T^{1}V^n_{\lambda}(0,r) \right) \\
& \ \  \ + \beta\left(T^{1}V^n_{\lambda}(0,r) -  T^{0}V^n_{\lambda}(0,r) \right))
\end{align*}
We argue that both $T^{1}V^n_{\lambda}(1,r) - T^{1}V^n_{\lambda}(0,r)$ and $T^{1}V^n_{\lambda}(0,r) -  T^{0}V^n_{\lambda}(0,r)$ are non-decreasing in $\lambda$ establishing the desired claim. 
\begin{enumerate}
\item To show that $T^{1}V^n_{\lambda}(1,r) - T^{1}V^n_{\lambda}(0,r)$ is non-decreasing in $\lambda$, it is enough to show that $V^n_{\lambda}(1,r) - V^n_{\lambda}(0,r)$ is non-decreasing in $\lambda$ for all $r \in \mathbb{Z}_+$. The latter assertion holds from the induction hypothesis.

\item $T^{1}V^n_{\lambda}(0,r) -  T^{0}V^n_{\lambda}(0,r)$ being non-decreasing in $\lambda$ is equivalent to 
\[
T^{1}V^n_{\lambda'}(0,r) - T^{1}V^n_{\lambda}(0,r) \geq T^{0}V^n_{\lambda'}(0,r) - T^{0}V^n_{\lambda}(0,r)
\]
for all $\lambda' \geq \lambda$. The letter assertion holds if $V^n_{\lambda'}(0,r) - V^n_{\lambda}(0,r)$ is non-decreasing in $r$ which in turn is equivalent to $V^n_{\lambda}(0,r+1) - V^n_{\lambda}(0,r)$ being non-decreasing in $\lambda$.   The last assertion holds from the induction hypothesis.
\end{enumerate}

Parts~$(a)$, $(b)$ and $(c)$ together complete the induction step and hence the proof.

\section{Proof of Theorem ~\ref{thm:indexability}} \label{Proof: Theorem 2}

From Lemma~\ref{lem:value-fun-lambda}, $Q^n_{\lambda}(a,r,1)-Q^n_{\lambda}(a,r,0)$ are non-decreasing in $\lambda$ for $a = 0,1$ and for all $n \geq 1$. Hence, taking $n \to \infty$, $Q_{\lambda}(a,r,1)-Q_{\lambda}(a,r,0)$ is also non-decreasing in $\lambda$. Following similar arguments as in the proof of Theroem~\ref{thm:threshold-policy}, it implies that each arm of the RMAB formulation of the content caching problem is indexable as required.

\section{Proof of Lemma \ref{lem:modified-indexability}} \label{Proof: Lemma 4}
\textbf{Case 1:} $p^0 + 2q^0 < 1$

In this case, we will show that $\hat{C}:\mathbb{Z}_+ \to \mathbb{R}_+$ satisfies all  three properties.

From the Table \ref{table:modified-costs}, we can see that 
\begin{equation}
     \hat{C}(1) = C(1)  \label{New C1}
\end{equation}
\begin{align}
     \hat{C}(0) &= \min\Bigg\{ C(0) ,C(1)  - \nonumber \\
    & \Bigg[\frac{p^0( ( C(2) - C(1))-(C(3) - C(2)) )}{  p^0 + 2 q^0 -1}  \nonumber \\
      &- \frac{( 1-p^0-q^0 )( C(2) - C(1))}{ p^0 + 2 q^0 -1} \Bigg] \Bigg\}
\end{align}

\textbf{Property 1:} $\hat{C}:\mathbb{Z}_+ \to \mathbb{R}_+$ is non-decreasing

To show that   above choices of $\hat{C}(0)$ and $\hat{C}(1) $ satisfies property 1, it is enough to show that $\hat{C}(0) \leq \hat{C}(1) \leq C(2)$ because $\hat{C}(r) = C(r), r \geq 2$ is non-decreasing due to assumption \ref{assum:miss-hit-cost-concave}.

From the choice of $\hat{C}(0)$,  we can see that $\hat{C}(0) \leq C(0)$. $C(0) \leq C(1) \leq C(2) \leq ...$ because of assumption \ref{assum:miss-hit-cost-concave}. Also $\hat{C}(1) = C(1)  $. Hence
$$ \hat{C}(0) \leq C(0) \leq C(1) =  \hat{C}(1) \leq C(2) \leq ... $$ Therefore $\hat{C}:\mathbb{Z}_+ \to \mathbb{R}_+$ is non-decreasing

\textbf{Property 2:} $\hat{C}:\mathbb{Z}_+ \to \mathbb{R}_+$ is concave, i.e.  $\hat{C}(1) - \hat{C}(0) \geq  C(2) - \hat{C}(1) \geq C(3) - C(2)  \geq ......$

To show that   above choices of $\hat{C}(0)$ and $\hat{C}(1) $ satisfies property 2, it is enough to show that $\hat{C}(1) - \hat{C}(0) \geq C(2) - \hat{C}(1) \geq C(3) - C(2) $ because $\hat{C}(r) = C(r), r \geq 2$ is concave and has decreasing difference due to assumption \ref{assum:miss-hit-cost-concave}.

Since $\hat{C}(0) \leq C(0)$, we have
\begin{align}
    \hat{C}(1) - \hat{C}(0) \geq C(1) - C(0) \label{Con 1}
\end{align}
From assumption \ref{assum:miss-hit-cost-concave}, we have
\begin{align}
    C(1) - C(0) \geq C(2) - C(1) \geq C(3) - C(2) \geq ... \label{Con2}
\end{align}
From equation \ref{New C1}, \ref{Con 1}, \ref{Con2} , we have  the following $$ \hat{C}(1) - \hat{C}(0) \geq  C(2) - \hat{C}(1) \geq C(3) - C(2)  \geq ......$$
Therefore  $\hat{C}:\mathbb{Z}_+ \to \mathbb{R}_+$ is concave.

\textbf{Property 3:}$p^0( C(3) - C(2) ) - ( 2p^0 + q^0 -1 )( C(2) - \hat{C}(1)) +   \\
( p^0 + 2 q^0 -1)( \hat{C}(1) -\hat{C}(0) ) \leq 0  $

Since $p^0 + 2q^0 < 1$, assumption \ref{assum:transition-prob} becomes the following

\begin{align} \label{New C0 2}
    \hat{C}(0)   &\leq \hat{C}(1)  - \Bigg[\frac{p^0( ( C(2) -\hat{C}(1))-(C(3) - C(2)) )}{  p^0 + 2 q^0 -1}  \nonumber  \\
      &- \frac{( 1-p^0-q^0 )( C(2) - \hat{C}(1))}{ p^0 + 2 q^0 -1} \Bigg]   \nonumber \\
 \end{align}

Now, From the choice of $\hat{C}(0)$,  we can see that 

\begin{align} \label{Assum 3 Check }
    \hat{C}(0) \leq  \hat{C}(1) - \Bigg[\frac{p^0( ( C(2) -\hat{C}(1))-(C(3) - C(2)) )}{  p^0 + 2 q^0 -1}  - \nonumber \\
 \frac{( 1-p^0-q^0 )( C(2) - \hat{C}(1)}{ p^0 + 2 q^0 -1} \Bigg]
\end{align}

Since equation \ref{New C0 2} and \ref{Assum 3 Check } are the same, assumption \ref{assum:transition-prob} is satisfied. Hence property 3 is also satisfied. 

\textbf{Case 2:}  $p^0 + 2q^0 > 1,  2p^0 + q^0 < 1 $

In this case, we will show that property 3 is violated for those which satisfies property 1 and 2  and there do not exist 
$\hat{C}(0)$ and $\hat{C}(1)$ such that $\hat{C}:\mathbb{Z}_+ \to \mathbb{R}_+$ satisfies all these three properties.

Since $p^0 + 2q^0 > 1,  2p^0 + q^0 < 1 $
\begin{align*}
    p^0( C(3) - C(2) ) - ( 2p^0 + q^0 -1 )( C(2) - \hat{C}(1)) + \\  
( p^0 + 2 q^0 -1)(  \hat{C}(1) -\hat{C}(0) ) > 0
\end{align*}
Therefore, any choice of $C(0)$ and $C(1)$ that satisfies properties 1 and 2, will not  satisfy property 3. Therefore there do not exist 
$\hat{C}(0)$ and $\hat{C}(1)$ such that $\hat{C}:\mathbb{Z}_+ \to \mathbb{R}_+$ satisfies all these three properties.

\textbf{Case 3a:} $p^0 + 2q^0 > 1, 2p^0 + q^0 \geq 1, q^0 > p^0$

In this case, we will show that property 3 is violated for those which satisfies property 1 and 2  and there do not exist 
$\hat{C}(0)$ and $\hat{C}(1)$ such that $\hat{C}:\mathbb{Z}_+ \to \mathbb{R}_+$ satisfies all these three properties.

 Since  $ q^0 > p^0 $, we have the following
\begin{align*}
    2p^0 + q^0 - 1 &< p^0 + 2q^0 - 1 \\ 
    (C(2) - \hat{C}(1))( 2p^0 + q^0 - 1) &< (\hat{C}(1) - \hat{C}(0))(p^0 + 2q^0 - 1)
\end{align*}
The second inequality is due to $$ \hat{C}(1) - \hat{C}(0) \geq  C(2) -\hat{C}(1)
$$ because we assumed that properties 1 and 2 are satisfied. 
Therefore,
\begin{equation*}
     (\hat{C}(1) - \hat{C}(0))(p^0 + 2q^0 - 1)- 
    (C(2) -\hat{C}(1))( 2p^0 + q^0 - 1)  >0\\ 
\end{equation*}
\begin{align*}
    (\hat{C}(1) - \hat{C}(0))(p^0 + 2q^0 - 1) - \\
    (C(2) - \hat{C}(1))( 2p^0 + q^0 - 1) + 
    p^0( C(3) - C(2) ) >0
\end{align*}
Because $p^0 > 0$ and $C(3) \geq C(2)$ . Therefore, any choice of $C(0)$ and $C(1)$ that satisfies properties 1 and 2, will not  satisfy property 3. Therefore there do not exist 
$\hat{C}(0)$ and $\hat{C}(1)$ such that $\hat{C}:\mathbb{Z}_+ \to \mathbb{R}_+$ satisfies all these three properties.

\textbf{Case 3b:}  $p^0 + 2q^0 > 1, 2p^0 + q^0 > 1, q^0 < p^0$

In this case, we will show that $\hat{C}:\mathbb{Z}_+ \to \mathbb{R}_+$ satisfies all  three properties.

From the Table \ref{table:modified-costs}, we can see that 
\begin{align} 
    \hat{C}(0) &= 2 \hat{C}(1) - C(2) \label{Case 3b1} \\
    \hat{C}(1) &= C(2) - \frac{p^0}{p^0 - q^0} (  C(3) - C(2) ) \label{Case 3b2}
\end{align}
\textbf{Property 1:} $\hat{C}:\mathbb{Z}_+ \to \mathbb{R}_+$ is non-decreasing

To show that   above choices of $\hat{C}(0)$ and $\hat{C}(1) $ satisfies property 1, it is enough to show that $\hat{C}(0) \leq \hat{C}(1) \leq C(2)$ because $\hat{C}(r) = C(r), r \geq 2$ is non-decreasing due to assumption \ref{assum:miss-hit-cost-concave}.

We will show first $\hat{C}(1) - \hat{C}(0) \geq 0$
\begin{align*}
    \hat{C}(1) - \hat{C}(0) &= \{C(2) - \frac{p^0}{p^0 - q^0} (  C(3) - C(2) ) \} \\
    &- \{2 \hat{C}(1) - C(2) \}\\
    &= 2 \{ C(2) -   \hat{C}(1) \} - \frac{p^0}{p^0 - q^0} (  C(3) - C(2) )
\end{align*}

Now substitute for $\hat{C}(1)$ from \ref{Case 3b2} in the above expression, we get the following
\begin{align} \label{Monotone 1}
     \hat{C}(1) - \hat{C}(0) &= \frac{p^0}{p^0 - q^0} (  C(3) - C(2) ) \geq 0 
\end{align}
Because of assumption \ref{assum:miss-hit-cost-concave} i.e.,$C(3) \geq C(2)$ and $q^0 < p^0$

Now we will show $ C(2) -\hat{C}(1) \geq 0 $
\begin{align} \label{Monotone 2}
    C(2) -\hat{C}(1) &= \frac{p^0}{p^0 - q^0} (  C(3) - C(2) ) \geq 0 
\end{align}

From \ref{Monotone 1} and \ref{Monotone 2}, $\hat{C}(0) \leq \hat{C}(1) \leq C(2)$. Now  $\hat{C}(r) = C(r), r \geq 2$ is non-decreasing because of assumption \ref{assum:miss-hit-cost-concave}.  Therefore $\hat{C}:\mathbb{Z}_+ \to \mathbb{R}_+$ is non-decreasing.

\textbf{Property 2:} $\hat{C}:\mathbb{Z}_+ \to \mathbb{R}_+$ is concave, i.e.  $\hat{C}(1) - \hat{C}(0) \geq  C(2) - \hat{C}(1) \geq C(3) - C(2)  \geq ......$

To show that   above choices of $\hat{C}(0)$ and $\hat{C}(1) $ satisfies property 2, it is enough to show that $\hat{C}(1) - \hat{C}(0) \geq C(2) - \hat{C}(1) \geq C(3) - C(2) $ because $\hat{C}(r) = C(r), r \geq 2$ is concave and has decreasing difference due to assumption \ref{assum:miss-hit-cost-concave}.
From \ref{Case 3b1}, we have
\begin{align} \label{Concave 1}
    \hat{C}(1) - \hat{C}(0) &= C(2) - \hat{C}(1)
\end{align}

From \ref{Monotone 2}, we have 
\begin{align} \label{Concave 2}
    C(2) -\hat{C}(1) &= \frac{p^0}{p^0 - q^0} (  C(3) - C(2) )  \nonumber\\
    &\geq C(3) - C(2)  
\end{align}
because $ p^0 > q^0$ and $q^0 > 0$ .  

From \ref{Concave 1} and \ref{Concave 2}, we have the following
\begin{align*}
    \hat{C}(1) - \hat{C}(0) = C(2) - \hat{C}(1) \geq C(3) - C(2) 
\end{align*}
Now $\hat{C}(r) = C(r), r \geq 2$ is concave and has decreasing difference due to assumption \ref{assum:miss-hit-cost-concave}. Therefore property 2 is satisfied.

\textbf{Property 3:}$p^0( C(3) - C(2) ) - ( 2p^0 + q^0 -1 )( C(2) - \hat{C}(1)) +   \\
( p^0 + 2 q^0 -1)( \hat{C}(1) -\hat{C}(0) ) \leq 0  $

In this case, we show that the above choices of $\hat{C}(0)$ and $\hat{C}(1) $ satisfy property 3 with equality. 

From \ref{Concave 1} and \ref{Case 3b2}, we have 

\begin{align*}
    p^0( C(3) - C(2) ) - ( 2p^0 + q^0 -1 )( C(2) - \hat{C}(1)) + \\
( p^0 + 2 q^0 -1)( \hat{C}(1) -\hat{C}(0) ) \\
=\\
 p^0( C(3) - C(2) -  ( 2p^0 + q^0 -1 ) ( \frac{p^0}{p^0 - q^0} (  C(3) - C(2) )   ) + \\
 (p^0 + 2 q^0 -1  )( \frac{p^0}{p^0 - q^0} (  C(3) - C(2) )   ) = 0\\
\end{align*}

Therefore property 3 is satisfied.

\section{Proof of Indexability when $p^1 \approx p^0$ and $q^1 \approx q^0$}
\begin{lemma} \label{Val Function Lambda}
    $V_{\lambda}^n(a,r)$ is increasing in $\lambda \quad \forall a , r$ 
\end{lemma}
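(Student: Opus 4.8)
The plan is to prove the claim by induction on $n$, mirroring the structure of the value iteration~\eqref{eqn:val-iteration-v}--\eqref{eqn:Q-iteration-v}. The base case is immediate: $V^0_{\lambda}(a,r)=0$ is constant in $\lambda$ and hence trivially non-decreasing. For the inductive step I would assume $V^{n-1}_{\lambda}(a,r)$ is non-decreasing in $\lambda$ for all $a,r$ and show the same property for $V^n_{\lambda}$.

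The key observation I would exploit is that the penalty $\lambda$ enters the single-stage cost $c_{\lambda}(a,r,a') = \lambda a' + d a'(1-a) + (1-a')T^{a'}C(r)$ only through the term $\lambda a'$, whose coefficient $a'\in\{0,1\}$ is non-negative. Writing out the two actions,
\begin{align*}
Q^n_{\lambda}(a,r,0) &= T^{0}C(r) + \beta\, T^{0}V^{n-1}_{\lambda}(0,r), \\
Q^n_{\lambda}(a,r,1) &= \lambda + d(1-a) + \beta\, T^{1}V^{n-1}_{\lambda}(1,r),
\end{align*}
I would note that for $a'=1$ the explicit term $\lambda$ is non-decreasing in $\lambda$, while for $a'=0$ there is no explicit dependence on $\lambda$. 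In both cases the remaining term is $\beta\, T^{a'}V^{n-1}_{\lambda}(a',r)$; since $T^{a'}$ averages $V^{n-1}_{\lambda}(a',\cdot)$ with the non-negative, $\lambda$-independent weights $p^{a'}$, $q^{a'}$ and $1-p^{a'}-q^{a'}$, the induction hypothesis (each $V^{n-1}_{\lambda}(a',\cdot)$ non-decreasing in $\lambda$) forces $T^{a'}V^{n-1}_{\lambda}(a',r)$ to be non-decreasing in $\lambda$ as well. Hence both $Q^n_{\lambda}(a,r,0)$ and $Q^n_{\lambda}(a,r,1)$ are non-decreasing in $\lambda$.

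The conclusion then follows because $V^n_{\lambda}(a,r) = \min\{Q^n_{\lambda}(a,r,0),Q^n_{\lambda}(a,r,1)\}$ is the pointwise minimum of two functions that are non-decreasing in $\lambda$, and the minimum of non-decreasing functions is non-decreasing. This closes the induction and, on letting $n\to\infty$, transfers to $V_{\lambda}$ itself.

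I do not anticipate a substantive obstacle: this is a routine monotonicity-preservation induction, and the only point that warrants care is checking that the averaging operator $T^{a'}$ preserves monotonicity in the parameter $\lambda$, which holds precisely because its coefficients are non-negative and do not depend on $\lambda$ (the same mechanism used in Lemma~\ref{lem:value-fun-concave}). The statement is the value-function-level counterpart of the second part of Lemma~\ref{lem:value-fun-lambda}, and it supplies exactly the monotonicity needed as a building block for the indexability argument in the $p^0\approx p^1$, $q^0\approx q^1$ regime.
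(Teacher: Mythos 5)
Your proposal is correct and matches the paper's proof, which argues exactly this: induction on $n$, with the observation that $\lambda$ enters each $Q^n_{\lambda}(a,r,a')$ monotonically (via the term $\lambda a'$ and the $\lambda$-independent averaging operator $T^{a'}$ applied to the inductively monotone $V^{n-1}_{\lambda}$), so the pointwise minimum over $a'$ is again non-decreasing in $\lambda$. The paper states this in one line; your write-up simply fills in the same details.
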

\begin{proof}
    It can be easily shown using induction and using the fact that minimum of two increasing functions is also increasing.
\end{proof}

\begin{lemma}
    $V_{\lambda_2}^n(1,r) -  V_{\lambda_1}^n(1,r) \leq \frac{\lambda_2 - \lambda_1}{1 - \beta} \quad \forall n , r $ and $\lambda_2 \geq \lambda_1$ 
\end{lemma}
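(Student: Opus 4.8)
The plan is to prove, by induction on $n$, the slightly stronger statement that
$V_{\lambda_2}^n(a,r) - V_{\lambda_1}^n(a,r) \le \frac{\lambda_2 - \lambda_1}{1-\beta}$
holds for \emph{both} $a = 0$ and $a = 1$ (and all $r$). This strengthening is forced on us: the value iteration~\eqref{eqn:val-iteration-v}--\eqref{eqn:Q-iteration-v} for $V^n_\lambda(1,\cdot)$ reads $V^n_\lambda(1,r) = \min\{Q^n_\lambda(1,r,0), Q^n_\lambda(1,r,1)\}$, and these two $Q$-values invoke $V^{n-1}_\lambda(0,\cdot)$ and $V^{n-1}_\lambda(1,\cdot)$ respectively, so the single-variable claim for $a=1$ cannot self-propagate. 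Write $\Delta \coloneqq \lambda_2 - \lambda_1 \ge 0$. The base case $n=0$ is immediate since $V^0_\lambda \equiv 0$, so the difference is $0 \le \frac{\Delta}{1-\beta}$.

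For the induction step, the key device is the elementary inequality for a difference of minima. I would let $a^\ast$ attain $V^n_{\lambda_1}(a,r) = \min_{a'} Q^n_{\lambda_1}(a,r,a')$, and then use $V^n_{\lambda_2}(a,r) \le Q^n_{\lambda_2}(a,r,a^\ast)$ together with the equality $V^n_{\lambda_1}(a,r) = Q^n_{\lambda_1}(a,r,a^\ast)$ to obtain
$V^n_{\lambda_2}(a,r) - V^n_{\lambda_1}(a,r) \le Q^n_{\lambda_2}(a,r,a^\ast) - Q^n_{\lambda_1}(a,r,a^\ast)$.
It is essential to pick the $\lambda_1$-optimal action here (rather than the $\lambda_2$-optimal one), so that the extra penalty is multiplied by $a^\ast \le 1$. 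This reduces everything to bounding the change of a single $Q$-value at a fixed action.

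Next I would split that $Q$-difference into its single-stage and continuation parts. Since $\lambda$ enters $c_\lambda(a,r,a')$ only through the additive term $\lambda a'$, the single-stage difference is exactly $\Delta a^\ast \le \Delta$, using $a^\ast \in \{0,1\}$ and $\Delta \ge 0$. For the continuation part, $\beta\big(T^{a^\ast}V^{n-1}_{\lambda_2}(a^\ast,r) - T^{a^\ast}V^{n-1}_{\lambda_1}(a^\ast,r)\big) = \beta\, T^{a^\ast}\big(V^{n-1}_{\lambda_2}(a^\ast,\cdot) - V^{n-1}_{\lambda_1}(a^\ast,\cdot)\big)(r)$ by linearity of $T^{a^\ast}$; since $T^{a^\ast}$ is a convex combination (nonnegative coefficients summing to one) it cannot inflate a uniform upper bound, so by the induction hypothesis this term is at most $\beta\,\frac{\Delta}{1-\beta}$. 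Combining, $V^n_{\lambda_2}(a,r) - V^n_{\lambda_1}(a,r) \le \Delta + \beta\,\frac{\Delta}{1-\beta} = \frac{\Delta}{1-\beta}$, using $1 + \frac{\beta}{1-\beta} = \frac{1}{1-\beta}$, which is exactly the claim and closes the induction.

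The content here is more bookkeeping than difficulty: the two things one must get right are (i) carrying the bound simultaneously for $a=0$ and $a=1$, as explained above, and (ii) choosing the $\lambda_1$-minimizer in the difference-of-minima step so the per-stage penalty is controlled by $a^\ast \le 1$. The only analytic ingredient, that the averaging operators $T^u$ preserve a constant upper bound, is immediate from their definition, and the geometric telescoping produces the stated constant $\frac{1}{1-\beta}$.
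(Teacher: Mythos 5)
Your proof is correct, and it reaches the bound by a cleaner mechanism than the paper's. Both arguments are inductions on $n$, and both (yours explicitly, the paper's implicitly via its equations for $V^1_\lambda(0,\cdot)$ and its Case~2 at $n=2$) must carry the bound for $a=0$ and $a=1$ simultaneously, since $Q^n_\lambda(1,r,0)$ invokes $V^{n-1}_\lambda(0,\cdot)$. The difference is in how the difference of minima is controlled. The paper enumerates cases according to which argument attains each of the two minima (supported by a separate auxiliary lemma that $V^n_\lambda$ is nondecreasing in $\lambda$), works the cases out explicitly for $n=1$ and $n=2$ to get the partial-sum bounds $(\lambda_2-\lambda_1)(1+\beta+\cdots+\beta^{n-1})$, and then asserts the pattern for general $n$ before summing the geometric series. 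You instead use the one-line inequality $\min_{a'}Q_{\lambda_2}(\cdot,a') - \min_{a'}Q_{\lambda_1}(\cdot,a') \le Q_{\lambda_2}(\cdot,a^\ast) - Q_{\lambda_1}(\cdot,a^\ast)$ with $a^\ast$ the $\lambda_1$-minimizer, which collapses the case analysis, dispenses with the auxiliary monotonicity lemma, and makes the induction step a genuine two-line computation ($\Delta a^\ast \le \Delta$ for the stage cost plus $\beta\Delta/(1-\beta)$ for the continuation, since $T^u$ is a convex combination). What the paper's route buys is the sharper finite-horizon bound $(\lambda_2-\lambda_1)\sum_{i=0}^{n-1}\beta^i$ at each $n$, which your uniform $\Delta/(1-\beta)$ hypothesis does not record (though you could recover it by strengthening the inductive statement); what your route buys is a complete, self-contained induction for all $n$ rather than a verified pattern extrapolated from $n=1,2$. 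Since the lemma only asserts the $\Delta/(1-\beta)$ bound, your argument fully suffices.
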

\begin{proof}
     We prove the claims via induction on $n$. 
Recall that $V_{\lambda}^0(1,r) = 0$ for all $r \in \mathbb{Z}_+$. So,
\begin{align}
    V_{\lambda_2}^1 ( 1, r) - V_{\lambda_1}^1 (1,r) &= \min \{ \lambda_2 , T^0 C(r)\} - \min \{ \lambda_1 , T^0 C(r)\} \nonumber \\ 
    &\ineqa \lambda_2 - \lambda_1  \label{E1}
\end{align}

Above inequality (a) holds because $V_{\lambda_2}^1 ( 1, r) - V_{\lambda_1}^1 (1,r)$ takes values in the set  $\{0, \lambda_2 - \lambda_1, T^0 C(r) - \lambda_1 \} $. All of these values are less than $ \lambda_2 - \lambda_1$.
Similarly,  we can show the following also
\begin{align} \label{E10}
    V_{\lambda_2}^1 ( 0, r) - V_{\lambda_1}^1 (0,r) \leq \lambda_2 - \lambda_1
\end{align}

\begin{align*}
    V_{\lambda_2}^2 ( 1, r) - V_{\lambda_1}^2 (1,r) = \min \{ \lambda_2 + \beta T^1 V_{\lambda_2}^1 (1, r), T^0 C(r) +
     \beta T^0 V_{\lambda_2}^1 ( 0, r) \} - \\
    \min \{ \lambda_1 + \beta T^1 V_{\lambda_1}^1 (1, r) , T^0 C(r) +
     \beta T^0 V_{\lambda_1}^1 ( 0, r)  \} 
\end{align*}

From Lemma~\ref{Val Function Lambda} , we know 
\begin{align*}
    \lambda_2 + \beta T^1 V_{\lambda_2}^1 (1, r) &\geq \lambda_1 + \beta T^1 V_{\lambda_1}^1 (1, r) \\
    T^0 C(r) + \beta T^0 V_{\lambda_2}^1 ( 0, r) &\geq T^0 C(r) +
     \beta T^0 V_{\lambda_1}^1 ( 0, r) 
\end{align*}

\begin{enumerate}
        \item \textbf{Case 1:}  $ \lambda_2 + \beta T^1 V_{\lambda_2}^1 (1, r) \leq T^0 C(r) + \beta T^0 V_{\lambda_2}^1 ( 0, r) $ and $T^0 C(r) +
     \beta T^0 V_{\lambda_1}^1 ( 0, r)  \geq \lambda_1 + \beta T^1 V_{\lambda_1}^1 (1, r)$

     \begin{align}
          V_{\lambda_2}^2 ( 1, r) - V_{\lambda_1}^2 (1,r) &= \lambda_2 + \beta T^1 V_{\lambda_2}^1 (1, r) - \lambda_1 - \beta T^1 V_{\lambda_1}^1 (1, r) \nonumber \\
          &= ( \lambda_2 - \lambda_1) +\beta T^1 (  V_{\lambda_2}^1 (1, r) - V_{\lambda_1}^1 (1, r)) \nonumber\\ 
          &\ineqa ( \lambda_2 - \lambda_1) + \beta T^1( \lambda_2 - \lambda_1) \nonumber\\
          &= ( \lambda_2 - \lambda_1) ( 1 + \beta) \nonumber
     \end{align}
     where (a) due to Equation \ref{E1} 
     \item  \textbf{Case 2:} $ \lambda_2 + \beta T^1 V_{\lambda_2}^1 (1, r) \leq T^0 C(r) + \beta T^0 V_{\lambda_2}^1 ( 0, r) $ and $T^0 C(r) +
     \beta T^0 V_{\lambda_1}^1 ( 0, r)  \leq \lambda_1 + \beta T^1 V_{\lambda_1}^1 (1, r)$
     \begin{align*}
         V_{\lambda_2}^2 ( 1, r) - V_{\lambda_1}^2 (1,r) &= \lambda_2 + \beta T^1 V_{\lambda_2}^1 (1, r) - T^0 C(r) -\beta T^0 V_{\lambda_1}^1 ( 0, r) \\
         &\ineqa T^0 C(r) + \beta T^0 V_{\lambda_2}^1 ( 0, r)  - T^0 C(r) - \beta T^0 V_{\lambda}^1 ( 0, r-1) \\
         &=  \beta T^0 [ V_{\lambda_2}^1 ( 0, r)  - V_{\lambda_1}^1 ( 0, r) ]\\
         &\ineqb \beta T^0 (  \lambda_2 - \lambda_1) \\
         &= \beta(\lambda_2 - \lambda_1 )
         \end{align*}
     where (a) is due to the inequality in \textbf{Case 2} and (b) due to equation \ref{E10}. 
\end{enumerate}
Similarly, if we go through all the possible cases, we can show that
\begin{align} \label{E2}
     V_{\lambda_2}^2 ( 1, r) - V_{\lambda_1}^2 (1,r) \leq ( \lambda_2 - \lambda_1) ( 1 + \beta) 
\end{align}
For any $n$, we can show the following
\begin{align}
    V_{\lambda_2}^n ( 1, r) - V_{\lambda_1}^n (1,r) &\leq ( \lambda_2 - \lambda_1) ( 1 + \beta + \beta^2 + ...+ \beta^{n-1})  \nonumber \\
    &\leq ( \lambda_2 - \lambda_1) \sum_{i =0}^{\infty}\beta^i \nonumber \\
    &= \frac{\lambda_2 - \lambda_1}{1- \beta}
\end{align}
\end{proof}

\begin{lemma} For all $n \geq 1$,
\label{lem:value-fun-lambda1}
 \begin{enumerate}
\item $Q^n_{\lambda}(a,r,1)-Q^n_{\lambda}(a,r,0)$ are non-decreasing in $\lambda$ for $a = 0,1$.
\item $V_{\lambda}^n(1,r) -  V_{\lambda}^n(0,r)$ are non-decreasing in $\lambda$. 
$\lambda$ for $a = 0,1$.
\end{enumerate}
\end{lemma}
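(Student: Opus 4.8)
The plan is to establish both parts by a single induction on $n$, in the spirit of the proof of Lemma~\ref{lem:value-fun-lambda}, but with the structural Assumptions~\ref{assum:popularity-stoch-order}--\ref{assum:transition-prob} replaced by the discount condition $\beta \le \max\{\frac{1}{1+\delta},\frac12\}$ and the two auxiliary lemmas just proved, namely that $V^n_\lambda(a,r)$ is non-decreasing in $\lambda$ and that $V^n_{\lambda_2}(a,r)-V^n_{\lambda_1}(a,r)\le\frac{\lambda_2-\lambda_1}{1-\beta}$ for $\lambda_2\ge\lambda_1$ and $a=0,1$. Throughout, for $\lambda_2\ge\lambda_1$ I write $h_a(r):=V^n_{\lambda_2}(a,r)-V^n_{\lambda_1}(a,r)$, so that these lemmas read $0\le h_a(r)\le\frac{\lambda_2-\lambda_1}{1-\beta}$.

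For the base case $n=1$, since $V^0_\lambda\equiv 0$ one computes $Q^1_\lambda(a,r,1)-Q^1_\lambda(a,r,0)=\lambda+d(1-a)-T^0C(r)$, which is affine and increasing in $\lambda$, giving part~1. Part~2 then follows from part~1 exactly as in the proof of Lemma~\ref{lem:value-fun-lambda}(a): expanding $V^n_\lambda(1,r)-V^n_\lambda(0,r)$ as a nested $\min$--$\max$ of the quantities $Q^n_\lambda(\cdot,r,\cdot)$ and splitting into the two sign cases shows it is non-decreasing in $\lambda$ whenever the $Q$-gaps are. This $\min$--$\max$ deduction is purely algebraic and uses no condition on $\beta$, so I would isolate it once and reuse it to pass from part~1 to part~2 at every stage.

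The substantive step is deriving part~1 at stage $n+1$ from the induction hypothesis. Since the $\lambda$-independent terms cancel, for $\lambda_2\ge\lambda_1$ I have
\[
\big(Q^{n+1}_{\lambda_2}(a,r,1)-Q^{n+1}_{\lambda_2}(a,r,0)\big)-\big(Q^{n+1}_{\lambda_1}(a,r,1)-Q^{n+1}_{\lambda_1}(a,r,0)\big)=(\lambda_2-\lambda_1)+\beta\big(T^1h_1(r)-T^0h_0(r)\big),
\]
and the goal is to show the right-hand side is non-negative. For the regime $\beta\le\frac12$, the crude estimate $T^1h_1(r)-T^0h_0(r)\ge -T^0h_0(r)\ge -\frac{\lambda_2-\lambda_1}{1-\beta}$ (using $h_1\ge0$ and the Lipschitz bound) already yields a lower bound of $(\lambda_2-\lambda_1)\big(1-\frac{\beta}{1-\beta}\big)\ge0$. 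For the sharper regime $\beta\le\frac{1}{1+\delta}$ I would instead decompose $T^1h_1-T^0h_0=T^1(h_1-h_0)+(T^1-T^0)h_0$; the first term is non-negative because part~2 at stage $n$ gives $h_1\ge h_0$, while the second is
\[
(T^1-T^0)h_0(r)=(p^1-p^0)\big(h_0(r+1)-h_0(r)\big)+(q^1-q^0)\big(h_0((r-1)^+)-h_0(r)\big)\ge -\delta\,\tfrac{\lambda_2-\lambda_1}{1-\beta},
\]
using $|p^1-p^0|,|q^1-q^0|\le\delta/2$ together with $0\le h_0\le\frac{\lambda_2-\lambda_1}{1-\beta}$. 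This gives a lower bound of $(\lambda_2-\lambda_1)\big(1-\frac{\beta\delta}{1-\beta}\big)\ge0$ precisely when $\beta(1+\delta)\le1$. Combining the two regimes covers $\beta\le\max\{\frac{1}{1+\delta},\frac12\}$, part~2 at stage $n+1$ follows from part~1 by the reusable $\min$--$\max$ argument, and letting $n\to\infty$ transfers both monotonicities to $Q_\lambda$ and $V_\lambda$.

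The main obstacle is exactly this induction step for part~1. In contrast with the indexable regime, the term $T^1V^n_\lambda(1,r)-T^0V^n_\lambda(0,r)$ need not be monotone in $\lambda$, so I cannot argue termwise; instead I must quantify how fast it can \emph{decrease} and dominate that decrease by the unit slope of the explicit $\lambda$ term. Making this quantitative is where the Lipschitz bound $\frac{\lambda_2-\lambda_1}{1-\beta}$ and the threshold on $\beta$ are indispensable, and the delicate points are keeping track of the boundary case $(r-1)^+$ in the operator difference and correctly invoking $h_1\ge h_0$ (hence part~2 of the hypothesis) to discard the $T^1(h_1-h_0)$ term.
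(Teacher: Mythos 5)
Your proof is correct and follows essentially the same route as the paper's: the same base case, the same min--max reduction of part~2 to part~1, and the same quantitative induction step combining monotonicity of $V^n_\lambda$ in $\lambda$, the Lipschitz bound $h_a\le\frac{\lambda_2-\lambda_1}{1-\beta}$, and $h_1\ge h_0$ to reach the lower bound $(\lambda_2-\lambda_1)\bigl(1-\frac{\beta\delta}{1-\beta}\bigr)$; the only cosmetic difference is that you split $T^1h_1-T^0h_0$ as $T^1(h_1-h_0)+(T^1-T^0)h_0$ where the paper uses $(T^1-T^0)h_1+T^0(h_1-h_0)$, so you need the Lipschitz bound for $a=0$, which the paper's auxiliary lemma states only for $a=1$ but which holds by the identical argument. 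A small bonus on your side: the crude estimate covering $\beta\le\frac12$ is not in the paper's appendix, which only derives the condition $\delta\le\frac{1-\beta}{\beta}$ and leaves the $\beta\le\frac12$ half of Theorem~\ref{thm:indexability 2} to a cited external result.
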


\begin{proof}
    We prove the claims via induction on $n$. 
Recall that $V_{\lambda}^0(a,r) = 0$ for all $r \in \mathbb{Z}_+$ and $a = 0,1$. So,
\begin{align*}
\lefteqn{Q^1_{\lambda}(a,r,1)-Q^1_{\lambda}(a,r,0)} \\
&= c_{\lambda}(a,r,1) - c_{\lambda}(a,r,0) \\
&=\lambda a + d(1-a) - T^{0}C(r) 
\end{align*}
which is non-decreasing in $\lambda$. For the induction step we prove the following three assertions.

\noindent
$(a)$ {\it If $Q^n_{\lambda}(a,r,1)-Q^n_{\lambda}(a,r,0)$ for $a = 0,1$ are non-decreasing in $\lambda$, so is $V_{\lambda}^n(1,r) -  V_{\lambda}^n(0,r)$.} Observe that
\begin{align*}
\lefteqn{V_{\lambda}^n(1,r) -  V_{\lambda}^n(0,r)} \\
& = \min_{a'}Q^n_{\lambda}(1,r,a') - \min_{a'}Q^n_{\lambda}(0,r,a') \\
& = \min \left\{\max_{a'} \left(Q^n_{\lambda}(1,r,1) - Q^n_{\lambda}(0,r,a')\right), \right. \\
& \ \ \ \ \ \ \ \ \ \ \left. \max_{a'} \left(Q^n_{\lambda}(1,r,0) - Q^n_{\lambda}(0,r,a')\right) \right\}\\
& = \min \left\{\max\{-d,Q^n_{\lambda}(1,r,1) - Q^n_{\lambda}(1,r,0)\}, \right. \\
& \ \ \ \ \ \ \ \ \ \ \left. \max\{ Q^n_{\lambda}(1,r,0) - Q^n_{\lambda}(0,r,1),0\}\right\}.
\end{align*}
Let us consider the following two cases separately.
\begin{enumerate}
\item $Q^n_{\lambda}(1,r,0) - Q^n_{\lambda}(0,r,1) \leq 0:$ In this case $Q^n_{\lambda}(1,r,1) - Q^n_{\lambda}(1,r,0) \geq -d$. Hence
\[
V_{\lambda}^n(1,r) -  V_{\lambda}^n(0,r) = \min\{Q^n_{\lambda}(1,r,1) - Q^n_{\lambda}(1,r,0),0\}
\]
which is non-decreasing in $\lambda$ from the induction hypothesis.
\item $Q^n_{\lambda}(1,r,0) - Q^n_{\lambda}(0,r,1) > 0:$ In this case $Q^n_{\lambda}(1,r,1) - Q^n_{\lambda}(1,r,0) < -d$, and so, $V_{\lambda}^n(0,r) -  V_{\lambda}^n(1,r) = -d$,
which is trivially non-decreasing in $\lambda$.
\end{enumerate}
\noindent
$(b)$ {\it If $V_{\lambda}^n(1,r) -  V_{\lambda}^n(0,r)$ is non-decreasing in $\lambda$ and  $\delta \leq \frac{1-\beta}{\beta}$, so are $Q^{n+1}_{\lambda}(a,r,1)-Q^{n+1}_{\lambda}(a,r,0)$ for $a = 0,1$.} Let $\lambda_2 \geq \lambda_1$.  Observe that
\begin{align*}
\lefteqn{\{Q^{n+1}_{\lambda_2}(a,r,1)-Q^{n+1}_{\lambda_2}(a,r,0)\}- \{Q^{n+1}_{\lambda_1}(a,r,1)-Q^{n+1}_{\lambda_1}(a,r,0) \}} \\
&= \{\lambda_2 + d(1-a) -T^{0}C(r) 
+ \beta\left(T^{1}V^n_{\lambda_2}(1,r) - T^{0}V^n_{\lambda_2}(0,r) \right)\}\\
&- \{\lambda_1 + d(1-a) -T^{0}C(r) 
+ \beta\left(T^{1}V^n_{\lambda_1}(1,r) - T^{0}V^n_{\lambda_1}(0,r) \right)\}\\
&= (\lambda_2 - \lambda_1) + \beta T^{1}\left(V^n_{\lambda_2}(1,r) - V^n_{\lambda_1}(1,r) \right) -  \beta T^{0}\left(V^n_{\lambda_2}(1,r) - V^n_{\lambda_1}(1,r) \right)\\
&+ \beta T^{0}\left(V^n_{\lambda_2}(1,r) - V^n_{\lambda_1}(1,r) \right) - \beta T^{0}\left(V^n_{\lambda_2}(0,r) - V^n_{\lambda_1}(0,r) \right) \\
&= (\lambda_2 - \lambda_1) + \beta ( p^1 - p^0) (V^n_{\lambda_2}(1,r+1) - V^n_{\lambda_1}(1,r+1)   ) \\
&+ \beta ( q^1 - q^0 ) (V^n_{\lambda_2}(1,r-1) - V^n_{\lambda_1}(1,r-1) ) \\
&+ \beta ( p^0 - p^1 )  (V^n_{\lambda_2}(1,r) - V^n_{\lambda_1}(1,r)   ) + \beta ( q^0 - q^1 )  (V^n_{\lambda_2}(1,r) - V^n_{\lambda_1}(1,r)   )\\
&+ \beta T^0 \bigg\{  (V^n_{\lambda_2}(1,r) - V^n_{\lambda_2}(0,r)   ) -   (V^n_{\lambda_1}(1,r) - V^n_{\lambda_1}(0,r)   ) \bigg\} \\
&\ineqaa (\lambda_2 - \lambda_1) + \beta ( p^1 - p^0) (V^n_{\lambda_2}(1,r+1) - V^n_{\lambda_1}(1,r+1)   ) \\
&+ \beta ( q^1 - q^0 ) (V^n_{\lambda_2}(1,r-1) - V^n_{\lambda_1}(1,r-1) ) \\
&+ \beta ( p^0 - p^1 )  (V^n_{\lambda_2}(1,r) - V^n_{\lambda_1}(1,r)   ) + \beta ( q^0 - q^1 )  (V^n_{\lambda_2}(1,r) - V^n_{\lambda_1}(1,r)   )\\
&\ineqbb (\lambda_2 - \lambda_1) - \beta \delta \max_r \{  V^n_{\lambda_2}(1,r) - V^n_{\lambda_1}(1,r) \} \\
\end{align*}

where (a) and (b) holds after removing positive terms

We know that 
\begin{align} \label{upper bound}
    V_{\lambda_2}^n(1,r) -  V_{\lambda_1}^n(1,r) \leq \frac{\lambda_2 - \lambda_1}{1 - \beta}
\end{align}

\begin{align}
    &\{Q^{n+1}_{\lambda_2}(a,r,1)-Q^{n+1}_{\lambda_2}(a,r,0)\} -  
     \{Q^{n+1}_{\lambda_1}(a,r,1)-Q^{n+1}_{\lambda_1}(a,r,0) \} \nonumber \\
    &\ineqaa (\lambda_2 - \lambda_1) - \beta \delta \bigg( \frac{(\lambda_2 - \lambda_1)}{1 - \beta}\bigg)  \nonumber \\
    &=  (\lambda_2 - \lambda_1) \bigg( 1 - \frac{\beta \delta}{1 - \beta}\bigg)
\end{align}

where (a) holds due to equation \ref{upper bound}
For the above inequality to be greater than zero , we need 
\begin{align} \label{condition: delta}
    \delta \leq \frac{1 - \beta}{\beta} 
\end{align}
Parts~$(a)$ and $(b)$  together complete the induction step and hence the proof.
\end{proof}

From Lemma~\ref{lem:value-fun-lambda1}, $Q^n_{\lambda}(a,r,1)-Q^n_{\lambda}(a,r,0)$ are non-decreasing in $\lambda$ for $a = 0,1$ and for all $n \geq 1$. Hence, taking $n \to \infty$, $Q_{\lambda}(a,r,1)-Q_{\lambda}(a,r,0)$ is also non-decreasing in $\lambda$. Following similar arguments as in the proof of Theroem~\ref{thm:threshold-policy}, it implies that each arm of the RMAB formulation of the content caching problem is indexable as required.

\remove{
\section{Assumption on $p_0$ and $q_0$}
Our main caching problem is to solve the problem \ref{eqn:objective}  given $ p^{(0)}_{i} , p^{(1)}_{i}, q^{(0)}_{i} ,  q^{(1)}_{i} $ and $ C_{i}(r) , r \geq 0  ,\forall i $ . If the assumption \ref{assum:transition-prob} is satisfied , then the problem is indexable. Then we will use Whittle Index Policy to solve the indexable  problem . If the assumption \ref{assum:transition-prob} is not satisfied , then we will change the values of $C(0)$ and $C(1)$ in such a way that  the  content caching problem    with $ p^{(0)}_{i} , p^{(1)}_{i}, q^{(0)}_{i} ,  q^{(1)}_{i} $ and $ \hat{C}(0), \hat{C}(1), C_{i}(r) , r \geq 2  ,\forall i $ is indexable. Then we will find the Whittle indices and use it to solve the content caching problem with  $ p^{(0)}_{i} , p^{(1)}_{i}, q^{(0)}_{i} ,  q^{(1)}_{i} $ and $ C_{i}(r) , r \geq 0  ,\forall i $. For the analysis point of view, we omit subscript $i$.

If we change the values of $C(1)$and $C(0)$, it should satisfy assumption \ref{assum:miss-hit-cost-concave} and \ref{assum:transition-prob} to have the problem indexable. So new values of $C(1)$and $C(0)$ should satisfy the following conditions.
\begin{enumerate}
    \item $\hat{C}(0) \leq \hat{C}(1) \leq C(2) \leq ......$
    \item $\hat{C}(1) - \hat{C}(0) \geq  C(2) - \hat{C}(1) \geq C(3) - C(2)  \geq ......$
    \item $p^0( C(3) - C(2) ) - ( 2p^0 + q^0 -1 )( C(2) - \hat{C}(1)) +   \\
( p^0 + 2 q^0 -1)( \hat{C}(1) -\hat{C}(0) ) \leq 0  $
\end{enumerate}

\textbf{Case 1:}  $ p^0 + 2q^0 - 1 $ is negative.

From assumption \ref{assum:transition-prob}, we have the following 

\begin{align}\label{Eq:Assum 3}
p^0( C(3) - C(2) ) - ( 2p^0 + q^0 -1 )( C(2) - C(1 )) +  \nonumber \\
( p^0 + 2 q^0 -1)(  C(1) - C(0) ) \leq 0   
\end{align}

Let $\hat{C}(1) = C(1)$ and

$\hat{C}(0) =  \min\Bigg\{ C(0) ,C(1)  - \Bigg[\frac{p^0( ( C(2) -C(1))-(C(3) - C(2)) )}{  p^0 + 2 q^0 -1}  
      - \frac{( 1-p^0-q^0 )( C(2) - C(1))}{ p^0 + 2 q^0 -1} \Bigg] \Bigg\}$

Let
\begin{align*}
    F &= p^0( ( C(2) -C(1))-(C(3) - C(2)) ) \\
    &- ( 1-p^0-q^0 )( C(2) - C(1)) \\
    &= (  C(2) - C(1) ) ( 2p^0 +  q^0 -1) - p^0 ( C(3) - C(2))
\end{align*}

Since $ p^0 + 2q^0 - 1 $ is negative, condition 3  becomes following  


From the choice of $\hat{C}(0)$,  we can see that $\hat{C}(0) \leq C(0) \leq C(1) = \hat{C}(1) \leq C(2) \leq ...  $. Hence condition 1 is satisfied. Also $$ \hat{C}(1) - \hat{C}(0) \geq C(1) - C(0) \geq C(2) - C(1)
$$ The first inequality is due to $\hat{C}(0) \leq C(0)$ and the second inequality is due to assumption \ref{assum:miss-hit-cost-concave}. Hence condition 2 is satisfied.  Now, From the choice of $\hat{C}(0)$,  we can see that 

\begin{align*}
    \hat{C}(0) \leq  \hat{C}(1) - \Bigg[\frac{p^0( ( C(2) -\hat{C}(1))-(C(3) - C(2)) )}{  p^0 + 2 q^0 -1}  - \\
 \frac{( 1-p^0-q^0 )( C(2) - \hat{C}(1)}{ p^0 + 2 q^0 -1} \Bigg]
\end{align*}

 This is the same as equation \ref{New C0 2} and hence  assumption \ref{assum:transition-prob} is satisfied. Hence all the conditions are satisfied.

\textbf{Case 2:} $ p^0 + 2q^0 - 1 $ is positive and $ 2p^0 + q^0 - 1 $ is negative

In this case, 
\begin{align*}
    p^0( C(3) - C(2) ) - ( 2p^0 + q^0 -1 )( C(2) - \hat{C}(1)) + \\  
( p^0 + 2 q^0 -1)(  \hat{C}(1) -\hat{C}(0) ) > 0
\end{align*}
Hence condition 3 is violated. Any choice of $C(0)$ and $C(1)$ that satisfies conditions 1 and 2, will not  satisfy condition 3.

\textbf{Case 3a:} $ p^0 + 2q^0 - 1 $, $ 2p^0 + q^0 - 1 $ are positive and $ q^0 > p^0 $.

 Since  $ q^0 > p^0 $, we have the following
\begin{align*}
    2p^0 + q^0 - 1 &< p^0 + 2q^0 - 1 \\ 
    (C(2) - \hat{C}(1))( 2p^0 + q^0 - 1) &< (\hat{C}(1) - \hat{C}(0))(p^0 + 2q^0 - 1)
\end{align*}
The second inequality is due to $$ \hat{C}(1) - \hat{C}(0) \geq C(1) - C(0) \geq C(2) - C(1) = C(2) -\hat{C}(1)
$$
Therefore,
\begin{align*}
    (\hat{C}(1) - \hat{C}(0))(p^0 + 2q^0 - 1)-
    (C(2) -\hat{C}(1))( 2p^0 + q^0 - 1)  >0\\ 
    (\hat{C}(1) - \hat{C}(0))(p^0 + 2q^0 - 1) - (C(2) - \hat{C}(1))( 2p^0 + q^0 - 1) + \\
    p^0( C(3) - C(2) ) >0
\end{align*}
Hence condition 3 is violated. 

\textbf{Case 3b:} $ p^0 + 2q^0 - 1 $, $ 2p^0 + q^0 - 1 $ are positive and $ q^0 <
p^0 $.
Let us choose 

We will show that the above choice of $ \hat{C}(0)$ and $\hat{C}(1)$ satisfies all three conditions. 

\textbf{Condition 1:} $\hat{C}(0) \leq \hat{C}(1) \leq C(2) \leq ...... $





\textbf{Condition 2:} $\hat{C}(1) - \hat{C}(0) \geq  C(2) - \hat{C}(1) \geq C(3) - C(2)  \geq ......$



From \ref{Concave 1} and \ref{Concave 2}, we have the following
\begin{align*}
    \hat{C}(1) - \hat{C}(0) = C(2) - \hat{C}(1) \geq C(3) - C(2) 
\end{align*}

$C(3) - C(2) \geq C(4) - C(3) \geq ....$ because of assumption \ref{assum:miss-hit-cost-concave} . Hence  $\hat{C}(1) - \hat{C}(0) \geq  C(2) - \hat{C}(1) \geq C(3) - C(2)  \geq ......$. Condition 2 is satisfied.

\textbf{Condition 3:} $p^0( C(3) - C(2) ) - ( 2p^0 + q^0 -1 )( C(2) - \hat{C}(1)) + 
( p^0 + 2 q^0 -1)( \hat{C}(1) -\hat{C}(0) ) \leq 0  $

From \ref{Concave 1} and \ref{Case 3b2}, we have 

\begin{align*}
    p^0( C(3) - C(2) ) - ( 2p^0 + q^0 -1 )( C(2) - \hat{C}(1)) + \\
( p^0 + 2 q^0 -1)( \hat{C}(1) -\hat{C}(0) ) \\
=\\
 p^0( C(3) - C(2) -  ( 2p^0 + q^0 -1 ) ( \frac{p^0}{p^0 - q^0} (  C(3) - C(2) )   ) + \\
 (p^0 + 2 q^0 -1  )( \frac{p^0}{p^0 - q^0} (  C(3) - C(2) )   ) = 0\\
\end{align*}

Hence condition 3 is satisfied.

}

\remove{
\textbf{Case 1:}  $ p^0 + 2q^0 - 1 $ and $ 2p^0 + q^0 - 1 $ are negative.

Since  $ p^0 + 2q^0 - 1 $ is negative , then equation \ref{Eq:Assum 3} becomes

\begin{align} \label{New C0 1}
     C(0)    &\leq C(1)  - \Bigg[\frac{p^0( ( C(2) -C(1))-(C(3) - C(2)) )}{  p^0 + 2 q^0 -1}  \nonumber  \\
      &- \frac{( 1-p^0-q^0 )( C(2) - C(1))}{ p^0 + 2 q^0 -1} \Bigg]   \nonumber \\
 \end{align}

If $p^0( ( C(2) -C(1))-(C(3) - C(2)) ) - ( 1-p^0-q^0 )( C(2) - C(1)) \geq 0 $ , then
\begin{align*}
    C(1)[1 - q^0 - 2p^0] &\geq p^0[  C(3) - C(2) ] + C(2)[1 - 2p^0 -q^0]
\end{align*}

Since $ 2p^0 + q^0 - 1 $ is negative, then the above equation  becomes
$$
C(1) \geq C(2) + \frac{p^0}{1 - q^0 - 2p^0}[C(3) - C(2)]
$$
This is not possible since $C(1) \leq C(2)$

If $p^0( ( C(2) -C(1))-(C(3) - C(2)) ) - ( 1-p^0-q^0 )( C(2) - C(1)) \leq 0 $ , then
\begin{align*}
    C(1)[1 - q^0 - 2p^0] &\leq p^0[  C(3) - C(2) ] + C(2)[1 - 2p^0 -q^0]
\end{align*}
Since $ 2p^0 + q^0 - 1 $ is negative, then the above equation becomes
$$
C(1) \leq C(2) + \frac{p^0}{1 - q^0 - 2p^0}[C(3) - C(2)]
$$ i.e., 
$$
C(2) - C(1) \geq  \frac{p^0}{ q^0 +2p^0 -1}[C(3) - C(2)]
$$  The above inequaltiy satisfies the assumption \ref{assum:miss-hit-cost-concave}. 
Hence we can have new values for $C(1)$ and $C(0)$ i.e.,
$$
C(0) \leq \hat{C(1)} \leq C(2)
$$

\begin{align} \label{New C(0) tilde}
    \hat{C(0) }   \leq \hat{C(1)}  - \Bigg[\frac{p^0( ( C(2) -\hat{C(1)})-(C(3) -  C(2)) )}{  p^0 + 2 q^0 -1} \nonumber   \\
      - \frac{( 1-p^0-q^0 )( C(2) -\hat{C(1)})}{ p^0 + 2 q^0 -1} \Bigg]  \nonumber \\
\end{align}

\textbf{Case 2:} $ p^0 + 2q^0 - 1 $ is negative and $ 2p^0 + q^0 - 1 $ is positive

Since  $ p^0 + 2q^0 - 1 $ is negative , then equation \ref{Eq:Assum 3} becomes

\begin{align*} 
     C(0)    &\leq C(1)  - \frac{p^0( ( C(2) -C(1))-(C(3) - C(2)) )}{  p^0 + 2 q^0 -1}    \\
      &- \frac{( 1-p^0-q^0 )( C(2) - C(1))}{ p^0 + 2 q^0 -1}    \\
 \end{align*}

If $p^0( ( C(2) -C(1))-(C(3) - C(2)) ) - ( 1-p^0-q^0 )( C(2) - C(1)) \geq 0 $ , then
\begin{align*}
    C(1)[1 - q^0 - 2p^0] &\geq p^0[  C(3) - C(2) ] + C(2)[1 - 2p^0 -q^0]
\end{align*}

Since $ 2p^0 + q^0 - 1 $ is positive, then the above equation becomes
$$
C(1) \leq C(2) + \frac{p^0}{1 - q^0 - 2p^0}[C(3) - C(2)]
$$ i.e.,
$$
C(2) - C(1) \geq  \frac{p^0}{ q^0 +2p^0 -1}[C(3) - C(2)]
$$
If we want the above inequality to satisfy the assumption \ref{assum:miss-hit-cost-concave} , we need $$ p^0 < 2p^0 +q^0 -1 $$ i.e.  $$ p^0 +q^0 -1  > 0 $$ But this violates our assumption $ 1 - p^0 -q^0 \geq 0 $

If $p^0( ( C(2) -C(1))-(C(3) - C(2)) ) - ( 1-p^0-q^0 )( C(2) - C(1))  \leq 0$ , then
\begin{align*}
    C(1)[1 - q^0 - 2p^0] &\leq p^0[  C(3) - C(2) ] + C(2)[1 - 2p^0 -q^0]
\end{align*}
Since $ 2p^0 + q^0 - 1 $ is positive, then the above equation becomes
$$
C(1) \geq C(2) + \frac{p^0}{1 - q^0 - 2p^0}[C(3) - C(2)]
$$  If we want the above inequality to satisfy the assumption \ref{assum:miss-hit-cost-concave} , we need $$ p^0 > 2p^0 +q^0 -1 $$ i.e.  $$ p^0 +q^0 -1  < 0 $$ This satisfies our assumption $ 1 - p^0 -q^0 \geq 0 $

Hence, $$ C(2) + \frac{p^0}{1 - q^0 - 2p^0}[C(3) - C(2)] \leq \hat{C(1)} \leq C(2)  $$  $\hat{C(0)} $  can take same values as in equation \ref{New C(0) tilde}

\textbf{Case 3:} $ p^0 + 2q^0 - 1 $ is positive and $ 2p^0 + q^0 - 1 $ is negative

Since  $ p^0 + 2q^0 - 1 $ is positive , then equation \ref{Eq:Assum 3} becomes

\begin{align} \label{New C0 4}
     C(0)    &\geq C(1)  - \frac{p^0( ( C(2) -C(1))-(C(3) - C(2)) )}{  p^0 + 2 q^0 -1}  \nonumber  \\
      &- \frac{( 1-p^0-q^0 )( C(2) - C(1))}{ p^0 + 2 q^0 -1}   \nonumber \\
 \end{align}

If $p^0( ( C(2) -C(1))-(C(3) - C(2)) ) - ( 1-p^0-q^0 )( C(2) - C(1))  \geq 0 $ , then
\begin{align*}
    C(1)[1 - q^0 - 2p^0] &\geq p^0[  C(3) - C(2) ] + C(2)[1 - 2p^0 -q^0]
\end{align*}

Since $ 2p^0 + q^0 - 1 $ is negative, then the above equation becomes
$$
C(1) \geq C(2) + \frac{p^0}{1 - q^0 - 2p^0}[C(3) - C(2)]
$$
This is not possible since $C(1) \leq C(2)$

If $p^0( ( C(2) -C(1))-(C(3) - C(2)) ) - ( 1-p^0-q^0 )( C(2) - C(1))  \leq 0 $ , then from equation \ref{New C0 4}, we can see that $C(0) \geq C(1) $ which violates the fact that $C(r) $ is increasing in $r$ . Hence we can not change $C(1) , C(0)$.

\textbf{Case 4:} $ p^0 + 2q^0 - 1 $  and $ 2p^0 + q^0 - 1 $ are positive 

Since  $ p^0 + 2q^0 - 1 $ is positive , then equation \ref{Eq:Assum 3} becomes

\begin{align} \label{New C0 5}
     C(0)    &\geq C(1)  - \frac{p^0( ( C(2) -C(1))-(C(3) - C(2)) )}{  p^0 + 2 q^0 -1}  \nonumber  \\
      &- \frac{( 1-p^0-q^0 )( C(2) - C(1))}{ p^0 + 2 q^0 -1}   \nonumber \\
 \end{align}

If $p^0( ( C(2) -C(1))-(C(3) - C(2)) ) - ( 1-p^0-q^0 )( C(2) - C(1)) \geq 0$ , then
\begin{align*}
    C(1)[1 - q^0 - 2p^0] &\geq p^0[  C(3) - C(2) ] + C(2)[1 - 2p^0 -q^0]
\end{align*}

Since $ 2p^0 + q^0 - 1 $ is positive, ` then the above equation becomes
$$
C(1) \leq C(2) + \frac{p^0}{1 - q^0 - 2p^0}[C(3) - C(2)]
$$ The above inequality violates the assumption \ref{assum:miss-hit-cost-concave}. 

If $p^0( ( C(2) -C(1))-(C(3) - C(2)) ) - ( 1-p^0-q^0 )( C(2) - C(1)) \leq 0  $, then from equation \ref{New C0 4}, we can see that $C(0) \geq C(1) $ which violates the fact that $C(r) $ is increasing in $r$ . Hence we can not change $C(1) , C(0)$.

Now the conclusion is
if Assumption \ref{assum:transition-prob} is satisfied, then the problem is indexable, and we will use the Whittle index policy to solve the problem. If Assumption \ref{assum:transition-prob} is not satisfied, but $p^0 + 2q^0 -1$ is negative, then we can make the problem indexable by changing the values of $C(1), C(0)$. If neither assumption \ref{assum:transition-prob} is satisfied nor $p^0 + 2q^0 -1$ is negative, then the problem is not indexable.
}

\end{document}